\newtheorem{theorem}{Theorem}[section]  
\newtheorem{lemma}[theorem]{Lemma}  
\newtheorem{proposition}[theorem]{Proposition}  
\theoremstyle{definition}  
\newtheorem{definition}[theorem]{Definition}  
\theoremstyle{remark}
\newcommand{\bfg}{\begin{figure}}
\newcommand{\efg}{\end{figure}}
\newcommand{\Half}{{\textstyle{\frac{1}{2}}}}
\begin{document}                        
  
  
\title{Bounds on Kolmogorov spectra for the \\ Navier -- Stokes equations}  
  
  
\author{Andrei Biryuk 
\affil{Credit Suisse}
%
%
\\ Walter Craig 
\affil{McMaster University}
}
  
  
  
\begin{abstract}  
 Let $u(x,t)$ be a (possibly weak) solution of the Navier - Stokes equations  
on all of ${\mathbb R}^3$, or on the torus ${\mathbb R}^3/ {\mathbb Z}^3$. 
The {\it energy spectrum} of $u(\cdot,t)$ is the spherical integral
\[
   E(\kappa,t) = \int_{|k| = \kappa} |\hat{u}(k,t)|^2 dS(k) , \qquad
   0 \leq \kappa < \infty ,  
\]
or alternatively, a suitable approximate sum. An argument 
involking scale invariance and dimensional analysis  
given by Kolmogorov~\cite{K41a,K41b} and Obukhov~\cite{Obukhov1941}  
predicts that large Reynolds number solutions of the 
Navier - Stokes equations in three dimensions should obey 
\[
  E(\kappa, t) \sim C_0\varepsilon^{2/3}\kappa^{-5/3} 
\]
over an {\it inertial range} $\kappa_1 \leq \kappa \leq \kappa_2$,
at least in an average sense. We give a global estimate on weak 
solutions in the norm 
   $\|{\mathcal F}\partial_x u(\cdot, t)\|_\infty$ 
which gives bounds on a solution's ability to satisfy the 
Kolmogorov law. A subsequent result is for rigorous upper and lower
bounds on the inertial range, and an upper bound on the time 
of validity of the Kolmogorov spectral regime.  
\end{abstract}  
  
\maketitle

  
\section{Introduction}
\label{Introduction}

An important issue in the study of solutions of the Navier -- Stokes 
equations in the large is the principle governing the distribution 
of energy in Fourier space. The theory of Kolmogorov \cite{K41a,K41c,K41b}
and Obukhov \cite{Obukhov1941} plays a central r\^ole, predicting power
law decay behavior of the Fourier space energy density for solutions
which exhibit fully developed turbulence. In outline, a basic 
prediction is that energy spectral functions $E(\kappa,t)$, or 
possibly its average over a statistical ensemble,  is expected 
to satisfy 
\begin{equation}\label{Eqn:KolmogorovObukhovLaw}
   E(\kappa,t) \simeq C_0\varepsilon^{2/3}\kappa^{-5/3}
\end{equation}
over an inertial range of wavenumbers $\kappa \in [\kappa_1,\kappa_2]$, 
where $C_0$ is a dimensionless constant, $\varepsilon$ is a parameter 
interpreted physically as the energy transfer rate per unit volume, 
and the exponents are determined by dimensional analysis 
\cite{Obukhov1941}\cite{Frisch1995}. This famous statement has been
very influential in the field, and considerable experimental and
numerical evidence has been gathered to support it. Despite its
success, there have been relatively few rigorous mathematical results
on the analysis of solutions of the Navier -- Stoke equations, with or
without bulk {\it inhomogeneous} forces, which have addressed the 
question as to whether solutions exhibit spectral behavior as
described by \eqref{Eqn:KolmogorovObukhovLaw}. Among those papers 
which do address certain aspects of these questions, we cite in 
particular two sources. Firstly, the book by 
C.~Doering \& J.~Gibbon~\cite{DoeringGibbon95} reviews the 
Kolmogorov -- Obukhov theory, and discusses the compatibility of 
spectral aspects of solutions with the $L^2$ regularity theory for 
the Navier --- Stokes equations. Secondly, S. Kuksin~\cite{Kuksin99}
proves that solutions to the nonlinear Schr\"odinger equation with 
added dissipation and with stochastic forces exhibit spectral 
behavior over an inertial range, with some positive exponent
(which is not known explicitly). This latter work serves as an
important mathematical model of generation of spectral behavior 
of solutions under stochastic forcing, despite the basic difference 
in the equations that are addressed. 

In this paper we give a new global estimate in the norm 
$\|{\mathcal F}\, \partial_x u(\cdot, t) \|_{L^\infty}$
on weak solutions of the Navier -- Stokes equations which have  
reasonably smooth initial data and which are possibly subject to 
reasonably smooth inhomogeneous forces. 
This estimate has implications on the energy spectral function
for such solutions, and in particular in the case that there is no
inhomogeneous force, we show that weak solutions of the initial value
problem have spectral energy function which for all 
$\kappa \in {\mathbb R}^+$ satisfies 
\begin{equation}\label{Eqn:UpperBoundE}
    E(\kappa,t) \leq 4\pi R_1^2 ~,
\end{equation}
and time averages which satisfy
\begin{equation}\label{Eqn:TmeAverageE}
   \frac{1}{T}\int_0^T E(\kappa,t) \, dt  
   \leq \frac{4\pi R_2^2}{\nu\kappa^2T}  ~,
\end{equation}
again for all $\kappa$, where $\nu$ is the coefficient of viscosity.
In the case that a bounded inhomogeneous force is applied to the
solution of the initial value problem, we find similarly that
\begin{equation}\label{Eqn:UpperBoundE2}
    E(\kappa,t) \leq 4\pi R_1^2(t) ~,
\end{equation}
and furthermore 
\begin{equation}\label{Eqn:TmeAverageE2}
   \frac{1}{T}\int_0^T E(\kappa,t) \, dt  
   \leq \frac{4\pi R_2^2(T)}{\nu\kappa^2 T}  ~.
\end{equation}
In the situation of forcing being given by a stationary process, 
it is to be expected that the quantity $R_2^2(T)/T$ has a limit 
$\overline{R}_2^2$ for large $T$, giving a constant upper bound 
for the time average of $E(\kappa,T)$.
Since these estimates give rigorous bounds on $E(\kappa,t)$ with 
a faster rate of decay in wavenumber $\kappa$ than 
\eqref{Eqn:KolmogorovObukhovLaw}, this result presents a conundrum. 
Either it is the case that solutions which exhibit large scale 
spectral behavior as in \eqref{Eqn:KolmogorovObukhovLaw} are
not smooth, and in particular do not arise from the initial value
problem with reasonably smooth initial data. Or else the bounds 
\eqref{Eqn:UpperBoundE}\eqref{Eqn:TmeAverageE} (and 
\eqref{Eqn:UpperBoundE2}\eqref{Eqn:TmeAverageE2} respectively, 
in the case with
inhomogeneous forces) give restrictions on the spectral behavior of
solutions, and in particular an upper bound on the value of the
parameter $\varepsilon$, a restriction on the extent of the inertial 
range $[\kappa_1,\kappa_2]$, and in the case of 
\eqref{Eqn:UpperBoundE}\eqref{Eqn:TmeAverageE}, 
an upper bound on the time interval $[0,T_0]$ over which spectral
behavior may occur for a solution. 

There is a well developed literature on the energy transfer rate 
$\varepsilon$, and other aspects of the Kolmogorov -- Obukhov theory,
based on physical assumptions on the character of the fluid motion. 
These assumptions are for flows exhibiting fully developed
turbulence, and are described in Obukhov~\cite{Obukhov1941}, for 
example. They include the hypothesis that the flow is in a 
stochastically steady state, the energy transfer rate is of a 
certain form and exhibits a particular scale invariance, and 
that the energy spectral function is negligible for wave numbers
higher than a cutoff $\kappa_\nu$. Under these assumptions, the 
cutoff $\kappa_\nu$ is determined (it is known as the Kolmogorov 
scale) and the energy transfer rate $\varepsilon$ is identified 
with the energy dissipation rate 
\begin{equation}\label{Eqn:DissipationRatePerVolume}
   \varepsilon_1 := 
   \frac{\nu}{(2\pi)^3}\int_0^{+\infty} \kappa^2 E(\kappa,t) \, d\kappa~.
\end{equation}
Using the latter statement, that $\varepsilon = \varepsilon_1$, it is
possible to find a better estimate of the behavior of $\varepsilon$ 
with respect to Reynolds' number than ours in this paper, as for example in 
C.~Doering \& C.~Foias~\cite{DoeringFoias02}. The difference between this 
body of work and our analysis is that we make no physical assumptions on 
solutions of the Navier -- Stokes equations, deducing our conclusions 
purely from known results about such flows. It is worthwhile to point 
out as well that the upper and lower bounds to the inertial
range in our work are conclusions, as compared to prior work in which
the upper bound $\kappa_\nu$ on the inertial range is an assumption of 
the theory, and no explicit lower bound is given. 

In Section \ref{section2} we give a statement and the proofs of 
our estimates on the Fourier transform of weak solutions of the
Navier -- Stokes equations, posed either on all of $x \in {\mathbb R}^3$
or else for $x \in {\mathbb T}^3$. Since there is no known uniqueness
result, one cannot speak of the solution map for Navier -- Stokes
flow, and we emphasize that this estimate is valid for any weak
solution that satisfies the energy inequality. 
In section \ref{section3} we interpret 
these estimates in the context of the spectral energy function, and we
analyse the constraints on spectral behavior of solutions mentioned
above, giving specific and dimensionally appropriate estimates for the
endpoints of the inertial range $\kappa_1$, $\kappa_2$. In the case
of no inhomogeneous forces, we give an upper bound $T_0$ on the time 
of validity of the spectral regime. The bounds on $\kappa_1$ and $\kappa_2$
are also valid in the probabilistic setting, for statistical ensembles
of solutions. That is, suppose that one is given an ergodic probability 
measure $({\sf P},{\mathcal M})$ on the space of divergence free vector 
fields which is invariant under some choice of definition of 
Navier -- Stokes flow. As long as the inhomogeneous force and 
the support of the invariant probability measure are contained 
in the closure of the set of reasonably smooth divergence free 
vector fields, then our constraints on the spectral behavior of
solutions apply. The final section gives a comparison of our 
constraints on $(\kappa_1,\kappa_2,T_0)$ to the Kolmogorov length 
and time-scales of the classical theory, and a discussion of the 
dimensionless parameter $r_\nu := \kappa_2/\kappa_1$ as an
indicator of spectral behavior of solutions. 

\section{Estimates on the Fourier transform in $L^\infty$}
\label{section2}
The incompressible Navier -- Stokes equations in their usual form are
written for the velocity field $u(x,t)$ of a fluid, its pressure $p(x,t)$,
and a divergence-free force $f(x,t)$,
\begin{eqnarray}\label{Eqn:NavierStokes}
   && \partial_t u + (u \cdot \nabla )u 
   = -\nabla p + \nu \Delta u + f \nonumber \\
   && \nabla \cdot u = 0 ~,             \\
   && u(\cdot,0) = u_0(\cdot) ~,  \nonumber
\end{eqnarray}
where we consider spatial domains either all of Euclidian space 
$x \in {\mathbb R}^3$, or else the compact and boundaryless torus
$x \in {\mathbb T}^3 := {\mathbb R}^3/\Gamma$, where 
$\Gamma \subseteq {\mathbb R}^3$ is a lattice of full rank.
Denote by $D$ either of the above spatial domains. The time domain 
is $0 < t < +\infty$, and the inhomogeneous force function $f$ 
is assumed to be divergence-free and to satisfy 
$f \in L^\infty_{loc}([0,+\infty); H^{-1}(D)\cap L^2(D))$. 
A {\it `Leray' weak solution} to \eqref{Eqn:NavierStokes} 
on $D \times [0,+\infty)$ satisfies the three conditions. 
\begin{enumerate}

  \item {\it Integrability conditions:} For any $T>0$ the 
  vector function $(u,p)$ lies in the following function spaces
  \begin{eqnarray}\label{Eq:Integrebility-u}
     && u \in L^\infty([0,T);L^2(D)) \, \cap \, L^2([0,T); \dot H^1(D)) ~,
     \\ 
     \label{Eq:Integrability-p}
     && p \in L^{5/3}(D \times [0,T))  ~,
  \end{eqnarray}

  \item {\it Weak solution of the equation:} the pair $(u,p)$ is a 
       distributional solution of \eqref{Eqn:NavierStokes}, and 
       furthermore $\lim_{t \to 0^+} u(\cdot,t) =
       u_0(\cdot)$ exists in the strong $L^2$ sense,  

  \item {\it Energy inequality:} the energy inequality is satisfied
  \begin{eqnarray}\label{Eq:EnergyInequality}
    && \Half\int_D |u(x,t)|^2 \, dx 
   + \nu\int_0^t \int_D |\nabla u(x,s)|^2 \, dxds  \\
    && \qquad - \int_0^t \int_D u(x,s)\cdot f(x,s) \, dxds   
     \leq \Half\int_D |u_0(x)|^2 \, dx   \nonumber 
  \end{eqnarray}

  \end{enumerate}
for all $0 < t < +\infty$. The inequality \eqref{Eq:EnergyInequality} is
an identity for solutions which are regular. It is well known that 
weak solutions exist globally in time, either when $f=0$, a result 
due to Leray \cite{Leray34a,Leray34b}, or when $f$ is nonzero.
The question of their uniqueness and regularity remains open. 

Many facts are known about weak solutions, including that for any 
$T>0$ the interpolation inequalities hold; $u \in L^s([0,T);L^p(D))$ 
for all $3/p+2/s=3/2$, for $2\leq p \leq 6$. That the $L^{5/3}$ estimate for
the pressure in \eqref{Eq:Integrability-p} is sufficient 
is due to \cite{SohrVonWahl}. Considering a weak solution as 
a curve in $L^2(D)$ defined over $t \in \mathbb{R}^+$, the 
following proposition holds.
\begin{proposition} 
A weak solution is a mapping $[0,T) \mapsto L^2(D)$ satisfying the 
continuity properties
\begin{equation}\label{Eqn:ContinuityProperties}
  u(\cdot) \in L^\infty([0,T);L^2(D)) \, \cap \, 
    C([0,T);L^2(D)_{\hbox{\small weak topology}}) \,
    \cap \, C([0,T);H^{-\delta}(D))
\end{equation}
for any $\delta >0$. Furthermore, as a curve in Sobolev space,  
\[
   \frac{du}{dt} \in L^2([0,T);H^{-3/2}(D))~. 
\]
\end{proposition}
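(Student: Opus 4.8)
The plan is to establish each of the three continuity properties in \eqref{Eqn:ContinuityProperties} and the Sobolev regularity of $du/dt$ by working directly from the integrability conditions \eqref{Eq:Integrebility-u}, \eqref{Eq:Integrability-p} and the equation \eqref{Eqn:NavierStokes} itself. First, the bound $u \in L^\infty([0,T);L^2(D))$ is simply a restatement of the first clause of \eqref{Eq:Integrebility-u}, so nothing is needed there. Second, to obtain $du/dt \in L^2([0,T);H^{-3/2}(D))$, I would read the equation as $\partial_t u = -\mathbb{P}(u\cdot\nabla)u + \nu\Delta u + f$ (with $\mathbb{P}$ the Leray projection, which is bounded on every $H^s$), and estimate each term in $H^{-3/2}$. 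The viscous term: $u \in L^2([0,T);\dot H^1)$ gives $\Delta u \in L^2([0,T);\dot H^{-1}) \subset L^2([0,T);H^{-3/2})$. The forcing term: $f \in L^2_{loc}([0,T);H^{-1}) \subset L^2_{loc}([0,T);H^{-3/2})$ by hypothesis. The nonlinear term is the delicate one: writing $(u\cdot\nabla)u = \nabla\cdot(u\otimes u)$ by incompressibility, it suffices to show $u\otimes u \in L^2([0,T);H^{-1/2})$, equivalently $u \in L^4([0,T);H^{1/4})$ or something comparable; this follows from the interpolation inequalities quoted in the text ($u \in L^s([0,T);L^p)$ for $3/p+2/s = 3/2$) together with the Sobolev embedding $H^{1/4}(D) \hookrightarrow L^3(D)$ in three dimensions and the product estimate $\|u\otimes u\|_{H^{-1/2}} \lesssim \|u\|_{L^3}^2$ (dual to $H^{1/2}\hookrightarrow L^3$), using the admissible exponent pair $(p,s)=(3,4)$, which lies on the allowed line. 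Hence each piece of $\partial_t u$ lies in $L^2([0,T);H^{-3/2}(D))$ and the claim follows.

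Third, with $u \in L^\infty([0,T);L^2)$ and $du/dt \in L^2([0,T);H^{-3/2})$ in hand, the continuity statements are essentially soft functional-analytic consequences. The membership $u \in C([0,T);H^{-\delta}(D))$ for every $\delta>0$ follows from the Aubin--Lions--Simon lemma, or more elementarily: $u$ is bounded in $L^2$ and its time derivative lies in $L^2([0,T);H^{-3/2}) \subset L^1([0,T);H^{-3/2})$, so $u$ agrees a.e.\ with an absolutely continuous $H^{-3/2}$-valued function, hence $u \in C([0,T);H^{-3/2})$; interpolating the uniform $L^2$ bound against this $H^{-3/2}$-continuity (using $H^{-\delta} = [L^2, H^{-3/2}]_{\theta}$ with $\theta = 2\delta/3 \in (0,1)$) upgrades continuity to $H^{-\delta}$ for any $\delta \in (0,3/2)$, and for $\delta \geq 3/2$ it is weaker still. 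Finally, weak continuity into $L^2$ with the weak topology is proved by the standard density argument: for any test function $\varphi \in L^2(D)$, choose $\varphi_n \to \varphi$ in $L^2$ with $\varphi_n \in H^{\delta}$; the pairing $t \mapsto \langle u(t),\varphi_n\rangle$ is continuous by the $H^{-\delta}$-continuity just established, and the convergence $\langle u(t),\varphi_n\rangle \to \langle u(t),\varphi\rangle$ is uniform in $t$ because of the uniform $L^2$ bound $\sup_t \|u(t)\|_{L^2} \leq M$, so the limit $t\mapsto \langle u(t),\varphi\rangle$ is continuous; this is exactly weak continuity.

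The main obstacle is the nonlinear term in the estimate for $du/dt$: one must verify that the quadratic term $\nabla\cdot(u\otimes u)$ genuinely lands in $L^2([0,T);H^{-3/2})$ using only the \emph{a priori} Leray class, and this forces a careful choice of an admissible interpolation exponent (here $(p,s)=(3,4)$) together with the correct product/duality estimate in negative Sobolev spaces; getting the dimensional bookkeeping right in $\mathbb{R}^3$ is where all the content lies, everything else being standard. A secondary technical point is that on $\mathbb{R}^3$ the homogeneous space $\dot H^1$ rather than $H^1$ appears in \eqref{Eq:Integrebility-u}, so one should be mildly careful that the low-frequency part of $u$ is controlled by the $L^2$ bound and the high-frequency part by the $\dot H^1$ bound when asserting $\Delta u \in L^2([0,T);H^{-3/2})$; on $\mathbb{T}^3$ this is immediate. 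I would present the $du/dt$ estimate first as Step 1, then deduce \eqref{Eqn:ContinuityProperties} as Steps 2--4 in the order: $H^{-3/2}$-continuity, interpolation to $H^{-\delta}$, and finally weak $L^2$-continuity.
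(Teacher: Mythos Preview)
Your proof is correct and follows the standard route, but note that the paper itself does not supply a proof of this proposition: it simply records these facts as well known and refers the reader to Chemin's lecture notes~\cite{Chemin}. So there is no ``paper's own proof'' to compare against here.

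That said, your argument is precisely the one a careful exposition would give. The key step is indeed the nonlinear term, and your choice of the admissible pair $(p,s)=(3,4)$ together with the duality $L^{3/2}\hookrightarrow H^{-1/2}$ (dual to $H^{1/2}\hookrightarrow L^3$ in three dimensions) cleanly places $\nabla\cdot(u\otimes u)$ in $L^2([0,T);H^{-3/2})$. The subsequent deductions --- absolute continuity into $H^{-3/2}$, interpolation to $H^{-\delta}$, and the density argument for weak-$L^2$ continuity --- are exactly the standard functional-analytic steps. Your remark about handling homogeneous $\dot H^1$ versus inhomogeneous $H^1$ on $\mathbb{R}^3$ by splitting into low and high frequencies (controlled respectively by the $L^2$ and $\dot H^1$ bounds) is the right way to dispatch that minor subtlety.
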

A clear exposition which includes these basic facts is the lecture
notes of J.-Y. Chemin~\cite{Chemin}. 

Being a curve in $L^2(D)$, the Fourier transform of a weak
solution makes sense, and $\hat{u}(\cdot,t) = {\mathcal F}u(\cdot,t)$ 
is again a curve in $L^\infty([0,T);L^2_k) \cap 
C([0,T);L^2_{k;\hbox{\small weak topology}})$. 
We will make use of a dimensionally adapted Fourier transform
${\mathcal F}$, namely 
\begin{equation}
   \hat{u}(k) = ({\mathcal F}u)(k) := 
     \frac{1}{V^{1/2}} \int_D e^{-ik\cdot x} u(x) \, dx ~,
\end{equation}
where $k \in {\mathbb R}^3$ when the spatial domain is 
$D = {\mathbb R}^3$, and we set $V=(2\pi)^3$ in standard units 
of volume in ${\mathbb R}^3$. In this setting, the norm of the 
Fourier transform is given by 
$\|\hat u\|^2 = \int_{\mathbb{R}^3} |\hat u(\xi)|^2 d\xi$.
When the domain is 
$D = {\mathbb T}^3 = {\mathbb R}^3/\Gamma$, we take 
$k \in \Gamma'$ the lattice dual to $\Gamma$, we set 
$V = |\Gamma| := \hbox{\rm vol}({\mathbb R}^3/\Gamma)$, and we
define $\|\hat{u}\|^2 := \sum_{k \in \Gamma'} |\hat{u}(k)|^2 |\Gamma'|$. 
With this choice, the Plancherel identity reads
\begin{equation}
   \|u\|^2 = \frac{V}{(2\pi)^3} \|\hat{u}\|^2 ~. 
\end{equation} 
With respect to the normalization, the function $u(x,t)$ has 
units of velocity $L/T$, and its Fourier transform is such that
$|\hat{u}(k,t)|^2$ has units of Fourier space energy density $(L/T)^2L^3$.

\subsection{An estimate on ${\mathcal F}\partial_x u(\cdot, t)$ on the torus}
\label{subsection2.1}
Focus the discussion on the case of the spatial domain $D =
\mathbb{T}^3$. Then any choice of initial data 
$u_0(x) \in L^2(\mathbb{T}^3)$ has uniformly bounded Fourier 
coefficients, indeed $|\hat{u}_0(k)| \leq \|u_0\|$. 
Furthermore, since the complex exponential $e^{ik\cdot x}$ is a
perfectly good element of $(H^{-3/2})^*$ which, being tested against 
$u(\cdot,t)$ gives the Fourier coefficients, we also have the
result
\begin{proposition}
For each $k \in \Gamma'$ the Fourier coefficient 
$\hat{u}(k,t) \in {\mathbb C}^3$ 
is a Lipschitz function of $t\in \mathbb{R}^+$. 
\end{proposition}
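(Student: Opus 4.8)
The plan is to show that for each fixed $k\in\Gamma'$ the scalar function $t\mapsto\hat u(k,t)$ has an essentially bounded distributional derivative on every bounded interval $[0,T]$, and is therefore Lipschitz there with a constant depending on $k$, $T$, $\nu$ and the data. The preceding proposition, via $e^{ik\cdot x}\in(H^{-3/2})^*$ and $du/dt\in L^2([0,T);H^{-3/2}(D))$, already gives that $\hat u(k,\cdot)$ is absolutely continuous on $[0,T)$ with distributional derivative equal to the $k$-th Fourier coefficient of $du/dt$; by itself this only yields $\hat u(k,\cdot)\in H^1([0,T))$, hence H\"older continuity of exponent $\tfrac12$. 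The improvement to Lipschitz regularity comes from reading off $\partial_t\hat u(k,t)$ directly from the equation and bounding it pointwise in $t$.

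First I would eliminate the pressure. Since $u$ and $f$ are divergence free, applying the Leray projection $\mathbb P$ to \eqref{Eqn:NavierStokes} gives, in the sense of distributions on $D\times(0,T)$,
\[
   \partial_t u = -\,\mathbb P\,\nabla\cdot(u\otimes u) + \nu\Delta u + f ~,
\]
where I have used $(u\cdot\nabla)u=\nabla\cdot(u\otimes u)$. Taking the Fourier transform ${\mathcal F}$ in $x$ (legitimate since $e^{ik\cdot x}\otimes\varphi(t)$ is an admissible test function and ${\mathcal F}$ commutes with $\partial_t$), and noting that the symbol of $\mathbb P$ is the orthogonal projection $P_k=I-|k|^{-2}k\otimes k$ onto $k^\perp$, of operator norm at most $1$, one obtains for each $k\in\Gamma'$ the identity in ${\mathcal D}'((0,T))$
\[
   \frac{d}{dt}\hat u(k,t) = -\,P_k\big(ik\cdot\widehat{u\otimes u}(k,t)\big) - \nu|k|^2\,\hat u(k,t) + \hat f(k,t) ~.
\]

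Next I would estimate the three terms on the right in $L^\infty([0,T))$. From the energy inequality \eqref{Eq:EnergyInequality}, dropping the nonnegative dissipation term and using $\int_0^t\!\int_D u\cdot f\le\int_0^t\|u(\cdot,s)\|_{L^2}\|f(\cdot,s)\|_{L^2}\,ds$, a Gronwall-type argument for the resulting quadratic inequality gives
\[
   M := \sup_{0\le t\le T}\|u(\cdot,t)\|_{L^2(D)} \le \|u_0\|_{L^2(D)} + \int_0^T\|f(\cdot,s)\|_{L^2(D)}\,ds < \infty ~.
\]
Since $|\hat g(k)|\le V^{-1/2}\|g\|_{L^1(D)}\le\|g\|_{L^2(D)}$ on the torus (Cauchy--Schwarz, $|D|=V$), we get $\nu|k|^2|\hat u(k,t)|\le\nu|k|^2 M$; since $\|u\otimes u(\cdot,t)\|_{L^1(D)}=\|u(\cdot,t)\|_{L^2(D)}^2\le M^2$ and $P_k$ is a contraction, we get $|P_k(ik\cdot\widehat{u\otimes u}(k,t))|\le |k|V^{-1/2}M^2$; and $|\hat f(k,t)|\le\|f(\cdot,t)\|_{L^2(D)}$ is bounded on $[0,T]$ because $f\in L^\infty_{loc}([0,+\infty);L^2(D))$. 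Hence $\partial_t\hat u(k,\cdot)\in L^\infty([0,T))$, so $\hat u(k,\cdot)$ is Lipschitz on $[0,T]$ with constant at most $|k|V^{-1/2}M^2+\nu|k|^2M+\|f\|_{L^\infty([0,T);L^2(D))}$; as $T$ is arbitrary this proves the proposition.

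I expect the one point requiring genuine care to be the rigorous passage to the Fourier-space form of the equation for a merely weak solution: that the a priori only $L^2$-in-time scalar $t\mapsto\hat u(k,t)$ really has the displayed distributional derivative, and that $\widehat{u\otimes u}(k,t)$ is the honest bounded function that the $L^\infty_t L^2_x$ bound on $u$ makes it. Both follow from the integrability \eqref{Eq:Integrebility-u} together with density of smooth test functions and the bound $\|u\otimes u\|_{L^1}\le\|u\|_{L^2}^2$, but they are the steps where one must be careful; by comparison the Leray projection and the Gronwall bound on $\|u(\cdot,t)\|_{L^2}$ in the forced case are routine.
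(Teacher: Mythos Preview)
Your proof is correct and follows the same approach the paper has in mind: pass to the Fourier-space evolution equation \eqref{Eqn:FourierTransformNavierStokes} and observe that each term on the right-hand side is bounded in $L^\infty_t$ by the energy control $\|u(\cdot,t)\|_{L^2}\le M$ and the hypothesis on $f$. The paper's own justification is the single sentence preceding the proposition about $e^{ik\cdot x}\in(H^{-3/2})^*$, together with the later explicit use of \eqref{Eqn:FourierTransformNavierStokes}; you are right to flag that pairing with $du/dt\in L^2([0,T);H^{-3/2})$ alone yields only $\hat u(k,\cdot)\in H^1\subset C^{1/2}$, and that the Lipschitz claim genuinely requires the pointwise-in-$t$ bound on $X(u)_k$ that you supply---in this sense your argument is more complete than what the paper writes down at that spot.
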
 
It is again made clear by this that the problem of singularity
formation is not that $\hat{u}(k,t)$ becomes unbounded, but rather
that $H^1$ mass, including possibly $L^2$ mass, is propagated to
infinity in $k$-space in finite time. 

A (future) invariant set $A$ is one such that $u_0 \in A$ implies 
for all $t > 0$, $u(t) \in A$ as well. When $f = 0$ the energy 
inequality \eqref{Eq:EnergyInequality} can be viewed as implying 
that the ball $B_R(0) \subseteq L^2(D)$ is an invariant set for 
weak solutions satisfying $\| u_0\|_{L^2(D)} \leq R$. In similar 
terms, we give another invariant set for weak solutions.
Define the set 
\begin{equation}
   A_{R_1} := \{u \, : \, \forall k \in \Gamma', 
     |k||\hat{u}(k)| \leq R_1 \} ~,
\end{equation}
and as above let $B_R(0)$ denote the ball of radius $R$ in 
$L^2(\mathbb{T}^3)$.
\begin{theorem}\label{Thm:one}
In the case that $f=0$, whenever 
\begin{equation}\label{Eq:EstimateOnA}
   \frac{R^2}{\sqrt{V}} \leq \nu R_1
\end{equation}
then the set $A_{R_1}\cap B_R(0)$ is invariant for weak solutions 
of the Navier -- Stokes equations~\eqref{Eqn:NavierStokes}. Thus,
if the initial data satisfies $u_0 \in A_{R_1}\cap B_R(0)$, for
all $0 < t < +\infty$ the Fourier coefficients of any Leray weak
solution emanating from this data satisfy
\begin{equation}
   \sup_{0<t<+\infty} |\hat{u}(k,t)| \leq \frac{R_1}{|k|} ~, \qquad 
   \forall k \in \Gamma' ~.
\end{equation}
\end{theorem}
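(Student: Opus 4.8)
The plan is to work directly with the Duhamel (mild) formulation of the Navier--Stokes equations in Fourier variables and to run a continuity (bootstrap) argument on the quantity $M(t) := \sup_{k \in \Gamma'} |k|\,|\hat u(k,t)|$. Writing the equation for a single Fourier coefficient, one has, after projecting out the pressure via the Leray projector $\mathbb{P}$,
\[
   \partial_t \hat u(k,t) = -\nu|k|^2\,\hat u(k,t)
   - i\,\mathbb{P}_k\!\!\sum_{k' + k'' = k} (\hat u(k',t)\cdot k'')\,\hat u(k'',t)\, |\Gamma'| ~,
\]
so that, by the variation-of-constants formula,
\[
   \hat u(k,t) = e^{-\nu|k|^2 t}\,\hat u_0(k)
   - i\int_0^t e^{-\nu|k|^2 (t-s)}\,\mathbb{P}_k\!\!\sum_{k'+k''=k}(\hat u(k',s)\cdot k'')\,\hat u(k'',s)\,|\Gamma'|\, ds ~.
\]
First I would estimate the nonlinear term: since $|\mathbb{P}_k| \le 1$ and $|k''| \le |k'| + |k''|$, and using $|\hat u(k',s)| \le M(s)/|k'|$, $|\hat u(k'',s)| \le M(s)/|k''|$, the convolution sum is bounded by $M(s)^2 \sum_{k'+k''=k} \frac{|k'|+|k''|}{|k'|\,|k''|}\,|\Gamma'|$. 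The key point is that this sum, after multiplication by $|k|$ and discounting by the heat kernel $e^{-\nu|k|^2(t-s)}$ and integration in $s$, must be controlled by $M$-dependent quantities together with the $L^2$ bound $\|u(\cdot,s)\| \le R$ supplied by the energy inequality~\eqref{Eq:EnergyInequality} with $f=0$.

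The heart of the argument is therefore a bilinear estimate of Schauder type: one shows that
\[
   |k|\int_0^t e^{-\nu|k|^2(t-s)}\Big|\sum_{k'+k''=k}(\hat u(k',s)\cdot k'')\,\hat u(k'',s)\Big|\,|\Gamma'|\, ds
   \;\lesssim\; \frac{1}{\nu\sqrt V}\,\sup_{0<s<t}\big(\|u(\cdot,s)\|\,M(s)\big) ~,
\]
where the gain of one power of $|k|^{-1}$ from the heat semigroup (the factor $\int_0^t |k| e^{-\nu|k|^2(t-s)}ds \le C/(\nu|k|)$) is spent to absorb the derivative loss in the nonlinearity, and one factor of $\hat u$ is estimated in $\ell^2$ (Cauchy--Schwarz against the Plancherel norm, giving $\|u\|$ up to the volume normalization $V^{1/2}$) while the other is estimated in the $A_{R_1}$-norm. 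Combined with $|k|e^{-\nu|k|^2 t}|\hat u_0(k)| \le M(0) \le R_1$ from the assumption $u_0 \in A_{R_1}$, this yields a closed inequality of the shape
\[
   M(t) \;\le\; R_1 + \frac{C\,R}{\nu\sqrt V}\,\sup_{0<s<t} M(s) ~.
\]
Using $\|u(\cdot,s)\| \le R$ for \emph{all} $s$ and the smallness hypothesis~\eqref{Eq:EstimateOnA}, namely $R^2/\sqrt V \le \nu R_1$ (equivalently $CR/(\nu\sqrt V) \le C\,R_1/R \cdot$ something that one arranges to be $\le 1$ by tracking the constant, or — more robustly — by running the estimate differentially rather than through Duhamel so that the constant is sharp), one concludes $M(t) \le R_1$ for all $t$, i.e. $A_{R_1}\cap B_R(0)$ is invariant, and in particular $|\hat u(k,t)| \le R_1/|k|$.

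I expect the main obstacle to be \emph{justification at the level of weak solutions} rather than the formal estimate itself. A Leray weak solution is only known to lie in $L^\infty_t L^2_x \cap L^2_t \dot H^1_x$, which is not enough regularity to make the Duhamel identity or the differential inequality rigorous by a naive computation — the bilinear term need not be absolutely convergent a priori, and one does not have the time-derivative control in a strong enough space. The honest route is to regularize: establish the bound for smooth (e.g. Galerkin) approximations or mollified data, for which all manipulations are legitimate and the constant in~\eqref{Eq:EstimateOnA} can be pinned down, and then pass to the limit using the continuity properties recorded in~\eqref{Eqn:ContinuityProperties} (in particular weak-$L^2$ continuity and the Lipschitz-in-$t$ property of each fixed Fourier coefficient from the Proposition above), noting that the set $A_{R_1}\cap B_R(0)$ is closed under the relevant (weak) convergence. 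One must also be slightly careful that the energy inequality, not equality, is what holds for weak solutions — but since we only ever use $\|u(\cdot,t)\| \le R$, the inequality direction is exactly the one we need. A secondary technical point is the sharp evaluation of $\int_0^t |k|\,e^{-\nu|k|^2(t-s)}\,ds$ and of the lattice convolution sum $\sum_{k'+k''=k}\frac{1}{|k'|\,|k''|}\cdot\frac{1}{|k|}\cdot(\text{decay})$, which converges in three dimensions precisely because of the semigroup smoothing; getting the constant good enough to match~\eqref{Eq:EstimateOnA} with the stated coefficient $1$ (no extra numerical factor) is what dictates using the differential-inequality form of the argument instead of the integral one.
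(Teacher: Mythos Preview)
Your Duhamel/bootstrap scheme does not close as written, and the detour through a mixed $\ell^2 \times M$ bilinear estimate is both unnecessary and problematic. If you place one factor of $\hat u$ in the $M$-norm (i.e.\ $|\hat u(k_1)| \le M/|k_1|$) and the other in $\ell^2$, the resulting lattice sum behaves like $\sum_{k_1}|k_1|^{-2}$, which is divergent in three dimensions; and even granting convergence via the semigroup smoothing, the closed inequality $M \le R_1 + \tfrac{CR}{\nu\sqrt V}\,\sup M$ does not match the hypothesis $R^2/\sqrt V \le \nu R_1$ --- your bootstrap would require $R \lesssim \nu\sqrt V$, a different and in general stronger condition. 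If instead you put \emph{both} factors in $\ell^2$ via Cauchy--Schwarz, so that $\big|\sum_{k_1}\hat u(k-k_1)\otimes\hat u(k_1)\big| \le \|u\|_{L^2}^2$ up to normalization, the Duhamel integral gives $|k|\,|\hat u(k,t)| \le R_1 + R^2/(\nu\sqrt V) \le 2R_1$: valid, but not the sharp bound $R_1$ claimed.

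The paper's argument is precisely the differential-inequality/trapping form you gesture at in your final sentence, but with a crucial simplification you did not identify: the nonlinear term is bounded using \emph{only} the $L^2$ norm, with no appeal to the $M$-norm on either factor. One computes the radial component $\mathrm{re}\bigl(\overline{\hat u(k)}\cdot X(u)_k\bigr)$ of the vector field driving $\hat u(k,t)$; the viscous part contributes $-\nu|k|^2|\hat u(k)|^2$, and the nonlinearity is estimated simply by
\[
   \Bigl|\Pi_k\, k\cdot \sum_{k_1}\hat u(k-k_1)\otimes\hat u(k_1)\Bigr| \;\le\; |k|\,\|u\|_{L^2}^2 ~,
\]
plain Cauchy--Schwarz on the convolution. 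On the boundary $|k|\,|\hat u(k)| = R_1$ this yields $\mathrm{re}\bigl(\overline{\hat u}\cdot X\bigr) \le -\nu R_1^2 + R^2 R_1/\sqrt V < 0$ under \eqref{Eq:EstimateOnA}, so each ball $\{|\hat u(k)|\le R_1/|k|\}\subset\mathbb C^2_k$ is a trapping region for the $k$th mode. This is why the constant in \eqref{Eq:EstimateOnA} comes out exactly $1$: dissipation and nonlinearity are compared pointwise at the boundary rather than integrated against the heat kernel. Your instinct that the differential form is needed for sharpness is correct; the missing ingredient is that the right bilinear estimate is $L^2\times L^2$, not $L^2\times M$ or $M\times M$. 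Your remarks on justifying the computation for weak solutions via approximation are reasonable --- indeed more careful than the paper itself, which simply invokes the fact that the ODE for each fixed $k$ holds in the weak sense since each Fourier coefficient is Lipschitz in $t$.
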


This result appears in the paper \cite{BiryukCraigIbrahim06} in 
a slightly different form.  
For nonzero $f$ the ball $B_R(0) \subseteq L^2(D)$ is not 
necessarily invariant. However given $u_0 \in B_R(0)$ and our 
hypothesis that $f \in L^\infty_{loc}([0,+\infty); H^{-1}(D)\cap L^2(D))$, 
there is always a nondecreasing function 
$R(T)\geq R$ such that for all $T>0$, $u(\cdot,T) \in B_{R(T)}(0)$. 
Indeed, suppose that a Galilean frame is chosen and the pressure 
$p$ is suitably normalized so that $\int_D u(x,T) \, dx = 0 = 
\int_D f(x,T) \, dx$. Let 
$F^2(T) := \int_0^T \| f \|^2_{\dot{H}^{-1}} \, dt$. 
Then by standard interpolation one has that 
\begin{equation}\label{Eqn:EnergyInequalityUpperBounds}
  \| u(\cdot, T) \|^2_{L^2} + \nu \int_0^T 
     \|\nabla u(\cdot, s)\|^2_{L^2} \, ds 
   \leq R^2(T) ~.     
\end{equation}
The function $R^2(T)$ is an upper bound for the LHS of the 
energy inequality, for which there is an estimate 
$R^2(T) \leq R^2 + \frac{1}{\nu} F^2(T)$.
In case of a bounded inhomogeneous force 
$f(\cdot,t) \in L^\infty({\mathbb R}^+ ; H^{-1}\cap L^2)$, 
there is an upper estimate $F^2(T) \leq C T$, so that $R^2(T)$ 
exhibits (not more than) linear growth in $T$. 

\begin{theorem}\label{Thm:two}
In the case of nonzero $f(x,t)$, let $R(t)$ be an a priori upper 
bound for $\| u(\cdot,t)\|_{L^2}$, for example as in 
\eqref{Eqn:EnergyInequalityUpperBounds}. Suppose that $R_1(t)$ 
is a nondecreasing function such that for all $(k,t)$ we have
\begin{equation}\label{Eq:EstimateOnAandF}
    \frac{R^2(t)}{\sqrt{V}} 
  + \frac{|\hat{f}(k,t)|}{|k|} <  \nu R_1(t)~,
\end{equation}
then the set 
$\{ (u,t) \ : \ 0<t \, , \ u(\cdot,t) \in A_{R_1(t)}\cap B_{R(t)}(0)\}$ 
is invariant for weak solutions of the equations
\eqref{Eqn:NavierStokes}.
That is, if the initial data satisfies $u_0 \in A_{R_1(0)}\cap B_{R(0)}(0)$
then for all $0 < t < +\infty$ the Fourier coefficients of any weak
solution emanating from $u_0$ and being subject to the force $f$ will 
obey the estimate
\begin{equation}
    |\hat{u}(k,t)| \leq \frac{R_1(t)}{|k|} ~.
\end{equation}
\end{theorem}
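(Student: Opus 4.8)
The plan is to mimic the proof of Theorem~\ref{Thm:one}, treating the forced case as a perturbation in which the forcing term contributes an extra $|\hat f(k,t)|/|k|$ to the Fourier-side differential inequality. Since $A_{R_1(t)}\cap B_{R(t)}(0)$ is now a \emph{time-dependent} target set and $R_1,R$ are only assumed nondecreasing, invariance must be interpreted in the space-time sense stated: once $(u(\cdot,t),t)$ enters the set it cannot leave. First I would recall that $\hat u(k,t)$ is Lipschitz in $t$ (the Proposition above), so for fixed $k$ the scalar quantity $|k|^2|\hat u(k,t)|^2$ is absolutely continuous and we may compute its derivative a.e. Writing the Navier--Stokes equation in Fourier variables, projected onto divergence-free fields, gives
\begin{equation*}
  \partial_t \hat u(k,t) = -\nu|k|^2\hat u(k,t) + \widehat{(\mathbb P\,u\otimes\nabla u)}(k,t) + \hat f(k,t),
\end{equation*}
where $\mathbb P$ is the Leray projector and the nonlinear term is the Fourier transform of $-\mathbb P(u\cdot\nabla u)$. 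Pairing with $\overline{\hat u(k,t)}$ and taking real parts, the key estimate is a bound on the nonlinear contribution: the convolution structure together with $\|\hat u(\cdot,t)\|_{\ell^1_k\text{-type sum}}$ controlled by the assumption $u(\cdot,t)\in A_{R_1(t)}$ on the relevant part of $k$-space, combined with $\|u(\cdot,t)\|_{L^2}\le R(t)$, yields that the nonlinear term is bounded in modulus by something like $\bigl(R^2(t)/\sqrt V\bigr)\,|k|\,|\hat u(k,t)|$ — this is exactly the step that produces the first term on the left of \eqref{Eq:EstimateOnAandF}.

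Next I would run the barrier/continuity argument. Suppose $u_0\in A_{R_1(0)}\cap B_{R(0)}(0)$ but the conclusion fails; then there is a first time $t_*$ and a wavenumber $k_*$ at which $|k_*||\hat u(k_*,t_*)| = R_1(t_*)$ with the quantity about to exceed $R_1(t)$. (Because $\Gamma'$ is discrete and each $\hat u(k,\cdot)$ is continuous, and because $|k||\hat u(k,t)|$ is bounded on bounded time intervals by the $H^{-3/2}$-continuity, such a first crossing can be isolated; for very large $|k|$ the energy bound $|k||\hat u(k,t)|\le |k|\cdot\|u(\cdot,t)\|_{L^2}$ is not itself enough, so one must instead use that $A_{R_1}$ is detected through a supremum that is lower semicontinuous and argue at the sup, or pass through a Galerkin truncation where finitely many modes are active and then let the truncation dimension tend to infinity.) At $(k_*,t_*)$ the $L^2$-pairing computation gives
\begin{equation*}
  \tfrac12\,\partial_t\bigl(|k_*|^2|\hat u(k_*,t)|^2\bigr)\Big|_{t=t_*}
  \le |k_*|^2|\hat u(k_*,t_*)|\Bigl(-\nu|k_*||\hat u(k_*,t_*)| + \tfrac{R^2(t_*)}{\sqrt V}|k_*||\hat u(k_*,t_*)| + |\hat f(k_*,t_*)|\Bigr),
\end{equation*}
and substituting $|k_*||\hat u(k_*,t_*)| = R_1(t_*)$ the bracket equals $R_1(t_*)\bigl(\tfrac{R^2(t_*)}{\sqrt V} - \nu R_1(t_*)\bigr) + |\hat f(k_*,t_*)|$, which by the strict inequality \eqref{Eq:EstimateOnAandF} is $<0$. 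Hence $|k_*|^2|\hat u(k_*,t)|^2$ is strictly decreasing at $t_*$, contradicting the assumption that it is about to exceed $R_1(t)^2$ (here the monotonicity of $R_1$ is used: $R_1(t)^2$ is not decreasing, so the graph of $|k_*|^2|\hat u(k_*,t)|^2$ cannot cross it from below). Simultaneously, $u(\cdot,t)$ stays in $B_{R(t)}(0)$ by construction of $R(t)$ as an a~priori bound via \eqref{Eqn:EnergyInequalityUpperBounds} and the energy inequality \eqref{Eq:EnergyInequality}. This establishes invariance, and the stated bound $|\hat u(k,t)|\le R_1(t)/|k|$ is just the defining inequality of $A_{R_1(t)}$.

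The main obstacle is the rigor of the ``first crossing'' step for weak solutions: unlike the $L^2$ ball, the set $A_{R_1}$ involves a supremum over infinitely many $k$, and a priori $|k||\hat u(k,t)|$ need not be continuous as a function on $\Gamma'\times\mathbb R^+$ uniformly in $k$, so one cannot naively take a first exit time. The clean fix is to establish the estimate first for Galerkin (spectral) approximations $u^N$, where only $|k|\le N$ modes are present, so that $\sup_{|k|\le N}|k||\hat u^N(k,t)|$ is continuous in $t$ and the barrier argument is elementary; one checks that the nonlinear bound and the energy inequality survive truncation with the same constants, concludes $u^N\in A_{R_1(t)}$ for all $t$, and then passes to the limit using that $\hat u^N(k,t)\to\hat u(k,t)$ for each fixed $k$ along a subsequence (from the standard weak-$*$ $L^2$ and strong $L^2_{loc}$ compactness used to construct Leray solutions). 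Since $A_{R_1(t)}$ is closed under pointwise-in-$k$ limits, the bound is inherited by the weak solution. I would also note the subtlety that \eqref{Eq:EstimateOnAandF} is \emph{strict}, which is what makes the derivative strictly negative at a putative crossing and removes any borderline case; if one only had $\le$, one would need an extra $\epsilon$-room argument or accept the weaker conclusion $|\hat u(k,t)|\le R_1(t)/|k|$ with $R_1$ replaced by $R_1+\epsilon$ and then let $\epsilon\downarrow 0$.
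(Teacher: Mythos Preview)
Your approach is the same as the paper's: write the Fourier-side ODE, take the radial component $\hbox{re}(\overline{\hat u(k)}\cdot X(u)_k)$, bound the nonlinear convolution by $|k|\|u\|_{L^2}^2/\sqrt V$, and conclude that the ball of radius $R_1(t)/|k|$ in $\mathbb C^2_k$ is a trapping region. Two points, however, need correction.

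\textbf{The displayed differential inequality has wrong powers of $|k|$.} From $\tfrac12\partial_t|\hat u(k,t)|^2 \le -\nu|k|^2|\hat u|^2 + \frac{|k|}{\sqrt V}R^2(t)|\hat u| + |\hat f||\hat u|$, multiplying by $|k|^2$ and factoring $|k|^2|\hat u|$ gives the bracket
\[
  -\nu|k|^2|\hat u| + \frac{|k|}{\sqrt V}R^2(t) + |\hat f(k,t)| ,
\]
not your $-\nu|k||\hat u| + \frac{R^2}{\sqrt V}|k||\hat u| + |\hat f|$. At the boundary $|k||\hat u|=R_1(t)$ the correct bracket becomes $|k|\bigl(-\nu R_1 + \frac{R^2}{\sqrt V} + \frac{|\hat f|}{|k|}\bigr)$, which is strictly negative precisely by \eqref{Eq:EstimateOnAandF}. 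Your expression $R_1(\frac{R^2}{\sqrt V}-\nu R_1)+|\hat f|$ does \emph{not} follow from \eqref{Eq:EstimateOnAandF} (it would require $R_1\ge |k|$), so as written the contradiction step does not close. This is the paper's computation in \eqref{Eqn:RadialComponentF}.

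\textbf{The Galerkin detour is unnecessary and costs generality.} Note that the nonlinear bound uses only $\|u(\cdot,t)\|_{L^2}\le R(t)$, not membership of other modes in $A_{R_1(t)}$; there is no coupling between modes in the barrier argument. Hence for each fixed $k$ the scalar $t\mapsto |k||\hat u(k,t)|$ is Lipschitz (by the Proposition) and satisfies a self-contained differential inequality, so the trapping argument runs mode-by-mode without ever invoking a simultaneous first-crossing time over $\Gamma'$. This is how the paper proceeds, and it applies to \emph{any} Leray weak solution. Passing through Galerkin approximants would prove the bound only for those weak solutions obtained as Galerkin limits, which is a strictly weaker statement than the theorem.
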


It is a common situation for the inhomogeneous force to have properties 
of recurrence, such as if it were time-periodic, or if given by a 
statistical process which is stationary with respect to time. For 
bounded $f$ as above, the estimate $F^2(T) \leq C T$ holds. 
Furthermore, one is interested in those solutions which are themselves
statistically stationary. For these solutions it will be the case that
the force adds to the total energy at the same rate as the dissipation 
depletes it. Indeed, one assumes that there is a constant $\overline{R}$ 
such that $\| u(\cdot, T) \|^2_{L^2} \leq \overline{R}^2$ and
$\nu \int_0^T \|\nabla u(\cdot, s)\|^2_{L^2} \, ds \leq \overline{R}^2T$.
Therefore, in the statistically stationary case we expect that the
upper bound $R(t)$ in the hypotheses of Theorem \ref{Thm:two} to be 
given by the constant $\overline{R}$, which in particular gives a 
uniform bound in time. In this situation, the constant $R_1$ is also 
time independent. 

\begin{proof}[Proof of Theorems \ref{Thm:one} and \ref{Thm:two}]
 For each $k \in \Gamma'$ the vector $\hat{u}(k) \in \mathbb{C}^2_k
\subseteq \mathbb{C}^3$, where 
$\mathbb{C}^2_k= \{w \in \mathbb{C}^3 \ : \ w\perp k=0 \}$ 
is specified by the divergence-free condition. Because $(u,p)$ is 
a distributional solution, the Fourier coefficients satisfy
\begin{eqnarray}\label{Eqn:FourierTransformNavierStokes}
   \partial_t\hat{u}(k,t) & = & -\nu |k|^2\hat{u}(k,t) 
     - i \Pi_k k \cdot \frac{1}{\sqrt{V}}
         \sum_{k_1} \hat{u}(k-k_1) \otimes \hat{u}(k_1) 
     + \hat{f}(k,t) ~,   \\
   & := & X(u)_k ~,   \nonumber 
\end{eqnarray}
at least in the weak sense, after testing with a smooth cutoff
function $\varphi(t) \in C^\infty$.
We use the notation that $X(u)_k$ is the $k^{th}$ component of the vector
field represented by the RHS. The convolution has introduced the
factor $1/\sqrt{V}$. The operator 
$\Pi_k :\mathbb{C}^{3}\to \mathbb{C}^2_k$ is given by
\[
    \Pi_k(z)=z-(z \cdot k)\frac{k}{|k|^2} ~.
\] 
The Leray projector onto the divergence-free distributional vector 
fields, considered in Fourier space coordinates, is the direct sum of 
the $\Pi_k$. 

The radial component of the vector field 
$X(u)_k$ in $\mathbb{C}^2_k \subseteq \mathbb{C}^3$ is expressed by 
$\hbox{\rm re}\, (\overline{\hat{u}(k)} \cdot X(u)_k)/|\hat{u}(k)|$.
Consider first Theorem~\ref{Thm:one}, which is the case that $f=0$. 
Suppose that $|k||\hat{u}(k)| = R_1$ for some $k$, that is, the 
solution is on the boundary of the region $A_{R_1}$. Since 
$|\Pi_k k \cdot \sum_{k_1} \hat{u}(k-k_1) \otimes \hat{u}(k_1)| 
\leq |k|\|u\|^2_{L^2}$, an estimate of the radial component of 
$X(u)_k$ is that 
\begin{eqnarray}
  \hbox{\rm re}\, (\overline{\hat{u}(k)} \cdot X(u)_k) 
  & = & -\nu |k|^2 |\hat{u}(k)|^2   
    + \frac{1}{\sqrt{V}} \hbox{\rm im} \, \bigl(\overline{\hat{u}(k)} \
       k \Pi_k\sum_{k_1} \hat{u}(k-k_1) \otimes \hat{u}(k_1)\bigr)
   \nonumber \\
   & \leq & -\nu R_1^2 
     + \frac{1}{\sqrt{V}}\| u(\cdot, t)\|^2_{L^2} R_1
     ~. \nonumber 
\end{eqnarray}
When $u(\cdot,t) \in B_R(0)$ and $R^2 < \nu R_1\sqrt{V}$ 
the RHS is negative, implying that integral curves $\hat{u}(k,t)$ 
cannot exit the region. Thus the ball of radius $R_1/|k|$ in 
$\mathbb{C}^2_k$ is a trapping set, or a future invariant set,  
for the vector field $X(u)_k$.  

In case of the presence of a force $f$, suppose again that
$|k||\hat{u}(k,t)| = R_1(t)$ for some $(k,t)$. The radial 
component of $X(u)_k$ satisfies
\begin{eqnarray}\label{Eqn:RadialComponentF}
  \hbox{\rm re}\, (\overline{\hat{u}(k)} \cdot X(u)_k) 
  & = & -\nu |k|^2 |\hat{u}(k)|^2   
    + \frac{1}{\sqrt{V}} \hbox{\rm im} \, \bigl(\overline{\hat{u}(k)} \
       k \Pi_k\sum_{k_1} \hat{u}(k-k_1) \otimes \hat{u}(k_1)\bigr) 
   \nonumber   \\
    & & + \, \hbox{\rm re} \, (\overline{\hat{u}(k)} \cdot \hat{f}(k,t)) 
               \\
   & \leq & -\nu R_1^2 
     + \frac{1}{\sqrt{V}}\| u(\cdot, t)\|^2_{L^2} R_1
     + |\hat{f}(k,t)|\frac{R_1}{|k|} 
     ~. \nonumber 
\end{eqnarray}
Furthermore, the energy $\| u(\cdot, t)\|^2_{L^2}$ is bounded 
by $R^2(t)$. As long as the radial component of $X(u)_k$ (namely 
the quantity in \eqref{Eqn:RadialComponentF} normalized by the 
length $|\hat{u}(k)|= R_1(t)/|k|$) is bounded above by 
the growth rate of the ball itself, namely by $\dot{R_1}/|k|$,  
then solution curves $(u(\cdot, t), t) $ do not exit the set 
$\{ (u(\cdot),t) \ : \ 0<t \, , \ u(\cdot,t) \in 
A_{R_1(t)}\cap B_{R(t)}(0)\}$. In particular this happens 
for nondecreasing $R_1(t)$ whenever 
$(\sqrt{V})^{-1}R^2(t) + |\hat{f}(k,t)|/|k| < \nu R_1(t)$. 

Therefore when the initial data $u_0 \in A_{R_1(0)}$ and 
the force $f$ satisfies \eqref{Eq:EstimateOnAandF}, then the
solution satisfies $u(x,t) \in A_{R_1(t)}$ for $R_1(t)$ finite,  
for all positive times $t$. 
\end{proof}

It is natural to ask what constraints are imposed on the data 
$u_0$ by the condition \eqref{Eq:EstimateOnA}. 
Given smooth initial data $u_0$  and a force $f$
satisfying $|\hat{f}(k,t)| < F_2|k|$, the constant 
$\nu R_1$ can always be chosen so as to satisfy 
\eqref{Eq:EstimateOnA} (the case $f=0$), or if 
the force $f \not= 0$, the function $R_1(t)$ can
be chosen to be nondecreasing and to satisfy 
\eqref{Eq:EstimateOnAandF}. Thus the hypotheses to this 
theorem encompass any reasonable smooth class of initial 
data and inhomogeneous forcing terms. We note that the 
constant $R_1$ scales dimensionally in terms of the 
units $L^{3/2}/T$. 

Under changes of scale, the quantity 
$\sup_{t}\sup_{k} |k| |\hat{u}(k,t)|$ transforms like the 
\textsl{BV-norm} $\sup_{t} \|\partial_x u(\cdot, t)\|_{L^1}$,
and indeed the latter being finite implies the former. 
However as far as we know there is no known global 
bound on the \textsl{BV-norm} of weak solutions to
\eqref{Eqn:NavierStokes}. A related inequality appears
in \cite{Constantin90}, which is a global upper bound
on the $L^1$\textsl{-norm} of the vorticity 
$\omega := \nabla \times u$, again uniformly in time. 

A corollary to this result gives a stronger estimate for 
time integrals of the Fourier coefficients of weak solutions.

\begin{theorem}\label{Thm:TimeAverageEnergySpectrum}
Suppose that a weak solution $u(x,t)$ satisfies 
\eqref{Eqn:EnergyInequalityUpperBounds}, with its initial 
conditions satisfying $u_0 \in A_{R_1}$. Then time integrals 
of the Fourier coefficients obey the stronger estimate
\begin{equation}\label{Eqn:EstimateTimeIntegrals}
   \int_0^T |\hat{u}(k,t)|^2 \, dt \leq \frac{R_2^2(T)}{\nu|k|^4} ~.
\end{equation}
The constant is given by
\[
   R_2(T) = \Half\left( R_4(T) + \sqrt{2R_1^2(0) + R_4^2(T)} \right) ~, 
\] 
where 
\[
    R_4(T) = \frac{R^2(T)}{\nu\sqrt{V}}
       + \frac{F_1(k,T)}{\sqrt{\nu}} ~, 
    \qquad
    F_1(k,T) = \Bigl(\int_0^T |\hat{f}(k,t)|^2 \, dt \Bigr)^{1/2} ~.  
\]
Assuming that $\sup_{k\in {\mathbb Z}^3} F_1(k,T) := F_\infty(T) <
+\infty$, the constant $R_2(T)$ will be independent of $k$. 
\end{theorem}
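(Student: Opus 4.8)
The plan is to run a differential-inequality argument for $|\hat u(k,t)|^2$ in the spirit of the proof of Theorems~\ref{Thm:one}--\ref{Thm:two}, but now integrated in time and sharpened on the nonlinear term, so that the resulting bound reduces to solving a quadratic whose positive root is exactly $R_2(T)$.

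First one records, as there, that for each fixed $k\in\Gamma'$ the curve $t\mapsto\hat u(k,t)$ is Lipschitz and satisfies \eqref{Eqn:FourierTransformNavierStokes} (after testing against $\varphi\in C^\infty$), so that $\frac{1}{2}\frac{d}{dt}|\hat u(k,t)|^2=\mathrm{re}\,(\overline{\hat u(k,t)}\cdot X(u)_k)$ for a.e.\ $t$. The key new ingredient is a sharper bound on the nonlinear term: since $u$ is divergence free, $(k-k_1)\cdot\hat u(k-k_1)=0$, so $k\cdot\hat u(k-k_1)=k_1\cdot\hat u(k-k_1)$ and, by Cauchy--Schwarz together with Plancherel,
\[
  \Bigl|\Pi_k\,k\cdot\sum_{k_1}\hat u(k-k_1)\otimes\hat u(k_1)\Bigr|
  \;\le\;\sum_{k_1}|k_1|\,|\hat u(k_1)|\,|\hat u(k-k_1)|
  \;\le\;\|\nabla u\|_{L^2}\,\|u\|_{L^2}.
\]
This replaces the cruder bound $|k|\,\|u\|_{L^2}^2$ used for Theorems~\ref{Thm:one}--\ref{Thm:two}. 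Combined with $|\mathrm{re}\,(\overline{\hat u(k)}\cdot\hat f(k,t))|\le|\hat u(k,t)|\,|\hat f(k,t)|$ this gives, for a.e.\ $t$,
\[
  \frac{1}{2}\frac{d}{dt}|\hat u(k,t)|^2+\nu|k|^2|\hat u(k,t)|^2\;\le\;h(t)\,|\hat u(k,t)|,\qquad
  h(t):=\frac{1}{\sqrt V}\|u(\cdot,t)\|_{L^2}\|\nabla u(\cdot,t)\|_{L^2}+|\hat f(k,t)|.
\]

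Next one integrates over $[0,T]$, discards the nonnegative term $\Half|\hat u(k,T)|^2$, and uses $|\hat u(k,0)|=|\hat u_0(k)|\le R_1(0)/|k|$ from $u_0\in A_{R_1}$, obtaining $\nu|k|^2\int_0^T|\hat u(k,t)|^2\,dt\le \frac{R_1^2(0)}{2|k|^2}+\int_0^T h(t)|\hat u(k,t)|\,dt$. Cauchy--Schwarz in $t$ bounds the last integral by $\|h\|_{L^2[0,T]}\bigl(\int_0^T|\hat u(k,t)|^2dt\bigr)^{1/2}$. For $\|h\|_{L^2[0,T]}$ one invokes \eqref{Eqn:EnergyInequalityUpperBounds}: by monotonicity of $R(\cdot)$ one has $\|u(\cdot,t)\|_{L^2}^2\le R^2(t)\le R^2(T)$ on $[0,T]$, while $\nu\int_0^T\|\nabla u(\cdot,s)\|_{L^2}^2ds\le R^2(T)$, so $\int_0^T\|u\|_{L^2}^2\|\nabla u\|_{L^2}^2\,dt\le R^4(T)/\nu$; together with $\int_0^T|\hat f(k,t)|^2dt=F_1^2(k,T)$ this yields $\|h\|_{L^2[0,T]}\le R^2(T)/\sqrt{\nu V}+F_1(k,T)=\sqrt\nu\,R_4(T)$.

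Finally, introducing $Y:=\sqrt\nu\,|k|^2\bigl(\int_0^T|\hat u(k,t)|^2dt\bigr)^{1/2}\ge 0$, the accumulated inequality becomes $Y^2\le \Half R_1^2(0)+R_4(T)\,Y$, so $Y$ lies below the positive root of $z^2-R_4(T)z-\Half R_1^2(0)=0$, which is precisely $R_2(T)=\Half\bigl(R_4(T)+\sqrt{2R_1^2(0)+R_4^2(T)}\bigr)$; this is \eqref{Eqn:EstimateTimeIntegrals}. The $k$-independence assertion follows because $R_4(T)$, and hence $R_2(T)$, is monotone in $F_1(k,T)$, so one may replace $F_1(k,T)$ by $F_\infty(T)=\sup_k F_1(k,T)$. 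The only genuinely delicate point is justifying the a.e.\ differential inequality for the merely Lipschitz (weakly differentiable) curve $\hat u(k,\cdot)$, together with the fact that $h\in L^2[0,T]$ because $u\in L^2([0,T);\dot H^1)$; this is handled as in Theorems~\ref{Thm:one}--\ref{Thm:two} via the $\varphi$-tested form of \eqref{Eqn:FourierTransformNavierStokes}, and the remainder is routine.
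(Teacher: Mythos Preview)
Your proof is correct and follows essentially the same route as the paper's: both use the divergence-free identity $k\cdot\hat u(k-k_1)=k_1\cdot\hat u(k-k_1)$ to replace the crude $|k|\|u\|_{L^2}^2$ bound on the convolution by $\|u\|_{L^2}\|\nabla u\|_{L^2}$, integrate the resulting relation for $|\hat u(k,t)|^2$ over $[0,T]$, apply Cauchy--Schwarz in time together with the energy bound \eqref{Eqn:EnergyInequalityUpperBounds}, and read off the quadratic inequality $I^2\le \Half R_1^2(0)+R_4(T)I$ for $I=\sqrt{\nu}\,|k|^2\bigl(\int_0^T|\hat u|^2\,dt\bigr)^{1/2}$. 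The only cosmetic difference is that the paper keeps the identity \eqref{Eqn:FourierLocalEnergyIdentity} and bounds the nonlinear and forcing time-integrals separately after multiplying by $|k|^2$, whereas you first pass to a pointwise differential inequality with right-hand side $h(t)|\hat u(k,t)|$ and then bound $\|h\|_{L^2[0,T]}$ by the triangle inequality; the resulting quadratic and its positive root $R_2(T)$ are identical.
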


When the force $f=0$, the constants $R$ and $R_1$ can be taken 
independent of $T$, implying that $R_2$ is also constant in time,
and the estimate \eqref{Eqn:EstimateTimeIntegrals} holds 
uniformly over $0<T<+\infty$. For nonzero forces which 
are $L^\infty$ with respect to time, there is an upper bound 
$F_1(k,T)\sim \sqrt{T}$, and $R_2(T)$ will grow at most linearly 
in time for large $T$. In the situation in which the forcing 
is given by a stationary process, it is expected (but not 
proven at this point in time) that for typical solutions, the
quantity $R_2^2(T)/T$ will have a limit $\overline{R}_2^2$ for 
large time $T$, representing a  balance between energy input 
and  dissipation. Notice that $R_2$ scales dimensionally in 
terms of $L^{3/2}/T$. 

\begin{proof}
Because the field $u(\cdot,t)$ is divergence-free,
$k\cdot\hat{u}(k,\cdot) = 0$, implying the vector identity
$k\cdot\hat{u}(k-k_1)\otimes\hat{u}(k_1) = \hat{u}(k-k_1)\cdot
k_1\otimes \hat{u}(k_1)$. The absolute value of $\hat{u}$ can 
be estimated from \eqref{Eqn:FourierTransformNavierStokes},
\begin{eqnarray*}
  && \Half\partial_t|\hat{u}(k,t)|^2 + \nu|k|^2|\hat{u}(k,t)|^2
  = \hbox{\rm im}\, \frac{1}{\sqrt{V}} \bigl( \overline{\hat{u}(k,t)}\cdot 
      \Pi_k \sum_{k_1} \hat{u}(k-k_1,t) \cdot k_1\otimes\hat{u}(k_1,t)
      \bigr) \\
    && \qquad + \overline{\hat{u}(k,t)}\cdot \hat{f}(k,t) ~,
\end{eqnarray*}
which is valid  for each $k$ in the sense of weak solutions 
in time. When integrated over the time interval $[0,T]$ it gives
\begin{eqnarray}\label{Eqn:FourierLocalEnergyIdentity}
  &&  \nu |k|^2 \int_0^T |\hat{u}(k,t)|^2 \, dt 
    = \Half|\hat{u}_0(k)|^2 - \Half|\hat{u}(k,T)|^2
   \\
  && \quad + \hbox{\rm im}\, \frac{1}{\sqrt{V}}
      \int_0^T \bigl( \overline{\hat{u}(k,t)}\cdot 
      \Pi_k \sum_{k_1} \hat{u}(k-k_1,t) \cdot k_1\otimes\hat{u}(k_1,t) \bigr)
      \, dt    \nonumber       \\
  &&  \quad + \int_0^T \overline{\hat{u}(k,t)}\cdot \hat{f}(k,t) \, dt ~.
   \nonumber
\end{eqnarray}
Multiplying this identity by $|k|^2$, the terms of the RHS are 
then bounded as follows:
\begin{eqnarray*}
  && \Half|k|^2|\hat{u}_0(k)|^2 \leq \Half R_1^2(0) ~, \\ 
  && |k|^2 \bigl| \int_0^T \overline{\hat{u}(k,t)}\cdot \hat{f}(k,t) 
     \, dt \bigr|
   \leq \frac{1}{\sqrt{\nu}}\bigl(\nu|k|^4
   \int_0^T |\hat{u}(k,t)|^2 \, dt \bigr)^{1/2}
   \bigl( \int_0^T |\hat{f}(k,t)|^2 \, dt  \bigr)^{1/2} ~,
\end{eqnarray*}
and finally 
\begin{eqnarray*}
 && |k|^2 \frac{1}{\sqrt{V}}
      \bigl| \hbox{\rm im}\, \int_0^T \bigl( \overline{\hat{u}(k,t)}\cdot 
      \Pi_k \sum_{k_1} \hat{u}(k-k_1,t) \cdot k_1\otimes\hat{u}(k_1,t) \bigr)
      \, dt  \bigr|   \\ 
 && \leq \frac{1}{\nu \sqrt{V}}\bigl(\nu|k|^4
      \int_0^T |\hat{u}(k,t)|^2 \, dt \bigr)^{1/2} 
      \sup_{0<t<T}\|u(\cdot,t)\|_{L^2} 
      \bigl(\nu\int_0^T |\nabla u(\cdot, s)|^2 \, ds \bigr)^{1/2} ~.
   \nonumber 
\end{eqnarray*}
Define $I^2(k,T) = \nu |k|^4\int_0^T |\hat{u}(k,t)|^2 \,dt$,
then the identity \eqref{Eqn:FourierLocalEnergyIdentity} implies the 
inequality for $I(k,T)$
\begin{equation}
   I^2(k,T) \leq \frac{R_1^2(0)}{2} 
   + \Bigl( \frac{R^2(T)}{\nu\sqrt{V}} 
   + \frac{F_1(k,T)}{\sqrt{\nu}}\Bigr) I(k,T) ~.
\end{equation}
where we have used that $R(t)$ is nondecreasing. The quantity 
$I(k,T)$, being nonnegative, cannot exceed the largest
positive root of the quadratic equation where equality is 
attained, giving the estimate \eqref{Eqn:EstimateTimeIntegrals}.
This estimate is uniform in $k\in{\mathbb Z}^3$ as long as 
$\sup_{k\in{\mathbb Z}^3} F_1(k,T) = F_\infty(T) < +\infty$. 
We note that 
\[
   F^2(T)=\sum_{k\in {\mathbb Z}^3\backslash \{0\}} \frac{F_1^2(k,T)}{|k|^2}~,
\]
which appears in the discussion of energy estimate bounds 
\eqref{Eqn:EnergyInequalityUpperBounds}.
\end{proof}

\subsection{The analogous estimate on $\mathbb{R}^3$}
\label{subsection2.2}
Suppose that $\|u(\cdot,t)\|_{L^2} \leq R(t)$ 
(if there is no force, then $R(t) = R(0)$ suffices). 
The main difference in the case of $D = {\mathbb R}^3$ is that the
functions $\hat{u}(k,t)$ are elements of a Hilbert space, whose 
values at a particular Fourier space-time point $(k,t)$ are not
well defined. We work instead with filtered values of the vector
field $u(x,t)$. Let $0 \not= k \in {\mathbb R}^3$, and for 
$\delta < |k|/(2\sqrt{3})$ define $\hat{\chi}_k(\xi)$ a smooth 
cutoff function of the cube $Q_k$ about $k \in{\mathbb R}^3$ of 
side length $2\delta$ ($\delta \leq 1$ is acceptable for large
$|k|$) which takes value $\hat{\chi}_k = 1$ on a cube of half the 
sidelength. The point is that for 
$\xi \in \hbox{\rm supp}\,(\hat{\chi}_k)$ then 
$|k|/2 \leq |\xi| \leq 3|k|/2$. 
Now define $(\hat{\chi}_k(D)u)(x,t) = 
{\mathcal F}^{-1}\hat{\chi}_k(\xi)\hat{u}(\xi,t) = (\chi_k * u)(x,t)$.
Since $\chi_k \in H^m$ for all $m$, it and its translations are 
admissible test functions, the statement 
\eqref{Eqn:ContinuityProperties} implies that 
$(\chi_k * u)(x,t)$ is a Lipschitz function of $t$ for each $x$. Define
$e_p(k,t) := (\int |\hat{\chi}_k(\xi)\hat{u}(\xi,t)|^p \, d\xi)^{1/p}$ for
$2 \leq p < +\infty$, the conclusion is the following.

\begin{proposition}
The function $e_p^p(k,t)$ is a Lipschitz function of $t \in {\mathbb R}^+$. 
\end{proposition}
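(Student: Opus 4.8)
The plan is to track the evolution in time of the localized Fourier mass $e_p^p(k,t)$ by the same mechanism used to prove Theorem~\ref{Thm:TimeAverageEnergySpectrum}, the only new ingredient being the localization to the cube $Q_k$, on which $|k|/2\leq|\xi|\leq 3|k|/2$. Write $v(x,t):=(\hat\chi_k(D)u)(x,t)=(\chi_k*u)(x,t)$, so that $\hat v(\xi,t)=\hat\chi_k(\xi)\hat u(\xi,t)$ is supported in $Q_k$ and $e_p^p(k,t)=\int_{Q_k}|\hat v(\xi,t)|^p\,d\xi$. Since $(u,p)$ is a distributional solution, equation \eqref{Eqn:FourierTransformNavierStokes} (with the convolution $\widehat{u\otimes u}(\xi,t)=\tfrac1{\sqrt V}\int e^{-i\xi\cdot x}u\otimes u\,dx$ in place of the lattice sum, in the case $D={\mathbb R}^3$) gives, after testing against a smooth temporal cutoff,
\[
   \partial_t\hat v(\xi,t)
   = -\nu|\xi|^2\hat v(\xi,t)
     - i\,\hat\chi_k(\xi)\,\Pi_\xi\xi\cdot\widehat{u\otimes u}(\xi,t)
     + \hat\chi_k(\xi)\hat f(\xi,t)\,,
\]
valid in the weak sense in $t$; the multiplier $\hat\chi_k$ is bounded with compact support, and $\chi_k$ and its translates are admissible test functions by \eqref{Eqn:ContinuityProperties}, so this is legitimate.

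Next I would multiply by $|\hat v(\xi,t)|^{p-2}\,\overline{\hat v(\xi,t)}$, take real parts, and integrate over $\xi\in Q_k$ and over a time interval $[s,t]$, exactly as in the proof of Theorem~\ref{Thm:TimeAverageEnergySpectrum}. Using $\mathrm{Re}\,(\overline{\hat v}\cdot\partial_t\hat v)=\tfrac1p\,\partial_t|\hat v|^p$, discarding the nonnegative viscous integral $\nu\int_s^t\int_{Q_k}|\xi|^2|\hat v|^p$, and estimating the nonlinear term on $Q_k$ by $|\Pi_\xi\xi\cdot\widehat{u\otimes u}(\xi,\tau)|\leq \tfrac{3|k|}{2}\cdot\tfrac1{\sqrt V}\|u\otimes u\|_{L^1}=\tfrac{3|k|}{2\sqrt V}\|u(\cdot,\tau)\|^2_{L^2}$, together with H\"older in $\xi$ on the bounded set $Q_k$ with exponents $p$ and $p/(p-1)$, I arrive at
\[
   \bigl| e_p^p(k,t)-e_p^p(k,s)\bigr|
   \leq p\int_s^t\Bigl(\tfrac94\nu|k|^2\,e_p^p(k,\tau)+B(\tau)\,e_p^{p-1}(k,\tau)\Bigr)\,d\tau\,,
\]
where $B(\tau):=\tfrac{3|k|}{2\sqrt V}\|u(\cdot,\tau)\|^2_{L^2}\,\|\hat\chi_k\|_{L^p(Q_k)}+\|\hat\chi_k\hat f(\cdot,\tau)\|_{L^p(Q_k)}$. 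By the energy inequality $\|u(\cdot,\tau)\|_{L^2}\leq R(\tau)$, and since the force is reasonably smooth, $B$ is bounded on every finite time interval.

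Finally, provided $e_p^p(k,0)=\int_{Q_k}|\hat\chi_k\hat u_0|^p\,d\xi<\infty$ — which is automatic when $p=2$, where $e_2^2(k,t)=\|v(\cdot,t)\|^2_{L^2}$, and is a mild regularity requirement on $u_0$ for $p>2$, consistent with the running hypotheses — the displayed inequality with $s=0$, together with $e_p^{p-1}\leq 1+e_p^p$ and Gr\"onwall's lemma, bounds $e_p^p(k,\cdot)$ on each $[0,T]$ by a constant depending only on $T,\nu,k,p$ and the bounds on $R$ and on $\hat f$. Reinserting this bound into the inequality makes the integrand locally bounded, so $e_p^p(k,\cdot)$ is Lipschitz on $[0,T]$ for every $T$, which is the claim; the Lipschitz constant is $k$-dependent, just as for the Fourier coefficients $\hat u(k,t)$ on the torus. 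The one point that demands care is the rigorous passage to the weak-in-time $|\cdot|^p$--energy identity for the merely weak solution $u$ (the case $p>2$ being slightly beyond the $p=2$ identity used for Theorem~\ref{Thm:TimeAverageEnergySpectrum}); this is handled by a standard regularization in $\xi$, the compact support of $\hat\chi_k$ inside $\{|k|/2\leq|\xi|\leq 3|k|/2\}$ being exactly what renders every Fourier multiplier occurring in the estimate bounded.
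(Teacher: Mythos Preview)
Your argument is correct. The paper does not give a detailed proof of this proposition; it relies on the sentence immediately preceding it, namely that $\chi_k\in H^m$ for every $m$, so $\chi_k$ and its translates are admissible test functions, and hence by the continuity properties~\eqref{Eqn:ContinuityProperties} the filtered field $(\chi_k*u)(x,t)$ is Lipschitz in $t$ for each $x$.

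Your proof is more explicit: you write down the localized evolution equation for $\hat v=\hat\chi_k\hat u$, multiply by $|\hat v|^{p-2}\overline{\hat v}$, integrate over $Q_k$ and $[s,t]$, and close with Gr\"onwall. This is essentially the computation that the paper carries out later in Lemmas~\ref{Lemma:E2Bounded} and~\ref{Lemma:two}, so you are previewing machinery that will be reused. A slightly shorter route, closer in spirit to the paper's one-line justification, is to stop at the evolution equation you wrote for $\partial_t\hat v$ and observe that each term on its right-hand side is bounded in $L^\infty(Q_k)$ uniformly on compact time intervals: on $\hbox{supp}(\hat\chi_k)$ one has $|\xi|\leq 3|k|/2$, and $|\widehat{u\otimes u}(\xi,\tau)|\leq C\|u(\cdot,\tau)\|_{L^2}^2\leq CR^2(\tau)$. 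Hence $\hat v(\xi,\cdot)$ is Lipschitz in $t$ uniformly in $\xi\in Q_k$; since $|Q_k|<\infty$ and $\|\hat v(\cdot,t)\|_{L^\infty(Q_k)}$ is bounded on $[0,T]$, the map $t\mapsto\int_{Q_k}|\hat v(\xi,t)|^p\,d\xi$ is Lipschitz directly, with no need for the Gr\"onwall step or the separate finiteness hypothesis on $e_p^p(k,0)$. Either way the conclusion is the same.
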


We quantify the Fourier behavior of the force $f$ in similar terms. 
Consider the function $\hat{\chi}_k(D)f(x,t) 
   = {\mathcal F}^{-1}\hat{\chi}_k(\xi)\hat{f}(\xi,t)$ 
and let $f_p(k,t) := \sup_{0\leq s\leq t}
   ( \int (|\hat{\chi}_k(\xi)\hat{f}(\xi,t)|^p / |\xi|^p) \,
   d\xi)^{1/p}$.
Recalling that 
$f \in L^\infty_{loc}([0,+\infty);\dot H^{-1}\cap L^2(D))$
we have $f_2(k,t)\leq \sup_{0\leq s\leq t}\|f(k,s)\|_{\dot H^{-1}}$.
However the fact that $f_p$ is finite is in general additional 
information about the regularity of the forcing.

\begin{theorem}\label{Thm:ContinuumL-inftyEstimate}
Let initial conditions $u_0(x)$ give rise to a weak solution
$u(x,t)$ which satisfies $\|u(\cdot,t)\|_{L^2} \leq R(t)$.
Suppose that there exists a nondecreasing function $R_1(t)$ such 
that for all $2 \leq p < +\infty$ and $t\in {\mathbb R}^+$
\begin{equation}\label{Eqn:ContinuumBoundednessRelation}
  (2\delta)^{3/p}\frac{R^2(t)}{\sqrt{V}} + f_p(k,t) < \frac{\nu}{6} R_1(t) ~,
\end{equation}
where $\delta < |k|/2\sqrt{3}$.
Consider a solution to \eqref{Eqn:NavierStokes} that initially satisfies
$\sup_{2 \leq p < +\infty} e_p(k,0) < R_1(0)/|k|$. Then for all 
$t\in {\mathbb R}^+$
\begin{equation}
   \bigl| \hat{\chi}_k(\xi)\hat{u}(\xi,t) \bigr|_{L^\infty} 
  < \frac{R_1(t)}{|k|} ~.
\end{equation}
\end{theorem}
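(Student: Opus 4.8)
The plan is to transplant the invariant-region argument of Theorems~\ref{Thm:one} and~\ref{Thm:two} to $\mathbb{R}^3$, the one new difficulty being that $\hat{u}(\cdot,t)\in L^2_\xi$ has no well-defined pointwise values. In place of following a single Fourier coefficient I would track, for each fixed $p\in[2,\infty)$, the frequency-localized quantity $e_p(k,t)=\|\hat{\chi}_k\hat{u}(\cdot,t)\|_{L^p_\xi}$; I would prove a barrier estimate $e_p(k,t)<R_1(t)/|k|$ valid for every such $p$ and uniform in $p$, and then let $p\to\infty$, using that $e_p(k,t)\to\bigl|\hat{\chi}_k(\xi)\hat{u}(\xi,t)\bigr|_{L^\infty}$ since $\hat{\chi}_k$ has support of finite measure $(2\delta)^3$.

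First I would record the evolution of $e_p^p$. As $\hat{\chi}_k$ is real and time-independent, $\partial_t|\hat{\chi}_k\hat{u}|^p=p\,|\hat{\chi}_k\hat{u}|^{p-2}\hat{\chi}_k^2\,\mathrm{re}\bigl(\overline{\hat{u}}\cdot\partial_t\hat{u}\bigr)$; inserting the Fourier-space equation~\eqref{Eqn:FourierTransformNavierStokes} and integrating in $\xi$ gives --- by the same temporal-mollification argument used to prove Theorem~\ref{Thm:TimeAverageEnergySpectrum}, legitimate because the Proposition preceding the statement makes $e_p^p(k,\cdot)$ Lipschitz hence absolutely continuous --- an identity whose right-hand side I would estimate termwise. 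On $\mathrm{supp}\,\hat{\chi}_k$ one has $|k|/2\le|\xi|\le 3|k|/2$, so the viscous term is at most $-\frac{p\,\nu|k|^2}{4}e_p^p$; since $\Pi_\xi$ is an orthogonal projection, the nonlinear integrand is bounded pointwise on the support by $|\hat{u}|\cdot\frac{1}{\sqrt{V}}\,|\xi|\,\|\hat{u}\|_{L^2_\xi}^2\le|\hat{u}|\cdot\frac{3|k|}{2\sqrt{V}}R^2(t)$ (Cauchy--Schwarz in the convolution and Plancherel, $\|\hat{u}\|_{L^2_\xi}=\|u\|_{L^2_x}\le R(t)$); and in the forcing term I would write $|\hat{f}|=|\xi|\,\bigl(|\hat{f}|/|\xi|\bigr)\le\frac{3|k|}{2}\bigl(|\hat{f}|/|\xi|\bigr)$ on the support. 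Using $|\hat{\chi}_k\hat{u}|^{p-2}\hat{\chi}_k^2|\hat{u}|=|\hat{\chi}_k\hat{u}|^{p-1}\hat{\chi}_k$ and then H\"older with exponents $p/(p-1)$ and $p$ --- which extracts the factors $e_p^{p-1}$, $\|\hat{\chi}_k\|_{L^p}\le(2\delta)^{3/p}$ and $f_p(k,t)$ --- the identity collapses, wherever $e_p(k,t)>0$, to
\begin{equation}\label{Eqn:plan-DI}
  \frac{d}{dt}e_p(k,t)\;\le\;-\frac{\nu|k|^2}{4}\,e_p(k,t)\;+\;\frac{3|k|}{2}\Bigl((2\delta)^{3/p}\frac{R^2(t)}{\sqrt{V}}+f_p(k,t)\Bigr)~.
\end{equation}

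Now hypothesis~\eqref{Eqn:ContinuumBoundednessRelation} says precisely that the bracket in~\eqref{Eqn:plan-DI} is strictly smaller than $\frac{\nu|k|}{4}R_1(t)$ --- this is where the constant $\nu/6=(\nu|k|/4)/(3|k|/2)$ is calibrated --- so with $c:=\nu|k|^2/4$ one has $\frac{d}{dt}\bigl(e^{ct}e_p(k,t)\bigr)<\frac{\nu|k|}{4}e^{ct}R_1(t)$. Integrating from $0$, using that $R_1$ is nondecreasing so that $\frac{\nu|k|}{4}\int_0^t e^{cs}R_1(s)\,ds\le\frac{R_1(t)}{|k|}\bigl(e^{ct}-1\bigr)$, and using the uniform initial gap $E_0:=\sup_{2\le q<\infty}e_q(k,0)<R_1(0)/|k|$ supplied by the hypothesis, I obtain
\[
  e_p(k,t)\;\le\;E_0\,e^{-ct}+\frac{R_1(t)}{|k|}\bigl(1-e^{-ct}\bigr)\;<\;\frac{R_1(t)}{|k|}~,
\]
a bound whose right-hand side does not depend on $p$; on each connected component of $\{t:e_p(k,t)>0\}$ one runs this same Gronwall step from the left endpoint, where $e_p=0<R_1/|k|$. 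Letting $p\to\infty$ and invoking $\lim_{p\to\infty}e_p(k,t)=\bigl|\hat{\chi}_k(\xi)\hat{u}(\xi,t)\bigr|_{L^\infty}$ then yields the asserted estimate.

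I expect the delicate point to be the passage from the distributional identity~\eqref{Eqn:FourierTransformNavierStokes} to the pointwise-in-time inequality~\eqref{Eqn:plan-DI}, since the natural test vector $\hat{\chi}_k^2|\hat{\chi}_k\hat{u}|^{p-2}\overline{\hat{u}}$ involves the merely weak solution; as in Theorem~\ref{Thm:TimeAverageEnergySpectrum} this is handled by mollifying in time and using the Lipschitz regularity of $e_p^p(k,\cdot)$ from the Proposition just above. The only other thing requiring attention is keeping track of the geometric constants --- the factor $3/2$, which appears twice from $|\xi|\le 3|k|/2$, weighed against the $\nu|k|^2/4$ coming from $|\xi|\ge|k|/2$ --- so that they assemble into exactly the constant $\nu/6$ of~\eqref{Eqn:ContinuumBoundednessRelation}; the remaining manipulations (Cauchy--Schwarz, H\"older, Gronwall, and $L^p\to L^\infty$ as $p\to\infty$) are routine.
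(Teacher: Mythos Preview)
Your proposal is correct and follows essentially the same route as the paper: derive the evolution inequality for $e_p^p(k,t)$, estimate the three terms exactly as the paper does in its Lemmas~\ref{Lemma:E2Bounded} and~\ref{Lemma:two} (yielding precisely your inequality~\eqref{Eqn:plan-DI} after dividing by $pe_p^{p-1}$), conclude $e_p(k,t)<R_1(t)/|k|$ uniformly in $p$, and pass to the limit $p\to\infty$. The only cosmetic difference is in step~3: the paper argues by checking that the vector field points inward on the boundary $\{e_p=R_1(t)/|k|\}$ of the trapping region, whereas you integrate explicitly with the factor $e^{ct}$ --- these are equivalent ODE comparison arguments, and your version has the minor advantage of producing the quantitative bound $e_p(k,t)\le E_0e^{-ct}+\tfrac{R_1(t)}{|k|}(1-e^{-ct})$, which makes the survival of the strict inequality under $p\to\infty$ transparent.
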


\begin{theorem}\label{Thm:ContinuumL-2TEstimate}
Suppose a weak solution $u(x,t)$ satisfies 
\eqref{Eqn:EnergyInequalityUpperBounds},
and furthermore ask that 
$|\hat{\chi}_k(\xi)\hat{u}_0(\xi)|_{L^\infty} \leq R_1(0)/|k|$. 
Then for all $T\in{\mathbb R}^+$, 
\begin{equation}
   \int_0^T | \hat{\chi}_k\hat{u}(\cdot,t)|_{L^\infty}^2 \, dt 
   \leq \frac{R_2^2(T)}{\nu |k|^4} ~,
\end{equation}
where the constant $R_2(T)$ is given by
\begin{equation}
   R_2(T) = \Half \Bigl(R_5(T) + \sqrt{4R_1^2(0) + R_5^2(T)} \Bigr) ~, 
\end{equation}
where
\begin{equation}\label{Eqn:DefnOfR5}
   R_5(T) = \frac{2R^2(T)}{\nu\sqrt{V}} 
      + \frac{2F_\infty(T)}{\sqrt{\nu}} ~, \qquad
   F_\infty(T) = \sup_{k\in {\mathbb R}^3\backslash \{0\}}\Bigl( \int_0^T 
      |\hat{\chi}_k(\xi)\hat{f}(\xi,t)|^2_{L^\infty}  
        \, dt \Bigr)^{1/2} ~.
\end{equation}
\end{theorem}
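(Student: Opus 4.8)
The plan is to mimic the proof of Theorem~\ref{Thm:TimeAverageEnergySpectrum} (the torus version), but working with the filtered quantity $|\hat\chi_k(\xi)\hat u(\xi,t)|_{L^\infty}$ in place of the single Fourier coefficient $|\hat u(k,t)|$. The essential point is that $\hat\chi_k$ is supported in an annulus $|k|/2 \le |\xi| \le 3|k|/2$, so all the dissipation-type and nonlinearity-type estimates that on the torus used the exact value $|k|$ now hold with $|k|$ replaced by a comparable quantity, at the cost of explicit numerical constants (this is the source of the factors of $2$ and $6$ appearing in $R_5$ and in the hypothesis of Theorem~\ref{Thm:ContinuumL-inftyEstimate}). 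I would first fix $k$ and $\xi \in \mathrm{supp}(\hat\chi_k)$, apply $\hat\chi_k(D)$ to the Fourier-transformed Navier--Stokes equation \eqref{Eqn:FourierTransformNavierStokes}, and derive the pointwise-in-$\xi$ differential inequality
\begin{equation*}
   \Half \partial_t |\hat\chi_k(\xi)\hat u(\xi,t)|^2 + \nu|\xi|^2|\hat\chi_k(\xi)\hat u(\xi,t)|^2
   = \hbox{\rm im}\,\frac{1}{\sqrt V}\bigl(\overline{\hat\chi_k(\xi)\hat u(\xi,t)}\cdot \hat\chi_k(\xi)\Pi_\xi \textstyle\sum_{k_1}\hat u(\xi-k_1,t)\cdot k_1\otimes \hat u(k_1,t)\bigr) + \overline{\hat\chi_k(\xi)\hat u(\xi,t)}\cdot\hat\chi_k(\xi)\hat f(\xi,t),
\end{equation*}
valid in the weak-in-time sense, using that $(\chi_k * u)$ is Lipschitz in $t$ (the Proposition preceding the theorem) so that $e_p^p$ and in particular the relevant integrals are absolutely continuous.

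Next I would integrate over $[0,T]$ and bound each term on the right-hand side. For the forcing term I use Cauchy--Schwarz in $t$ together with the definition of $F_\infty(T)$ in \eqref{Eqn:DefnOfR5}, extracting a factor $(\nu|k|^4\int_0^T|\hat\chi_k\hat u|_{L^\infty}^2\,dt)^{1/2}$ after multiplying through by $|k|^4$ (using $|\xi|\ge |k|/2$ to absorb constants, which accounts for the factor $2$ in $R_5$). For the trilinear term, the key estimate is that
\[
   \Bigl|\hat\chi_k(\xi)\Pi_\xi \textstyle\sum_{k_1}\hat u(\xi-k_1)\cdot k_1\otimes\hat u(k_1)\Bigr| \le |\xi|\,\|u(\cdot,t)\|_{L^2}\,\|\nabla u(\cdot,t)\|_{L^2}\cdot\text{(a constant)},
\]
which follows, as in the torus case, from the divergence-free vector identity $\xi\cdot\hat u(\xi-k_1)\otimes\hat u(k_1) = \hat u(\xi-k_1)\cdot k_1\otimes\hat u(k_1)$ combined with Young's/Plancherel's inequality for the convolution; then Cauchy--Schwarz in time and the energy bound \eqref{Eqn:EnergyInequalityUpperBounds} give the factor $\frac{1}{\sqrt V}\sup_t\|u\|_{L^2}(\nu\int_0^T\|\nabla u\|_{L^2}^2)^{1/2}\le R^2(T)/\sqrt V$. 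Collecting the initial-data term $\Half|k|^2|\hat\chi_k\hat u_0|_{L^\infty}^2 \le \Half R_1^2(0)$, writing $I^2(k,T) := \nu|k|^4\int_0^T|\hat\chi_k\hat u(\cdot,t)|_{L^\infty}^2\,dt$, one arrives at a quadratic inequality of the shape $I^2 \le R_1^2(0) + R_5(T)\,I$; since $I\ge 0$ it cannot exceed the larger root, which is exactly the stated $R_2(T)$.

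The main obstacle I expect is not in the algebra but in the measure-theoretic bookkeeping: on $\mathbb R^3$ the Fourier transform $\hat u(\cdot,t)$ is only an $L^2$ function, so the pointwise-in-$\xi$ identity above must be interpreted carefully --- I would establish it for the regularized (convolved) field $\chi_k*u$, for which testing the equation against $\chi_k(\cdot - x)$ and another smooth time cutoff is legitimate by \eqref{Eqn:ContinuityProperties} and $du/dt \in L^2([0,T];H^{-3/2})$, then take the supremum over $\xi$ at the very end. A secondary technical point is justifying that one may first take the $L^\infty_\xi$-norm and then integrate in $t$ (as opposed to the other order); this is handled because the differential inequality holds for each $\xi$ simultaneously with the same right-hand-side bounds (which are $\xi$-independent once $|\xi|\sim|k|$ is used), so $\sup_\xi$ commutes through the final quadratic inequality. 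Everything else parallels the torus proof, with only the replacement of $|k|$ by quantities comparable to it on $\mathrm{supp}(\hat\chi_k)$ producing the explicit constants in $R_5(T)$.
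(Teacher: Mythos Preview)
Your overall strategy mirrors the torus case, but the issue you flag as a ``secondary technical point'' is in fact the main obstacle, and your proposed resolution does not work. Arguing pointwise in $\xi$ and integrating in time gives, for each fixed $\xi\in\hbox{\rm supp}(\hat\chi_k)$, a bound on $\int_0^T|\hat\chi_k(\xi)\hat u(\xi,t)|^2\,dt$; taking the supremum over $\xi$ then bounds $\sup_\xi\int_0^T|\hat\chi_k(\xi)\hat u(\xi,t)|^2\,dt$. The theorem, however, asks for a bound on $\int_0^T\sup_\xi|\hat\chi_k(\xi)\hat u(\xi,t)|^2\,dt$, which is in general strictly larger. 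That the right-hand sides of your pointwise inequalities are uniform in $\xi$ does \emph{not} allow $\sup_\xi$ to pass inside the time integral on the left-hand side; the quantities $\sup_\xi I_\xi$ and $I$ (in your notation) are simply different objects, and no commuting of suprema through the final quadratic inequality repairs this.

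The paper handles exactly this by a different device: it works with the $L^p_\xi$ norms $e_p(k,t)=\bigl(\int|\hat\chi_k\hat u|^p\,d\xi\bigr)^{1/p}$ for finite $p$, derives the evolution identity for $e_p^2$ (so that the $\xi$-integration is already built into the quantity \emph{before} one integrates in time), and obtains a quadratic inequality for $I_p^2(k,T):=\nu|k|^4\int_0^T e_p^2(k,t)\,dt$ with constants uniform in $p$. Since $e_p(k,t)\to|\hat\chi_k\hat u(\cdot,t)|_{L^\infty}$ for each $t$, the bound passes to the limit $p\to\infty$. The $L^p$ detour is precisely what enforces the correct order of operations (sup in $\xi$ inside the time integral); your pointwise-in-$\xi$ route produces the opposite order and cannot be rescued as written.
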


The strategy of the proof of these two results is to give an 
analysis similar to that of Section~\ref{subsection2.1} for 
a uniform bound on $e_p(k,t)$ with the correct behavior in 
the parameter $k$. The first lemma controls the behavior 
of $e_2(k,t)$, pointwise in $t$.  

\begin{lemma}\label{Lemma:E2Bounded}
Suppose that $R_1(t)$ is nondecreasing, and is such that 
for all $t \in {\mathbb R}^+$
\begin{equation}
   (2\delta)^{3/2}\frac{R^2(t)}{\sqrt{V}} 
   + f_2(k,t) < \frac{\nu}{6} R_1(t) ~,
\end{equation}
where $\delta < |k|/2\sqrt{3}$. 
If $e_2(k,0) < R_1(0)/|k|$, then for all $0 < t < +\infty$
\begin{equation}\label{Eqn:EstimateForE2}
   e_2(k,t) \leq \frac{R_1(t)}{|k|} ~.
\end{equation}
\end{lemma}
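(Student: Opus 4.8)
The plan is to reproduce, at the level of the localized quantity $e_2^2(k,t)=\int|\hat\chi_k(\xi)\hat u(\xi,t)|^2\,d\xi$, the trapping region argument already used in Subsection~\ref{subsection2.1} for a single Fourier coefficient. First I would pair the frequency transformed system \eqref{Eqn:FourierTransformNavierStokes} with $\overline{\hat\chi_k(\xi)\hat u(\xi,t)}$ and integrate in $\xi$; equivalently, test \eqref{Eqn:NavierStokes} against the multiplier $\hat\chi_k(D)^2u$. Since $\hat\chi_k$ is supported in the cube $Q_k$, the field $\hat\chi_k(D)^2u$ is localized to frequencies $|\xi|\le 3|k|/2$, hence lies in $H^m$ for every $m$ with norms controlled by $\|u\|_{L^2}$, and it is divergence free, so it is an admissible test function and the pressure term drops out. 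Using the Lipschitz continuity of $e_2^2(k,\cdot)$ stated above, this produces the identity, valid in integrated form and for almost every $t$,
\[
  \Half\,\partial_t e_2^2(k,t) + \nu\int |\xi|^2 |\hat\chi_k(\xi)\hat u(\xi,t)|^2\,d\xi = N(k,t) + G(k,t)~,
\]
in which $N(k,t)$ collects the Leray projected nonlinear contribution and $G(k,t)$ the forcing contribution.

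Next I would estimate the three terms with the correct powers of $|k|$, $\delta$ and $R(t)$. On the support of $\hat\chi_k$ one has $|\xi|\ge|k|/2$, so the viscous term is bounded below by $\frac{\nu|k|^2}{4}e_2^2(k,t)$. For $N$, I would represent the nonlinearity as $\nabla\cdot(u\otimes u)$ in Fourier variables and use the divergence free identity $\hat u(\xi-\eta)\cdot\eta=\hat u(\xi-\eta)\cdot\xi$, exactly as in the earlier proofs; then $|\xi|\le\frac32|k|$ on the support, $\int|\hat u(\xi-\eta)||\hat u(\eta)|\,d\eta\le\|u\|_{L^2}^2\le R^2(t)$ by Cauchy--Schwarz in $\eta$ and Plancherel, and $\int|\hat\chi_k(\xi)|^2\,d\xi\le(2\delta)^3$, so a further Cauchy--Schwarz in $\xi$ gives $|N(k,t)|\le\frac32\,|k|\,(2\delta)^{3/2}\,\frac{R^2(t)}{\sqrt V}\,e_2(k,t)$. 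For $G$, Cauchy--Schwarz together with $|\xi|\le\frac32|k|$ on the support converts the plain $L^2$ mass of $\hat\chi_k\hat f$ into the weighted quantity defining $f_2$, giving $|G(k,t)|\le\frac32\,|k|\,f_2(k,t)\,e_2(k,t)$.

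Combining these bounds, $\Half\,\partial_t e_2^2\le-\frac{\nu|k|^2}{4}e_2^2+\frac32|k|((2\delta)^{3/2}\frac{R^2(t)}{\sqrt V}+f_2(k,t))\,e_2$, and the hypothesis makes the bracket strictly smaller than $\frac{\nu}{6}R_1(t)$; hence wherever $e_2(k,t)\ge R_1(t)/|k|$ the right hand side is bounded by $\frac{\nu|k|}{4}\,e_2\,(R_1(t)-|k|e_2)\le0$, in fact strictly negative. Thus the Lipschitz function $h(t):=|k|^2e_2^2(k,t)-R_1^2(t)$ has $h(0)<0$ and, since $R_1$ is nondecreasing, $\dot h<0$ almost everywhere on the set $\{h\ge 0\}$. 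A standard barrier argument now concludes: if $h(t_\ast)>0$ for some $t_\ast$, let $t_0<t_\ast$ be the last time with $h(t_0)=0$ and integrate $\dot h<0$ over $[t_0,t_\ast]$ to contradict $h(t_\ast)>0$. Therefore $h\le0$ for all $t$, which is \eqref{Eqn:EstimateForE2}.

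The hard part is not the algebra of these estimates but the justification of the localized energy identity at the regularity available for a Leray weak solution: one must verify carefully that $\hat\chi_k(D)^2u$ is a legitimate test function for \eqref{Eqn:NavierStokes} (this is precisely where the frequency localization, the divergence free condition to eliminate the pressure term, and the continuity properties \eqref{Eqn:ContinuityProperties} enter), and that the resulting relation is a genuine identity, or at least holds in an integrated form strong enough to run the barrier argument. Once that point is settled, the remainder is a direct adaptation of the arguments of Subsection~\ref{subsection2.1}.
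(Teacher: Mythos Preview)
Your argument is correct and follows the same route as the paper's own proof: derive the differential identity for $e_2^2(k,t)$, bound the viscous term below via $|\xi|\ge|k|/2$ and the nonlinear and forcing terms above via Cauchy--Schwarz together with $|\xi|\le 3|k|/2$ and $\|\hat\chi_k\|_{L^2}\le(2\delta)^{3/2}$, then use the hypothesis to make the right-hand side negative on the boundary $e_2=R_1(t)/|k|$ and conclude by a trapping/barrier argument. Your treatment is in fact slightly more careful than the paper's in spelling out the admissibility of $\hat\chi_k(D)^2u$ as a test function and in handling the nondecreasing $R_1(t)$ via the auxiliary function $h(t)$.
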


\begin{proof}
The quantity $e_2^2(k,t)$ satisfies the identity
\begin{eqnarray}\label{Eqn:DiffIdentityForE2}
   \Half \frac{d}{dt} e_2^2(k,t) & = & \Half \partial_t 
      \int |\hat{\chi}_k\hat{u}|^2 \, d\xi \\
  & = & \hbox{\rm re}\, \int \overline{(\hat{\chi}_k(\xi)\hat{u}(\xi,t))} 
     \Bigl(-\nu |\xi|^2(\hat{\chi}_k(\xi)\hat{u}(\xi,t))   \nonumber \\
  & & \quad - i \hat{\chi}_k(\xi) \frac{\sqrt{V}}{(2\pi)^3} 
     \Pi_\xi \int \hat{u}(\xi - \xi_1,t) \cdot \xi_1 \otimes
     \hat{u}(\xi_1,t) \, d\xi_1 \Bigr) \, d\xi   \nonumber \\
  & & + \hbox{\rm re}\, \int \overline{(\hat{\chi}_k(\xi)\hat{u}(\xi,t))} 
     \hat{\chi}_k(\xi)\hat{f}(\xi,t) \, d\xi ~.   \nonumber
\end{eqnarray}
The first term of the RHS is negative, bounded above by
\begin{eqnarray*}
  - \hbox{\rm re}\, \nu\int \overline{(\hat{\chi}_k(\xi)\hat{u}(\xi,t))} 
     |\xi|^2(\hat{\chi}_k(\xi)\hat{u}(\xi,t) )\, d\xi 
  & = & -\nu \int |\xi|^2 |\hat{\chi}_k(\xi)\hat{u}(\xi,t)|^2 \, d\xi \\
  & \leq & -\nu \frac{|k|^2}{4} 
     \int |\hat{\chi}_k(\xi)\hat{u}(\xi,t)|^2 \, d\xi ~,
\end{eqnarray*}
where we recall that 
$|\xi| > (|k| - \sqrt{3}\delta) > |k|/2$ holds for 
$\xi \in \hbox{\rm supp}\, (\hat{\chi}_k)$.
The second term of the RHS of \eqref{Eqn:DiffIdentityForE2} is bounded 
with two applications of the Cauchy -- Schwartz inequality;
\begin{eqnarray*}
  && \Bigl| \hbox{\rm im}\, \int \overline{(\hat{\chi}_k(\xi)\hat{u}(\xi,t))} \ 
       \Bigl( \hat{\chi}_k(\xi) \frac{\sqrt{V}}{(2\pi)^3} 
       \Pi_\xi \int \hat{u}(\xi - \xi_1,t) \cdot \xi_1 \otimes
       \hat{u}(\xi_1,t) \, d\xi_1 \Bigr) \, d\xi \Bigr|  \\
  && \quad \leq \frac{\sqrt{V}}{(2\pi)^3} \| \hat{\chi}_k\hat{u} \|_{L^2} \
       \| \hat{\chi}_k \Pi_\xi \xi \cdot \int \hat{u}(\xi - \xi_1,t)  \otimes
       \hat{u}(\xi_1,t) \, d\xi_1 \|_{L^2}  \\
  && \quad \leq \frac{\sqrt{V}}{(2\pi)^3} \| \hat{\chi}_k\hat{u} \|_{L^2} \
     \| \xi \hat{\chi}_k\|_{L^2} \ \bigl|\int \hat{u}(\xi - \xi_1,t)  \otimes
       \hat{u}(\xi_1,t) \, d\xi_1\bigr|_{L^\infty} ~,
\end{eqnarray*}
where we have used the property of incompressibility that  
$\hat{u}(\xi - \xi_1) \cdot \xi_1 = \xi \cdot \hat{u}(\xi - \xi_1)$. 
Furthermore on the support of $\hat{\chi}_k$, 
$|\xi| \leq 3|k|/2$ therefore
\[
   \| \xi \hat{\chi}_k\|_{L^2} \leq \frac{3|k|}{2} (2\delta)^{3/2} ~, \qquad 
   \bigl|\int \hat{u}(\xi - \xi_1,t)  \otimes
       \hat{u}(\xi_1,t) \, d\xi_1\bigr|_{L^\infty} \leq \| \hat{u} \|_{L^2}^2
       \leq \frac{(2\pi)^3 R^2(t)}{V} ~. 
\]
The third term of the RHS of \eqref{Eqn:DiffIdentityForE2} is not present 
without a force. When there is a force, it admits an upper bound
\begin{eqnarray*}
  && \bigl| \hbox{\rm re}\, \int \overline{(\hat{\chi}_k(\xi)\hat{u}(\xi,t))} \ 
     \hat{\chi}_k(\xi)\hat{f}(\xi,t) \, d\xi \bigr| 
  \leq \| |\xi| \hat{\chi}_k\hat{u}\|_{L^2} \ 
       \| |\xi|^{-1} \hat{\chi}_k\hat{f}(\cdot,t)\|_{L^2}   \\
  && \quad \leq \frac{3|k|}{2} \| \hat{\chi}_k\hat{u} \|_{L^2}
  \|f(\cdot,t)\|_{\dot H^{-1}} ~.
\end{eqnarray*}
An estimate of the RHS is therefore
\begin{equation}\label{Eqn:E-VectorField}
   \hbox{\rm RHS} \leq - \frac{\nu}{4} |k|^2 e^2_2(k,t) 
   + \frac{3(2\delta)^{3/2}}{2} \frac{1}{\sqrt{V}} \ R^2(t) \
     |k| e_2(k,t)
   +  \frac{3}{2} f_2(k,t) \ |k| e_2(k,t) ~.  \nonumber 
\end{equation}
This is the situation from which the proof of Theorem~\ref{Thm:two} 
proceeds. Consider the set $B_{R_1} = \{ e \, : \, e \leq (R_1/|k|) \}$,
and suppose that the inequality holds
\begin{equation}\label{Eqn:ConstantsForContinuum}
     \frac{(2\delta)^{3/2}}{\sqrt{V}} \ R^2(t)
     + f_2(k,t) < \frac{\nu}{6} R_1(t) ~. 
\end{equation}
When $e=e_2$ is on the boundary of $B_{R_1}$, that is when $e_2 = R_1(t)/|k|$, 
then 
\begin{eqnarray*}
  && \hbox{\rm RHS} \leq - \frac{\nu}{4} |k|^2 e_2^2(k,t) 
   + \frac{3(2\delta)^{3/2}}{2} \frac{1}{\sqrt{V}} \ R^2(t) \
     |k| e_2(k,t)
   +  \frac{3}{2} f_2(k,t) \ |k| e_2(k,t)  \\
  && \quad\qquad \leq \Bigl( - \frac{\nu}{6} R_1 
   +  (2\delta)^{3/2} \frac{1}{\sqrt{V}} \ R^2(t)
   + f_2(k,t) \Bigr) \frac{3}{2} R_1 < 0 ~. 
\end{eqnarray*}
That is $\dot e_2(k,t) < 0$, and 
thus $B_{R_1}$ is an attracting set for $e_2(k,t)$. If initially
$e_2(k,0) \leq R_1(0)/|k|$, then for all $t \in {\mathbb R}^+$, 
$e_2(k,t) < R_1(t)/|k|$. This proves the lemma.  \qquad
\end{proof}

\begin{lemma}\label{Lemma:two}
Given $k \in {\mathbb R}^3$, suppose that for some $2 \leq p < +\infty$ 
there is a nonincreasing function $R_1(t)$ which satisfies
\begin{equation}\label{Eqn:ContinuumBoundednessRelationP}
  (2\delta)^{3/p}\frac{R^2(t)}{\sqrt{V}} + f_p(k,t) < \frac{\nu}{6} R_1(t) ~,
\end{equation}
for some $\delta < |k|/2\sqrt{3}$.
If a solution to \eqref{Eqn:NavierStokes} initially satisfies
$e_p(k,0) < R_1(0)/|k|$, then for all $t\in {\mathbb R}^+$
\begin{equation}
   e_p(k,t) < \frac{R_1(t)}{|k|} ~.
\end{equation}
\end{lemma}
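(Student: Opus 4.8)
The plan is to reproduce, almost line for line, the argument proving Lemma~\ref{Lemma:E2Bounded}, with the $L^2$ norm of the filtered field replaced throughout by its $L^p$ norm and the two Cauchy--Schwarz steps replaced by H\"older's inequality with conjugate exponents $p$ and $p/(p-1)$. First I would record the evolution identity for $e_p^p(k,t)=\int|\hat\chi_k(\xi)\hat u(\xi,t)|^p\,d\xi$, namely
\begin{equation*}
  \frac1p\,\frac{d}{dt}e_p^p(k,t)
  = \hbox{\rm re}\int |\hat\chi_k\hat u|^{p-2}\,
     \overline{\hat\chi_k\hat u}\cdot\partial_t(\hat\chi_k\hat u)\,d\xi ,
\end{equation*}
which is legitimate a.e.\ in $t$ since $e_p^p(k,t)$ is a Lipschitz function of $t$, and then substitute for $\partial_t\hat u$ the Fourier-space form of \eqref{Eqn:NavierStokes} exactly as in \eqref{Eqn:DiffIdentityForE2}. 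This splits the right-hand side into a viscous term, a nonlinear term, and a forcing term.

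The viscous term is handled verbatim: using $|\xi|\ge|k|/2$ on $\hbox{\rm supp}\,\hat\chi_k$ one gets $-\nu\int|\xi|^2|\hat\chi_k\hat u|^p\,d\xi\le-\frac{\nu|k|^2}{4}\,e_p^p(k,t)$. For the nonlinear term, H\"older's inequality places the factor $|\hat\chi_k\hat u|^{p-2}\,\overline{\hat\chi_k\hat u}$ (of modulus $|\hat\chi_k\hat u|^{p-1}$) in $L^{p/(p-1)}$, contributing $e_p^{p-1}(k,t)$, against the $L^p$ norm of $\hat\chi_k\Pi_\xi\,\xi\cdot\int\hat u(\xi-\xi_1)\otimes\hat u(\xi_1)\,d\xi_1$; using $|\Pi_\xi|\le1$, incompressibility to replace $\xi_1$ by $\xi$, $|\xi|\le\frac32|k|$ on the support, the bound $\big|\int\hat u(\xi-\xi_1)\otimes\hat u(\xi_1)\,d\xi_1\big|_{L^\infty}\le\|\hat u\|_{L^2}^2=\frac{(2\pi)^3}{V}\|u\|_{L^2}^2\le\frac{(2\pi)^3}{V}R^2(t)$ (Cauchy--Schwarz in $\xi_1$ and Plancherel), and $|\hbox{\rm supp}\,\hat\chi_k|=(2\delta)^3$, this $L^p$ norm is at most $\frac32|k|\,\frac{(2\pi)^3}{V}R^2(t)\,(2\delta)^{3/p}$; combined with the prefactor $\sqrt V/(2\pi)^3$ coming from the Fourier transform of the product, the nonlinear term is bounded by $\frac32|k|\,\frac{(2\delta)^{3/p}}{\sqrt V}R^2(t)\,e_p^{p-1}(k,t)$. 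The forcing term is similarly bounded, by H\"older, by $e_p^{p-1}(k,t)\big(\int|\hat\chi_k\hat f|^p\,d\xi\big)^{1/p}\le\frac32|k|\,f_p(k,t)\,e_p^{p-1}(k,t)$, where one writes $|\hat\chi_k\hat f|=|\xi|\,(|\hat\chi_k\hat f|/|\xi|)$, bounds $|\xi|\le\frac32|k|$ on the support, and recognizes the definition of $f_p(k,t)$. Dividing by $e_p^{p-1}(k,t)$ (which is positive on the boundary of the trapping region, the only place it matters) yields the differential inequality
\begin{equation*}
  \dot e_p(k,t)\le-\frac{\nu|k|^2}{4}\,e_p(k,t)
   +\frac{3|k|}{2}\Big(\frac{(2\delta)^{3/p}}{\sqrt V}R^2(t)+f_p(k,t)\Big).
\end{equation*}

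Finally I would run the trapping-region argument exactly as in Lemma~\ref{Lemma:E2Bounded}: when $e_p(k,t)=R_1(t)/|k|$, the hypothesis \eqref{Eqn:ContinuumBoundednessRelationP} makes the right-hand side strictly negative --- the constant $\nu/6$ being calibrated precisely so that $\frac32\cdot\frac16=\frac14$ cancels the viscous coefficient --- so $\dot e_p(k,t)<0\le\dot R_1(t)/|k|$ since $R_1$ is nondecreasing, the set $\{e:e\le R_1(t)/|k|\}$ cannot be exited, and $e_p(k,0)<R_1(0)/|k|$ propagates to $e_p(k,t)<R_1(t)/|k|$ for all $t$. I expect the only genuine difficulty to be bookkeeping: arranging the H\"older exponents and the power $(2\delta)^{3/p}$ so that the threshold in the differential inequality matches \eqref{Eqn:ContinuumBoundednessRelationP} exactly, and making the pointwise-in-$t$ trapping argument rigorous given that $e_p^p(k,t)$ is only Lipschitz --- the latter via its absolutely continuous representative and a first-exit-time argument, as is already implicit in the proof of Lemma~\ref{Lemma:E2Bounded}.
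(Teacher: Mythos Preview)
Your proposal is correct and follows essentially the same route as the paper: compute $\partial_t e_p^p$, bound the three terms exactly as you describe via H\"older (yielding the factors $e_p^{p-1}$, $(2\delta)^{3/p}$, $\tfrac32|k|$, and $f_p$), and then run the trapping argument at $e_p=R_1(t)/|k|$. The only cosmetic difference is that the paper keeps the inequality at the level of $\partial_t e_p^p$ and substitutes on the boundary rather than first dividing by $e_p^{p-1}$; your observation that $R_1$ should be \emph{nondecreasing} (as in Lemma~\ref{Lemma:E2Bounded} and Theorem~\ref{Thm:ContinuumL-inftyEstimate}) rather than nonincreasing is well taken and is what the argument actually uses.
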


\begin{proof}
The principle is to show that the local $L^p$ norms of $\hat{u}(\xi,t)$ 
are bounded, using the same strategy as the proof of Lemma~\ref{Lemma:E2Bounded}. 
Since $e_p^p(k,t)$ is Lipschitz continuous for each $k\in {\mathbb R}^3$, 
one calculates
\begin{eqnarray}\label{Eqn:EstimateOfEP}
   \frac{d}{dt} e_p^p(k,t) & = & \partial_t \int |\hat{\chi}_k\hat{u}|^p
   \, d\xi \\
  & = & \hbox{\rm re}\, \int p \, |\hat{\chi}_k\hat{u}|^{p-2} 
        \overline{(\hat{\chi}_k(\xi)\hat{u}(\xi,t))} 
        \partial_t (\hat{\chi}_k(\xi)\hat{u}(\xi,t)) \, d\xi  \nonumber \\
  & = & \hbox{\rm re}\, \int p \, |\hat{\chi}_k\hat{u}|^{p-2} 
        \overline{(\hat{\chi}_k(\xi)\hat{u}(\xi,t))} 
     \Bigl(-\nu |\xi|^2(\hat{\chi}_k(\xi)\hat{u}(\xi,t))   \nonumber \\
  & & \quad - i \hat{\chi}_k(\xi) \frac{\sqrt{V}}{(2\pi)^3} 
     \Pi_\xi \int \hat{u}(\xi - \xi_1,t) \cdot \xi_1 \otimes
     \hat{u}(\xi_1,t) \, d\xi_1 \Bigr) \, d\xi   \nonumber \\
  & & + \hbox{\rm re}\, \int p \, |\hat{\chi}_k\hat{u}|^{p-2}
     \overline{(\hat{\chi}_k(\xi)\hat{u}(\xi,t))} 
     \hat{\chi}_k(\xi)\hat{f}(\xi,t) \, d\xi ~.   \nonumber
\end{eqnarray}
The first term of the RHS of \eqref{Eqn:EstimateOfEP} is negative, 
\[
   -p\nu \int |\xi|^2|\hat{\chi}_k\hat{u}|^p \, d\xi 
   \leq -p\nu\frac{|k|^2}{4} e_p^p(k,t) ~.
\]
Using the assumptions of the lemma and the H\"older inequality, 
the second term has an estimate
\begin{eqnarray*}
  && \Bigl| \hbox{\rm im}\, \int p \, |\hat{\chi}_k\hat{u}|^{p-2} 
       \overline{(\hat{\chi}_k(\xi)\hat{u}(\xi,t))} \ 
       \Bigl( \hat{\chi}_k(\xi) \frac{\sqrt{V}}{(2\pi)^3} 
       \Pi_\xi \xi \cdot \int \hat{u}(\xi - \xi_1,t) \otimes
       \hat{u}(\xi_1,t) \, d\xi_1 \Bigr) \, d\xi \Bigr|  \\
  && \quad \leq p \Bigl(\int |\hat{\chi}_k\hat{u}|^{p} \, d\xi \Bigr)^{(p-1)/p}
       \Bigl( \int |\xi|^p |\chi_k(\xi)|^p \, d\xi \Bigr)^{1/p}
     \frac{\sqrt{V}}{(2\pi)^3}
       \Bigl| \int \hat{u}(\xi - \xi_1,t) \otimes
       \hat{u}(\xi_1,t) \, d\xi_1 \Bigr|_{L^\infty} \\
  && \quad \leq |k| e_p^{p-1} \frac{3p}{2} (2\delta)^{3/p}
     \frac{1}{\sqrt{V}} R^2(t) ~.
\end{eqnarray*}
The third term of the RHS of \eqref{Eqn:EstimateOfEP} is bounded by
\begin{eqnarray*}
   && \Bigl| \hbox{\rm re}\, \int p \, \bigl( |\hat{\chi}_k\hat{u}|^{p-2} 
       \overline{(\hat{\chi}_k(\xi)\hat{u}(\xi,t))} \bigr)
       \hat{\chi_k}\hat{f} \, d\xi \Bigr|   
     \leq p \Bigl(\int |\hat{\chi}_k\hat{u}|^{p} \, d\xi \Bigr)^{(p-1)/p} 
       \frac{3|k|}{2} \Bigl( \int \frac{|\hat{\chi}_k\hat{f}|^p}{|\xi|^p} \, 
         d\xi \Bigr)^{1/p}  \\
  && \qquad \leq \frac{3p}{2} |k| e_p^{p-1} f_p ~. 
\end{eqnarray*}
An estimate of the RHS of \eqref{Eqn:EstimateOfEP} is thus
\begin{eqnarray*}
   RHS \leq p \Bigl( - \frac{\nu}{4} |k|^2 e^p_p(k,t) 
   + \frac{3(2\delta)^{3/p}}{2} \frac{1}{\sqrt{V}} \ R^2(t) \
     |k| e_p^{p-1}(k,t)
   +  \frac{3}{2} f_p(k,t) |k| e_p^{p-1}(k,t) \Bigr) ~.  \nonumber 
\end{eqnarray*}
Consider again the set $B_{R_1} = \{e \, : \, 0 \leq e \leq R_1/|k| \}$.
When $e=e_p$ is on the boundary, that is when $e_p = R_1(t)/|k|$, then 
\begin{eqnarray*}
  && RHS \leq p \Bigl( -\frac{\nu}{4}\frac{R_1^p}{|k|^{p-2}} 
    + \frac{3(2\delta)^{3/p}}{2}\frac{1}{\sqrt{V}} R^2(t) 
      \frac{R_1^{p-1}}{|k|^{p-2}} 
    + \frac{3}{2} f_p \frac{R_1^{p-1}}{|k|^{p-2}} \Bigr) \\
  && \qquad = \Bigl( -\frac{\nu}{6} R_1
    + (2\delta)^{3/p}\frac{1}{\sqrt{V}} R^2(t) 
    + f_p \Bigr) \frac{3}{2}\frac{p R_1^{p-1}}{|k|^{p-2}} ~.
\end{eqnarray*}
Supposing that \eqref{Eqn:ConstantsForContinuum} holds, the
RHS is negative for $e=e_p$ on the boundary, and the set $B_{R_1}$
is attracting for the quantity $e_p(k,t)$ for $t \in {\mathbb R}^+$.
\qquad 
\end{proof}

These are estimates which are uniform in the parameter $p$. 
We are now prepared to complete the proof of 
Theorem~\ref{Thm:ContinuumL-inftyEstimate}, indeed we note that 
$\lim_{p\to +\infty} e_p(k,t) = |\hat{\chi}_k\hat{u}|_{L^\infty}$.
The quantity $e_p(k,t)$ is given a uniform upper bound in 
Lemma~\ref{Lemma:two} under the stated hypotheses, and hence 
the theorem follows. 

\begin{proof}[Proof of Theorem~\ref{Thm:ContinuumL-2TEstimate}]
Start with the identity in \eqref{Eqn:EstimateOfEP} for $e_p(k,t)$,
which we read as
\begin{equation}\label{Eqn:EstOnEP2}
   \partial_t e_p^2 = \frac{2}{p} e_p^{p(2/p-1)} \partial_t e_p^p ~. 
\end{equation}
Because of the support properties of the cutoff functions
$\hat{\chi}_k(\xi)$, there is the comparison 
$|k|/2 \leq |\xi| \leq 3|k|/2$ on the support of $\hat{\chi}_k$,
thus one has upper and lower bounds
\begin{equation}\label{Eqn:K-LocalEquivalence}
     \frac{|k|^2}{4} \int |\hat{\chi}_k(\xi)\hat{u}(\xi,t)|^p \, d\xi 
   \leq  \int |\xi|^2 |\hat{\chi}_k(\xi)\hat{u}(\xi,t)|^p \, d\xi 
   \leq  \frac{9|k|^2}{4} \int |\hat{\chi}_k(\xi)\hat{u}(\xi,t)|^p \, d\xi~.
\end{equation}
We therefore can rewrite the RHS of \eqref{Eqn:EstOnEP2} as
\begin{eqnarray}\label{Eqn:IntegrableIdentityContinuum}
   && 2 e_p^{p(2/p-1)} 
     \Bigl( -\nu \int |\xi|^2 |\hat{\chi}_k(\xi)\hat{u}(\xi,t)|^p \, d\xi \\
   && \quad + \frac{\sqrt{V}}{(2\pi)^3} \hbox{\rm im}\, 
     \int |\hat{\chi}_k(\xi)\hat{u}(\xi,t)|^{p-2}
     \overline{\hat{\chi}_k(\xi)\hat{u}(\xi,t)}  \nonumber \\
   && \quad\qquad \times \hat{\chi}_k(\xi) 
     \Bigl(\Pi_\xi \int \hat{u}(\xi - \xi_1,t)\cdot 
     \xi_1\otimes \hat{u}(\xi_1,t) \, d\xi_1 \Bigr) \, d\xi  \nonumber
     \\
  && \quad + \hbox{\rm re}\, \int |\hat{\chi}_k(\xi)\hat{u}(\xi,t)|^{p-2}
     \overline{\hat{\chi}_k(\xi)\hat{u}(\xi,t)} 
     \hat{\chi}_k(\xi) \hat{f}(\xi,t) \, d\xi \Bigr) ~.   \nonumber \\
  && := - I_1 + I_2 + I_3 \nonumber
\end{eqnarray}
The first of the three terms of the RHS is
\[
   I_1 = 2\nu \left(\frac{\int |\xi|^2 |\hat{\chi}_k(\xi)\hat{u}(\xi,t)|^p \, d\xi}
    {\int |\hat{\chi}_k(\xi)\hat{u}(\xi,t)|^p \, d\xi} \right)^{1-2/p} 
    \Bigl( \int |\xi|^2 |\hat{\chi}_k(\xi)\hat{u}(\xi,t)|^p \, d\xi
    \Bigr)^{2/p} ~,
\]
which is well-defined because of \eqref{Eqn:K-LocalEquivalence}, is
positive, and through a lower bound will give us the result of the 
theorem. The second term of the RHS is 
\begin{eqnarray*}
   && I_2 = \frac{2\sqrt{V}}{(2\pi)^3}
     \hbox{\rm im}\, \int |\hat{\chi}_k\hat{u}|^{p-2}
     (\overline{\hat{\chi}_k\hat{u}}) \, \hat{\chi}_k 
     \Bigl(\Pi_\xi \int \hat{u}(\xi - \xi_1,t)\cdot 
     \xi_1\otimes \hat{u}(\xi_1,t) \, d\xi_1 \Bigr) \, d\xi \\
   && \qquad \times \Bigl( \int |\hat{\chi}_k(\xi)\hat{u}(\xi,t)|^p 
      \, d\xi \Bigr)^{-1+2/p} ~, 
\end{eqnarray*}
for which one uses the H\"older inequality (with $(p-2)/p + 1/p + 1/p
= 1$) to obtain an upper bound;
\begin{eqnarray*}
  && |I_2| \leq \frac{2\sqrt{V}}{(2\pi)^3}
     \bigl( \int |\hat{\chi}_k\hat{u}|^p \, d\xi \bigr)^{(p-2)/p}
     \bigl( \int |\hat{\chi}_k\hat{u}|^p \, d\xi \bigr)^{1/p}  
     \bigl(\int |\hat{\chi}_k|^p \, d\xi \bigr)^{1/p}  \\ 
  && \qquad \times \Bigl| \Pi_\xi \int \hat{u}(\xi - \xi_1,t)\cdot 
     \xi_1\otimes \hat{u}(\xi_1,t) \, d\xi_1 \Bigr|_{L^\infty} 
       \Bigl( {\int |\hat{\chi}_k\hat{u}|^p \, d\xi}\Bigr)^{-1+2/p} ~.
\end{eqnarray*}
The third term of the RHS is
\begin{eqnarray*}
  && I_3 = 2\hbox{\rm re}\, \int |\hat{\chi}_k(\xi)\hat{u}(\xi,t)|^{p-2}
     \overline{\hat{\chi}_k(\xi)\hat{u}(\xi,t)} 
     \hat{\chi}_k(\xi) \hat{f}(\xi,t) \, d\xi    \\
  && \qquad\quad \times 
     \Bigl( {\int |\hat{\chi}_k\hat{u}|^p \, d\xi}\Bigr)^{-1+2/p} ~,
\end{eqnarray*}
which satisfies an estimate of similar form, namely
\begin{eqnarray*}
  && |I_3| \leq 2\bigl( \int |\hat{\chi}_k\hat{u}|^p \, d\xi \bigr)^{(p-2)/p}
     \bigl( \int |\hat{\chi}_k\hat{u}|^p \, d\xi \bigr)^{1/p} 
     \bigl( \int |\hat{\chi}_k\hat{f}|^p \, d\xi \bigr)^{1/p} 
     \Bigl( {\int |\hat{\chi}_k\hat{u}|^p \, d\xi}\Bigr)^{-1+2/p}  \\
  && \qquad = 2\bigl( \int |\hat{\chi}_k\hat{u}|^p \, d\xi \bigr)^{1/p}
      \bigl( \int |\hat{\chi}_k\hat{f}|^p \, d\xi \bigr)^{1/p} 
      ~.
\end{eqnarray*}
Integrating \eqref{Eqn:EstOnEP2} over the interval $[0,T]$,
\begin{eqnarray}\label{Eqn:IntegratedEstimateOnEP2}
  && 2 \int_0^T \nu \Bigl( \int |\xi|^2 |\hat{\chi}_k(\xi)\hat{u}(\xi,t)|^p 
    \, d\xi \Bigr)^{2/p} \, 
      \left(\frac{\int |\xi|^2 |\hat{\chi}_k(\xi)\hat{u}(\xi,t)|^p \, d\xi}
      {\int |\hat{\chi}_k(\xi)\hat{u}(\xi,t)|^p \, d\xi} \right)^{1-2/p}
    \, dt   \\
  && \quad \leq e_p^2(k,0) - e_p^2(k,T) + \int_0^T |I_2(t)| +
  |I_3(t)|\, dt ~.   \nonumber  
\end{eqnarray}
Because of the properties of $\hat{\chi}_k$, we have
$|k|/2 \leq |\xi| \leq 3|k|/2$ in the support of the integrand, 
and therefore
\[
     |k|^{2-4/p} \leq 
     \left(2 \frac{\int |\xi|^2 |\hat{\chi}_k(\xi)\hat{u}(\xi,t)|^p \, d\xi}
    {\int |\hat{\chi}_k(\xi)\hat{u}(\xi,t)|^p \, d\xi} \right)^{1-2/p}  ~,
\]
which means that the LHS of \eqref{Eqn:IntegratedEstimateOnEP2} gives
an upper bound for the quantity
\[
   \int_0^T \nu |k|^2 \Bigl( 
       \int |\hat{\chi}_k(\xi)\hat{u}(\xi,t)|^p 
       \, d\xi \Bigr)^{2/p} \, dt ~.
\]
Cancelling terms, one uses Cauchy -- Schwartz to estimate the 
two time integrals on the RHS of \eqref{Eqn:IntegratedEstimateOnEP2}. 
Using that 
\[
   \Bigl| \Pi_\xi \int \hat{u}(\xi - \xi_1,t)\cdot 
     \xi_1\otimes \hat{u}(\xi_1,t) \, d\xi_1 \Bigr|_{L^\infty} 
   \leq \| \hat{u}(\cdot,t)\|_{L^2} \| \xi \hat{u}(\cdot,t)\|_{L^2} ~,
\]
we estimate the first time integral as follows:
\begin{eqnarray*}
  && \int_0^T |I_2(t)| \, dt \leq \frac{2\sqrt{V}}{(2\pi)^3}
      \bigl(\int |\hat{\chi}_k|^p \, d\xi \bigr)^{1/p} 
      \Bigl( \int_0^T \bigl( \int |\hat{\chi}_k\hat{u}|^p 
         \, d\xi \bigr)^{2/p} \, dt \Bigr)^{1/2}   \\
  &&\qquad\quad \times \Bigl( \int_0^T \| \hat{u}(\cdot,t)\|_{L^2}^2 
         \| \xi \hat{u}(\cdot,t)\|_{L^2}^2 \, dt \Bigr)^{1/2} \\
  &&\quad \leq \frac{2\sqrt{V}}{(2\pi)^3} (2^3\delta^3)^{1/p}
       \Bigl( \int_0^T \bigl( \int |\hat{\chi}_k\hat{u}|^p 
         \, d\xi \bigr)^{2/p} \, dt \Bigr)^{1/2}  \\
  &&\qquad\quad \times \left( \frac{(2\pi)^3}{\sqrt{\nu}V} (\sup_{0\leq t \leq T}
       \|u(\cdot,t)\|_{L^2}) \bigl( \int_0^T \nu \|\nabla u(\cdot,t)\|_{L^2}^2 
      \, dt \bigr)^{1/2} \right) ~.
\end{eqnarray*}
We have used the Plancherel identity and its constant, as well as the 
fact that $\int|\hat{\chi}_k|^p \, d\xi \leq (2\delta)^3$. Thus
\[
   \int_0^T |I_2| \, dt \leq \frac{2 R^2(T)}{\nu\sqrt{V}} (2^3\delta^3)^{1/p}
     \Bigl( \int_0^T \nu \bigl( \int |\hat{\chi}_k\hat{u}|^p 
         \, d\xi \bigr)^{2/p} \, dt \Bigr)^{1/2} ~.
\] 
Under similar considerations, 
\begin{eqnarray*}
  && \int_0^T |I_3| \, dt \leq 2 \Bigl( \int_0^T
     \bigl( \int |\hat{\chi}_k\hat{u}|^p \, d\xi \bigr)^{2/p} 
     \, dt \Bigr)^{1/2}   
     \Bigl( \int_0^T
     \bigl( \int |\hat{\chi}_k\hat{f}|^p \, d\xi \bigr)^{2/p} 
     \, dt \Bigr)^{1/2}  ~.
\end{eqnarray*}
Now multiply the inequality \eqref{Eqn:IntegratedEstimateOnEP2} by
$|k|^2$ and use the above estimates with the fact that 
$|k|/2 \leq |\xi| \leq 3|k|/2$ on the support of $\hat{\chi}_k$. 
\begin{eqnarray}\label{Eqn:IntegratedInequalityOnEP2}
 &&  \int_0^T \nu |k|^4 \bigl( \int |\hat{\chi}_k\hat{u}|^p
      \, d\xi \bigr)^{2/p} \, dt 
      \leq |k|^2 e_p^2(0) \nonumber \\ 
  &&\qquad\quad  + 2 ((2\delta)^3)^{1/p} \frac{R^2(T)}{\nu\sqrt{V}} 
     \Bigl( \int_0^T \nu |k|^4 \bigl( \int |\hat{\chi}_k\hat{u}|^p 
         \, d\xi \bigr)^{2/p} \, dt \Bigr)^{1/2}   \\
  && \qquad\quad + \frac{2}{\sqrt{\nu}} \Bigl( \int_0^T
     \bigl( \int |\hat{\chi}_k\hat{f}|^p \, d\xi \bigr)^{2/p} 
     \, dt \Bigr)^{1/2} 
     \Bigl( \int_0^T \nu |k|^4
     \bigl( \int |\hat{\chi}_k\hat{u}|^p \, d\xi \bigr)^{2/p} 
     \, dt \Bigr)^{1/2}  ~.    \nonumber
\end{eqnarray}
From our hypotheses on the initial data we know that $|k|^2
e_p^2(0)\leq R_1^2(0)$. Defining 
$I_p^2(k,T) := \int_0^T \nu |k|^4 \bigl( \int |\hat{\chi}_k\hat{u}|^p
\, d\xi \bigr)^{2/p} \, dt$, the inequality 
\eqref{Eqn:IntegratedInequalityOnEP2}  states that
\begin{equation}
   I_p^2(k,T) \leq R_1^2(0) 
   + \Bigl( 2(2\delta)^{3/p} \frac{R^2(T)}{\nu\sqrt{V}}
     + \frac{2F_p(T)}{\sqrt{\nu}} \Bigr) I_p(k,T) ~,
\end{equation}
where we define $F_p(T) := ( \int_0^T
( \int |\hat{\chi}_k\hat{f}|^p \, d\xi )^{2/p} \, dt )^{1/2}$.  
As we have argued before, this implies that $I_p(k,t)$ cannot
exceed the largest positive root $R_{2,p}$ of the associate 
quadratic equation, resulting in the statement that 
\[
   I_p(k,t) \leq R_{2,p}(T)
\]
where the constant $R_{2,p}(T)$ is given by
\begin{equation}
   R_{2,p}(T) = \Half \Bigl(R_{5,p}(T) 
   + \sqrt{4R_1^2(0) + R_{5,p}^2(T)} \Bigr) ~, 
\end{equation}
where in turn
\[
   R_{5,p}(T) = 2(2\delta)^{3/p} \frac{R^2(T)}{\nu\sqrt{V}} 
      + \frac{2F_p(T)}{\sqrt{\nu}} ~, \qquad
   F_p^2(T) = \Bigl( \int_0^T (\int 
      |\hat{\chi}_k(\xi)\hat{f}(\xi,t)|^p \, d\xi )^{2/p}
        \, dt \Bigr) ~.
\]
The result of the theorem will follow by taking the limit of large 
$p \to +\infty$, recovering the estimate on
$|\hat{\chi}_k\hat{u}|_{L^\infty}$. \qquad 
\end{proof}

\section{Estimates of energy spectra}
\label{section3}

The {\it energy  spectral function} is the main concern of the present
paper. For the problem \eqref{Eqn:NavierStokes} posed on 
$D = {\mathbb R}^3$ this is defined by the spherical integrals 
\begin{equation}
   E(\kappa,t) = \int_{|k|=\kappa} |\hat{u}(k,t)|^2 \, dS(k)  ~,
\end{equation}
where $0 \leq \kappa < +\infty$ is the radial coordinate in Fourier
transform variables. When considering the 
case of a periodic domain $D = {\mathbb T}^3$ the Fourier transform 
is defined over the dual lattice, and therefore to avoid questions of
analytic number theory one defines the energy spectral function to be
a sum over Fourier space annuli of given thickness $a$;
\begin{equation}\label{Eqn:EnergySpectralFunction}
   E(\kappa,t) = \frac{1}{a} 
      \sum_{\kappa \leq |k| < \kappa + a} |\hat{u}(k,t)|^2 |\Gamma'| ~.
\end{equation}
The classical Sobolev space norms of the function $u$ can be defined
in terms of the energy spectral function, via the Plancherel identity. 
Indeed in the case $D={\mathbb R}^3$ the $L^2$ norm is given as
\begin{equation}
   \| u\|^2_{L^2} = \frac{V}{(2\pi)^3} 
       \int_0^{+\infty} E(\kappa) \, d\kappa ~,
\end{equation}
and the $H^r$ Sobolev norms are
\begin{equation}
   \| u\|^2_{H^r} = \frac{V}{(2\pi)^3} 
       \int_0^{+\infty} (\kappa^2 + 1)^r E(\kappa) \, d\kappa ~.
\end{equation}
Analogous definitions hold for the case $x \in {\mathbb T}^3$.

\subsection{Kolmogorov spectrum}\label{Section:KSpectrum}
There is considerable lore and a large literature on the behavior of
the spectral function, particularly for large Reynolds number flows,
the most well known statement being due to Kolmogorov. The prediction 
depends upon a parameter $\varepsilon$, which is interpreted 
as the {\it average rate of energy transfer per unit volume}. 
Assuming that a flow exhibiting fully developed
and isotropic turbulence has a regime of  wavenumbers over which 
$E(\kappa, \cdot)$ depends only upon $\varepsilon$ and $\kappa$,  
Kolmogorov's famous argument states that over an {\it inertial range}
$\kappa \in [\kappa_1, \kappa_2]$, 
\begin{equation}\label{Eqn:KolmogorovLaw}
   E(\kappa,\cdot) \sim C_0\varepsilon^{2/3}\kappa^{-5/3} ~,  
\end{equation}
for a universal constant $C_0$. His reasoning is through a 
dimensional analysis. The actual history of this 
prediction, which is well documented in \cite{Frisch1995} among
other references, includes a number of statements of Kolmogorov
as to the small scale structure of the fluctuations in a turbulent
flow \cite{K41a,K41c,K41b}, and an interpretation of his results by Obukhov 
\cite{Obukhov1941} in terms of the Fourier transform, as is stated 
in \eqref{Eqn:KolmogorovLaw}\footnote{In fact Obukhov formulated an
  integral version of this result, which he called the `two-thirds law
  of energy distribution'.}. 
Some of the issues surrounding this
statement are whether the Kolmogorov scaling law
\eqref{Eqn:KolmogorovLaw} should hold for an individual flow  
at every instant in time, whether it should hold on time average,
or whether it is a statement for the average behavior for a 
statistical ensemble of flows with the probability measure for 
this ensemble being given by some natural invariant measure for 
solutions of the Navier -- Stokes  equations. The bounds given
below have implications on the energy spectral function in all of
these cases. 
 
\subsection{Bounds on energy spectra}
The estimates given in section \ref{section2} on the Fourier transform 
of solutions translate into estimates on the energy spectral function
for such solutions. Bounds which are pointwise in time are given in
the following theorem. 

\begin{theorem}\label{Thm:four}
Suppose that $f=0$ and that the initial data satisfies 
$u_0 \in A_{R_1}\cap B_R(0)$, where $R$ and $R_1$ satisfy 
\eqref{Eq:EstimateOnA}. Then for all $\kappa$ and all times 
$t$, 
\begin{equation}\label{Eqn:UpperBound-NoForce}
   E(\kappa,t) \leq 4\pi R_1^2 ~.
\end{equation}
In the case of non-zero forcing $f$, then there is a finite but
possibly growing upper bound given by
\begin{equation}\label{Eqn:UpperBound-Force}
  E(\kappa,t) \leq 4\pi R_1^2(t) 
\end{equation} 
\end{theorem}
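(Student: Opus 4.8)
The plan is to deduce the pointwise-in-time bound on $E(\kappa,t)$ directly from the $L^\infty$ Fourier estimates already established, namely Theorem~\ref{Thm:one} in the force-free case and Theorem~\ref{Thm:two} in the forced case. The starting observation is that the hypotheses of Theorem~\ref{Thm:four} are precisely the hypotheses under which those earlier theorems apply: when $f=0$ and $u_0\in A_{R_1}\cap B_R(0)$ with \eqref{Eq:EstimateOnA} holding, Theorem~\ref{Thm:one} gives $|\hat u(k,t)|\le R_1/|k|$ for all $k\in\Gamma'$ and all $t>0$; and in the forced case, choosing $R_1(t)$ nondecreasing so that \eqref{Eq:EstimateOnAandF} holds, Theorem~\ref{Thm:two} gives $|\hat u(k,t)|\le R_1(t)/|k|$.

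The main step is then to feed this bound into the definition of the energy spectral function. In the torus case, using \eqref{Eqn:EnergySpectralFunction}, one writes
\[
  E(\kappa,t) = \frac{1}{a}\sum_{\kappa\le|k|<\kappa+a}|\hat u(k,t)|^2\,|\Gamma'|
  \le \frac{1}{a}\sum_{\kappa\le|k|<\kappa+a}\frac{R_1^2}{|k|^2}\,|\Gamma'| ,
\]
and one must show the right-hand side is at most $4\pi R_1^2$ uniformly in $\kappa$ and in the annulus thickness $a$. This reduces to the elementary geometric fact that $\sum_{\kappa\le|k|<\kappa+a}|k|^{-2}|\Gamma'|$ is a Riemann-sum approximation to $\int_{\kappa\le|\xi|<\kappa+a}|\xi|^{-2}\,d\xi = 4\pi\big((\kappa+a)-\kappa\big) = 4\pi a$, so that after dividing by $a$ one recovers the clean constant $4\pi$. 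In the continuum case $D=\mathbb R^3$ one argues even more directly: on the sphere $|k|=\kappa$ the bound $|\hat u(k,t)|^2\le R_1^2/\kappa^2$ holds pointwise (interpreting $\hat u$ via the filtered quantities $\hat\chi_k(\xi)\hat u(\xi,t)$ as in Theorems~\ref{Thm:ContinuumL-inftyEstimate}), and integrating over the sphere of surface area $4\pi\kappa^2$ gives $E(\kappa,t)\le 4\pi\kappa^2\cdot R_1^2/\kappa^2 = 4\pi R_1^2$. The forced case is identical with $R_1$ replaced by $R_1(t)$, yielding \eqref{Eqn:UpperBound-Force}.

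The main obstacle, such as it is, is bookkeeping rather than substance: one must be careful that the pointwise Fourier bound is genuinely uniform over the whole annulus $\kappa\le|k|<\kappa+a$ (so that the worst case $|k|=\kappa$ can be used) and that the annulus-sum estimate does not pick up a $\kappa$-dependent constant — this is why the factor $1/a$ in the definition \eqref{Eqn:EnergySpectralFunction} is essential, and one should remark that the bound is in fact independent of $a$. A secondary subtlety in the continuum setting is that $E(\kappa,t)$ is a spherical integral of a Hilbert-space object whose pointwise values are not a priori defined; here one should invoke the filtered formulation, noting that for $\xi$ near the sphere $|\xi|=\kappa$ one may take the cutoff scale $\delta$ small and pass to the $L^\infty$ bound of Theorem~\ref{Thm:ContinuumL-inftyEstimate}, which controls $|\hat\chi_k(\xi)\hat u(\xi,t)|$ and hence, in the limit, the spherical density. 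No new estimate is required beyond those already proved in Section~\ref{section2}.
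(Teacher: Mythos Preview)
Your proposal is correct and follows essentially the same approach as the paper: apply the pointwise Fourier bound $|\hat u(k,t)|\le R_1/|k|$ from Theorems~\ref{Thm:one}--\ref{Thm:two}, insert it into the definition \eqref{Eqn:EnergySpectralFunction}, and use that the annulus $\kappa\le|k|<\kappa+a$ contains roughly $4\pi\kappa^2 a/|\Gamma'|$ lattice points so that the factors of $\kappa^2$, $a$, and $|\Gamma'|$ cancel to leave $4\pi R_1^2$. Your treatment of the continuum case via the filtered quantities is more explicit than the paper's, which simply remarks that the case $D=\mathbb{R}^3$ is similar.
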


Bounds which concern the time average of the energy spectral function
are derived from Theorem \ref{Thm:TimeAverageEnergySpectrum}. 

\begin{theorem}\label{Thm:five}
Again suppose that the initial data satisfies $u_0 \in A_{R_1}\cap B_R(0)$, 
where $R$ and $R_1$ satisfy \eqref{Eq:EstimateOnA}, and the force
$f \in L^\infty_{loc}([0,+\infty); H^{-1}(D)\cap L^2(D))$ is bounded
as in \eqref{Eqn:ContinuumBoundednessRelation}\eqref{Eqn:DefnOfR5}. 
Then for every $T$ the energy spectral function satisfies
\begin{equation}\label{Eqn:EnergySpectrumDecay}
   \frac{1}{T} \int_0^T E(\kappa,t) \, dt 
   \leq \frac{4\pi R_2^2(T)}{\nu T}\frac{1}{\kappa^2} ~.
\end{equation}
\end{theorem}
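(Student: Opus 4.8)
The plan is to deduce Theorem~\ref{Thm:five} directly from the pointwise-in-$k$ time-average estimate \eqref{Eqn:EstimateTimeIntegrals} of Theorem~\ref{Thm:TimeAverageEnergySpectrum}, simply by integrating over Fourier-space spheres (or summing over annuli, in the periodic case) and using Fubini's theorem. First I would recall that Theorem~\ref{Thm:TimeAverageEnergySpectrum}, under the stated hypotheses $u_0 \in A_{R_1}$ and with the energy bounds \eqref{Eqn:EnergyInequalityUpperBounds}, gives for every $k$ and every $T$
\[
   \int_0^T |\hat{u}(k,t)|^2 \, dt \leq \frac{R_2^2(T)}{\nu |k|^4} ~,
\]
with $R_2(T)$ independent of $k$ (using $\sup_k F_1(k,T) = F_\infty(T) < +\infty$, which is guaranteed by the boundedness assumption on $f$ in the hypotheses). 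I note that in the forced case one must check that the hypotheses here — phrased via \eqref{Eqn:ContinuumBoundednessRelation}\eqref{Eqn:DefnOfR5} — indeed supply the quantities $R(t)$, $R_1(t)$, $F_\infty(T)$ needed to invoke Theorem~\ref{Thm:TimeAverageEnergySpectrum}; this is essentially a bookkeeping matter of matching the $\mathbb{R}^3$ and torus versions of the constants.

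Next, in the continuum case $D = {\mathbb R}^3$, I would write
\[
   \frac{1}{T}\int_0^T E(\kappa,t) \, dt
   = \frac{1}{T}\int_0^T \int_{|k|=\kappa} |\hat{u}(k,t)|^2 \, dS(k) \, dt ~,
\]
interchange the order of integration by Tonelli (the integrand is nonnegative and measurable), and then on the sphere $|k| = \kappa$ apply the pointwise bound with $|k| = \kappa$, obtaining
\[
   \frac{1}{T}\int_{|k|=\kappa}\Bigl(\int_0^T |\hat{u}(k,t)|^2 \, dt\Bigr) dS(k)
   \leq \frac{1}{T}\int_{|k|=\kappa} \frac{R_2^2(T)}{\nu \kappa^4}\, dS(k)
   = \frac{1}{T}\cdot 4\pi\kappa^2 \cdot \frac{R_2^2(T)}{\nu\kappa^4}
   = \frac{4\pi R_2^2(T)}{\nu T \kappa^2} ~,
\]
since the surface area of the sphere of radius $\kappa$ is $4\pi\kappa^2$. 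This is exactly \eqref{Eqn:EnergySpectrumDecay}. For the periodic case $D = {\mathbb T}^3$, the identical argument runs with the annular sum in \eqref{Eqn:EnergySpectralFunction}: for $\kappa \leq |k| < \kappa + a$ one has $|k|^{-4} \leq \kappa^{-4}$, and the factor $\frac{1}{a}\sum_{\kappa \leq |k| < \kappa + a} |\Gamma'|$ is (up to lower-order corrections in $a$) comparable to the spherical area $4\pi\kappa^2$, or one simply absorbs it into the statement; the essential inequality $E(\kappa,t) \leq \frac{1}{a}\sum |\hat u(k,t)|^2|\Gamma'|$ integrated in time gives the same bound.

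I do not expect a genuine obstacle here — the theorem is a corollary of Theorem~\ref{Thm:TimeAverageEnergySpectrum} and the geometry of spheres. The only point requiring mild care is the uniformity of $R_2(T)$ in $k$: one must invoke the hypothesis that the forcing is bounded (so that $F_\infty(T) < +\infty$ and hence $R_2(T)$ is genuinely $k$-independent), otherwise the right-hand side would carry an unwanted $k$-dependence and the clean $\kappa^{-2}$ decay would be lost. A secondary bookkeeping step is confirming that the boundedness relation \eqref{Eqn:ContinuumBoundednessRelation} stated in the hypotheses implies the corresponding hypothesis \eqref{Eqn:EnergyInequalityUpperBounds} together with $u_0 \in A_{R_1}$ needed by Theorem~\ref{Thm:TimeAverageEnergySpectrum}; this follows from the earlier discussion relating $R^2(T)$, $F^2(T)$, and the energy inequality, and from the definition of $A_{R_1}$.
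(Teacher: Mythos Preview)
Your proposal is correct and follows essentially the same approach as the paper: swap the order of integration by Tonelli, apply the pointwise-in-$k$ time-integral bound from Theorem~\ref{Thm:TimeAverageEnergySpectrum}, and multiply by the spherical area $4\pi\kappa^2$ (with the analogous annular sum in the periodic case). The paper's proof is in fact even more terse than yours, omitting the bookkeeping remarks you added about the $k$-uniformity of $R_2(T)$.
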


In particular, under the hypotheses of Theorems~\ref{Thm:one}, 
\ref{Thm:two}, and \ref{Thm:TimeAverageEnergySpectrum}, 
the energy spectrum must decay with an upper bound of order 
${\mathcal O}(\kappa^{-2})$ for every $T$. In case that the solution 
is such that $\limsup_{T\to +\infty}R_2^2(T)/T$ is finite, then
the time average behavior in \eqref{Eqn:EnergySpectrumDecay}
has an upper bound which is uniform in $T$. In any case,
this is evidently faster than the 
Kolmogorov power law \eqref{Eqn:KolmogorovLaw} and thus merits a
further discussion. 

\begin{proof}[Proof of Theorems \ref{Thm:four} and \ref{Thm:five}]
In the case of spatially periodic solutions, the definition of the
energy spectral function gives that
\begin{eqnarray*}
   E(\kappa,t) & = & \frac{1}{a} 
     \sum_{\kappa \leq |k| < \kappa + a} |\hat{u}(k,t)|^2 |\Gamma'|  \\
   & \leq & \frac{1}{a} \sum_{\kappa \leq |k| < \kappa + a} 
          \frac{R_1^2}{|k|^2} |\Gamma'|     \\
   & \leq & \frac{1}{a} \frac{4\pi\kappa^2 a}{|\Gamma'|} 
            \frac{R_1^2}{\kappa^2} |\Gamma'| 
      \leq 4\pi R_1^2(t) ~. 
\end{eqnarray*}
We have used that the lattice point density of $\Gamma'$ 
is $|\Gamma'|^{-1}$. The inequalities of Theorem~\ref{Thm:four}
follow. In the case in which the spatial domain $D = {\mathbb R}^3$, 
the proof is similar. 

To prove Theorem~\ref{Thm:five}, consider first the case of
$D = {\mathbb R}^3$, where
\begin{eqnarray*}
   \frac{1}{T} \int_0^T E(\kappa,t) \, dt & = & 
     \int_{|k|=\kappa} \Bigl( \frac{1}{T} \int_0^T 
     |\hat{u}(k,t)|^2 \, dt \Bigr) dS(k)   \\
   & \leq & 4\pi \kappa^2 \Bigl( \frac{R_2^2(T)}{\nu T\kappa^4} \Bigr) ~.
\end{eqnarray*}
This is the stated estimate. The periodic case is similar. \quad
\end{proof}

\subsection{Estimates on the inertial range}\label{SubSection3.3}
The two theorems \ref{Thm:four} and \ref{Thm:five} have implications
on the inertial range of a solution of \eqref{Eqn:NavierStokes}. 
In particular the inequalities 
\eqref{Eqn:UpperBound-NoForce}\eqref{Eqn:UpperBound-Force}  give 
uniform upper bounds for $E(\kappa,t)$, while 
\eqref{Eqn:EnergySpectrumDecay} estimates its time averages 
from above with a decay rate $C\kappa^{-2}$. For direct comparison 
we define the idealized Kolmogorov energy spectral function with
parameter $\varepsilon$ to be 
\begin{equation}\label{Eqn:IdealizedKolmogorovEnergySpectrum}
   E_K(\kappa) = C_0\varepsilon^{2/3}\kappa^{-5/3} ~; 
\end{equation}
this is to be considered to be stationary in time so that it also
represents the idealized time average. These bounds and the idealized 
energy spectral function are illustrated in Figure~\ref{Fig:1}. 
The first constraint implied by 
\eqref{Eqn:UpperBound-NoForce}\eqref{Eqn:UpperBound-Force} and 
\eqref{Eqn:EnergySpectrumDecay} is that a spectral regime with 
parameter $\varepsilon$ is incompatible with the situation in 
which $E_K(\kappa)$ lies entirely above the permitted set 
$S := \{ E \leq 4\pi R_1^2\} \cap 
\{ E \leq 4\pi R_2^2(T)/\nu\kappa^2 T \}$. 

\begin{figure}[htb]
  \begin{center}
  \epsfig{file=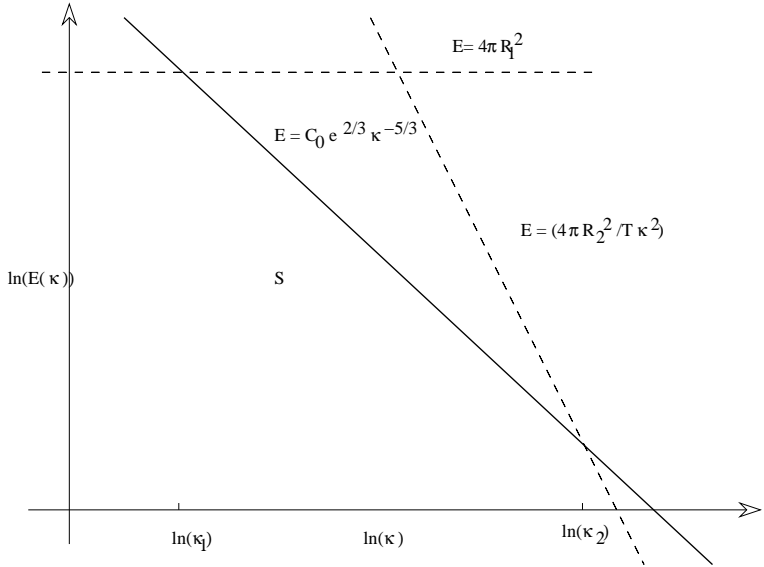}    
  \end{center}
  \caption{The accessible set $S$ and a spectral function $E_K(\kappa)$.}
  \label{Fig:1}
\end{figure}

\begin{proposition}
In order that the graph of $E_K(\kappa)$ intersect the set $S$, the
parameters must satisfy the relation
\begin{equation}\label{Eqn:GraphIntersection}
   \nu^{5/6} C_0\varepsilon^{2/3} \leq 
   4\pi \left(\frac{R_2(T)}{\sqrt{T}}\right)^{5/3} R_1^{1/3}(T) ~.
\end{equation}
\end{proposition}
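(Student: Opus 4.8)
The plan is to exploit that $E_K(\kappa)=C_0\varepsilon^{2/3}\kappa^{-5/3}$ is a strictly decreasing function of $\kappa$ running from $+\infty$ down to $0$, so that its graph meets the accessible region $S$ exactly on an interval of wavenumbers; the relation \eqref{Eqn:GraphIntersection} is then precisely the condition that this interval be nonempty. First I would record that the point $(\kappa,E_K(\kappa))$ lies in $S$ if and only if the two inequalities $C_0\varepsilon^{2/3}\kappa^{-5/3}\le 4\pi R_1^2(T)$ and $C_0\varepsilon^{2/3}\kappa^{-5/3}\le 4\pi R_2^2(T)/(\nu\kappa^2 T)$ both hold, these being the defining inequalities of the two regions whose intersection is $S$.

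The first inequality is a constant barrier; since $E_K$ is decreasing it holds precisely for $\kappa\ge\kappa_a$, where
$$\kappa_a:=\Bigl(\frac{C_0\varepsilon^{2/3}}{4\pi R_1^2(T)}\Bigr)^{3/5}.$$
For the second, multiply through by $\kappa^2$ to obtain the equivalent form $C_0\varepsilon^{2/3}\kappa^{1/3}\le 4\pi R_2^2(T)/(\nu T)$; since the $\kappa^{-2}$ time-average barrier decays faster than the $\kappa^{-5/3}$ of $E_K$, this is a genuine \emph{upper} cutoff, holding precisely for $\kappa\le\kappa_b$ with
$$\kappa_b:=\Bigl(\frac{4\pi R_2^2(T)}{\nu T\,C_0\varepsilon^{2/3}}\Bigr)^{3}.$$
Hence the solution sets of the two inequalities are the half-lines $[\kappa_a,+\infty)$ and $(0,\kappa_b]$, and their intersection is nonempty exactly when $\kappa_a\le\kappa_b$.

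It then remains to unwind $\kappa_a\le\kappa_b$ into the stated form. Raising the inequality to the power $5/3$ clears the exponent on $\kappa_a$; multiplying by $(C_0\varepsilon^{2/3})^5$ and collecting powers of $C_0\varepsilon^{2/3}$ yields $(C_0\varepsilon^{2/3})^6\le(4\pi)^6 R_1^2(T)R_2^{10}(T)/(\nu T)^5$; taking sixth roots and rewriting $R_2^{5/3}(T)/T^{5/6}=(R_2(T)/\sqrt{T})^{5/3}$ gives precisely \eqref{Eqn:GraphIntersection}. There is no real obstacle here beyond the bookkeeping: the argument is purely a comparison of the three power laws $\kappa^{-5/3}$, $\kappa^{0}$, and $\kappa^{-2}$, and the only point deserving explicit mention is the monotonicity used to turn each inequality into a half-line, which is exactly what makes nonemptiness of the intersection equivalent to the single scalar inequality $\kappa_a\le\kappa_b$.
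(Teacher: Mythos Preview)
Your argument is correct and is essentially the approach the paper has in mind: it merely states that ``the proof is elementary,'' and the quantities you call $\kappa_a$ and $\kappa_b$ are exactly the $\overline{\kappa}_1$ and $\overline{\kappa}_2$ introduced in the subsequent propositions, so the condition $\kappa_a\le\kappa_b$ is precisely the comparison the paper intends. Your algebraic reduction to \eqref{Eqn:GraphIntersection} is clean and matches.
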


The proof is elementary. This gives an upper bound on the parameter 
$\varepsilon$, in fact on the quantity
$C_0\nu^{5/6}\varepsilon^{2/3}$, in terms of quantities that are
determined by the initial data and the inhomogeneous forces. In the
setting of statistically stationary solutions, 
$R_2^2(T)/T \leq \overline{R}_2^2$, where $\overline{R}_2$ and 
$R_1$ are constant. In order that a spectral regime is achieved, 
the relation 
\[
    C_0 \nu^{5/6} \varepsilon^{2/3} \leq 4\pi
     \overline{R}_2^{5/3}R_1^{1/3} ~
\]
must hold. This constrains the values of the parameter $\varepsilon$
for any solution regime that exhibits spectral behavior.

We now take up the question of the endpoints of the inertial range
$[\kappa_1,\kappa_2]$, assuming a given value of $\varepsilon$. We 
will produce an interval $[\overline{\kappa}_1,\overline{\kappa}_2]$
such that upper and lower limits of the inertial range, respectively 
$\kappa_1$ and $\kappa_2$ must necessarily satisfy 
$\overline{\kappa}_1 \leq \kappa_1 \leq \kappa_2 \leq \overline{\kappa}_2$.
First of all, the function $E_K(\kappa)$ will violate the estimate 
\eqref{Eqn:UpperBound-NoForce} (if the force is not present) 
or \eqref{Eqn:UpperBound-Force} (when there is a force) unless 
$\kappa \geq \overline{\kappa}_1$, where 
\begin{equation}\label{Eqn:DefinitionKappa1}
   C_0\varepsilon^{2/3}\overline{\kappa}_1^{-5/3} = 4\pi R_1^2 ~, 
\end{equation}
which gives a bound from below for the lower endpoint of the inertial
range. 

\begin{proposition}
An absolute lower bound for the inertial range is given by
\begin{equation}\label{Eqn:BoundsOnKappa1}
   \overline{\kappa}_1 = \frac{C_0^{3/5}\varepsilon^{2/5}}{(4\pi R_1^2)^{3/5}} ~.
\end{equation}
\end{proposition}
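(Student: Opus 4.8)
The plan is to combine the pointwise upper bound of Theorem~\ref{Thm:four} with the monotonicity of the idealized Kolmogorov profile \eqref{Eqn:IdealizedKolmogorovEnergySpectrum}. First I would note that $E_K(\kappa) = C_0\varepsilon^{2/3}\kappa^{-5/3}$ is strictly decreasing in $\kappa$, while Theorem~\ref{Thm:four} asserts that any Leray weak solution arising from data in $A_{R_1}\cap B_R(0)$ satisfies $E(\kappa,t)\le 4\pi R_1^2$ for all $\kappa$ and $t$ (and $E(\kappa,t)\le 4\pi R_1^2(t)$ in the forced case). Consequently, at any wavenumber where a solution is to reproduce the Kolmogorov law, $E(\kappa,\cdot)\sim E_K(\kappa)$, one must have $E_K(\kappa)\le 4\pi R_1^2$; since $E_K$ is decreasing this is precisely the condition $\kappa\ge\overline{\kappa}_1$, where $\overline{\kappa}_1$ is the unique root of the equality case \eqref{Eqn:DefinitionKappa1}, namely $C_0\varepsilon^{2/3}\overline{\kappa}_1^{-5/3}=4\pi R_1^2$.

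The remaining step is a one-line algebraic inversion of this defining relation. Isolating the power of $\overline{\kappa}_1$ gives $\overline{\kappa}_1^{-5/3}=4\pi R_1^2/(C_0\varepsilon^{2/3})$, hence $\overline{\kappa}_1^{5/3}=C_0\varepsilon^{2/3}/(4\pi R_1^2)$, and raising to the power $3/5$ yields
\[
   \overline{\kappa}_1 = \Bigl(\frac{C_0\varepsilon^{2/3}}{4\pi R_1^2}\Bigr)^{3/5}
   = \frac{C_0^{3/5}\varepsilon^{2/5}}{(4\pi R_1^2)^{3/5}} ~,
\]
which is the asserted formula \eqref{Eqn:BoundsOnKappa1}. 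Since any admissible inertial range $[\kappa_1,\kappa_2]$ must lie in the region where the bound of Theorem~\ref{Thm:four} is not violated, this forces $\kappa_1\ge\overline{\kappa}_1$, so $\overline{\kappa}_1$ is indeed an absolute lower bound for the inertial range.

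There is no genuine obstacle here: the only content is the inversion above together with the observation that the bound $4\pi R_1^2$ of Theorem~\ref{Thm:four} is uniform in $\kappa$ and $t$, so it constrains the spectral regime at every wavenumber simultaneously. As a consistency check I would verify that the formula is dimensionally correct --- with $\varepsilon$ carrying units $L^2/T^3$ and $R_1$ carrying units $L^{3/2}/T$, one has $\varepsilon^{2/5}(R_1^2)^{-3/5}\sim L^{-1}$, the units of a wavenumber --- which also confirms that the stated endpoint is the dimensionally appropriate one predicted by the same scaling analysis that produces \eqref{Eqn:KolmogorovLaw} itself.
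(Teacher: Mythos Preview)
Your proposal is correct and follows essentially the same reasoning as the paper: the text immediately preceding the proposition observes that $E_K(\kappa)$ would violate the uniform bound $E(\kappa,t)\le 4\pi R_1^2$ of Theorem~\ref{Thm:four} unless $\kappa\ge\overline{\kappa}_1$, with $\overline{\kappa}_1$ defined by the equality \eqref{Eqn:DefinitionKappa1}, and the formula \eqref{Eqn:BoundsOnKappa1} is just the algebraic inversion you carry out. Your dimensional check also mirrors the remark the paper makes right after the proposition.
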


It is an amusing exercise to check that the RHS has the appropriate
units of $L^{-1}$, for which we note that the units of $\varepsilon$ 
are $L^2/T^3$. In the case of a nonzero forcing, $R_1(t)$ may be
increasing, in which case $\overline{\kappa}_1(t)$ would decrease. In the case
of bounded forces, $R_1(t)$ may increase linearly in $t$, implying 
that $\overline{\kappa}_1(t) \sim t^{-6/5}$. For a statistically stationary
solution as described above, $R$ and $R_1$, and therefore $\overline{\kappa}_1$
are constant. 

The upper bound for the inertial range comes from comparing time
averages of $E_K$ with the upper bound \eqref{Eqn:EnergySpectrumDecay}. 
Indeed,
\begin{equation}\label{Eqn:UpperBoundForKappa2}
   \frac{1}{T} \int_0^T E_K(\kappa) \, dt 
  = C_0 \varepsilon^{2/3}\kappa^{-5/3}
  \leq \frac{4\pi R_2^2(T)}{\nu T \kappa^2} ~.
\end{equation}

\begin{proposition}
The inequality \eqref{Eqn:UpperBoundForKappa2} holds only over an 
interval of $\kappa$ bounded above by 
\begin{equation}\label{Eqn:BoundsOnKappa2}
   \overline{\kappa}_2 =
   \frac{(4\pi)^3}{(C_0\nu)^3}\frac{1}{\varepsilon^2}\frac{R_2^6(T)}{T^3}
   ~. 
\end{equation}
\end{proposition}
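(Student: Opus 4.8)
The plan is simply to solve the inequality \eqref{Eqn:UpperBoundForKappa2} for $\kappa$. Since $E_K(\kappa)$ is taken to be stationary, its time average equals $E_K(\kappa)$ itself, so \eqref{Eqn:UpperBoundForKappa2} reduces to the pointwise-in-$\kappa$ comparison
\[
   C_0\varepsilon^{2/3}\kappa^{-5/3} \leq \frac{4\pi R_2^2(T)}{\nu T \kappa^2} ~.
\]
Multiplying both sides by $\kappa^2>0$ preserves the inequality and gives $C_0\varepsilon^{2/3}\kappa^{1/3} \leq 4\pi R_2^2(T)/(\nu T)$; the only structural point is that the left-hand negative power of $\kappa$ is weaker than the $\kappa^{-2}$ decay on the right, so the inequality must fail once $\kappa$ is large enough.

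Next I would isolate $\kappa$. Dividing by $C_0\varepsilon^{2/3}>0$ yields
\[
   \kappa^{1/3} \leq \frac{4\pi R_2^2(T)}{C_0\,\nu\, T\,\varepsilon^{2/3}} ~,
\]
and cubing — a monotone map on $[0,\infty)$ — produces
\[
   \kappa \leq \frac{(4\pi)^3 R_2^6(T)}{(C_0\nu)^3\,\varepsilon^2\, T^3} = \overline{\kappa}_2 ~,
\]
which is exactly \eqref{Eqn:BoundsOnKappa2}. Hence the Kolmogorov law can be consistent with the time-average bound \eqref{Eqn:EnergySpectrumDecay} only for $\kappa \leq \overline{\kappa}_2$, so the upper endpoint $\kappa_2$ of any putative inertial range must obey $\kappa_2 \leq \overline{\kappa}_2$.

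There is no genuine obstacle here: the argument is a one-line manipulation, and all of the substance sits in the two earlier theorems that produced \eqref{Eqn:EnergySpectrumDecay}. I would, however, append two sanity checks. First, dimensional consistency: with $[\varepsilon]=L^2/T^3$, $[R_2]=L^{3/2}/T$, $[\nu]=L^2/T$ and $C_0$ dimensionless, the right side of \eqref{Eqn:BoundsOnKappa2} carries dimension $L^{-1}$, as a wavenumber must. Second, the behavior in $T$: when $f=0$ the constant $R_2$ is time-independent, so $\overline{\kappa}_2(T)$ decays like $T^{-3}$ and the admissible inertial range shrinks as $T$ grows, which meshes with the companion statement that there is a finite horizon $T_0$ for the spectral regime; in the statistically stationary regime, where $R_2^2(T)/T \to \overline{R}_2^2$, the bound tends to the constant $(4\pi)^3\overline{R}_2^6/((C_0\nu)^3\varepsilon^2)$.
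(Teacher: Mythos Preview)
Your argument is correct and is exactly the elementary algebraic manipulation the paper has in mind; the paper gives no explicit proof of this proposition, treating it (like the companion proposition for $\overline{\kappa}_1$) as an immediate consequence of solving $C_0\varepsilon^{2/3}\kappa^{-5/3} \leq 4\pi R_2^2(T)/(\nu T\kappa^2)$ for $\kappa$. Your dimensional and asymptotic sanity checks are also consistent with the paper's subsequent discussion.
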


It is again amusing to check that the RHS has units of $L^{-1}$,
noting that $\nu$ has units of $L^2/T$. When there is a force 
present, the constant $R_2^2(T)$ may grow in $T$. When  
$f \in L^\infty({\mathbb R}^+; H^{-1}\cap L^2)$ the constant
$R_2(T)$ grows at most linearly in $T$. When considering the
case of a bounded and statistically stationary forcing term, for
example, the ratio $R_2^2(T)/T$ is expected to have a limit as 
$T$ grows large, $\lim_{T\to +\infty}R_2^2(T)/T = \overline{R}_2^2$,
which gives rise to a fixed upper bound for $\overline{\kappa}_2$. 
Indeed in any case in which the constant 
$\overline{R}_2^2 := \limsup_{T\to +\infty}(R_2^2(T)/T)$ is finite,
this argument gives an upper bound for $\overline{\kappa}_2$.  

However with no force present, or with a force which decays 
in time, then $R_2^2(T)$ will be bounded, or may grow sublinearly, which
results in the bound for $\overline{\kappa}_2 = \overline{\kappa}_2(T)$ 
which is decreasing in time. Supposing that at some time $T_0$ 
we have that for $T > T_0$ then 
$\overline{\kappa}_2(T) \leq \overline{\kappa}_1$, implying that the interval
consisting of the inertial range is necessarily empty. The explicit 
bound for $T_0$ in the case of no force present is as follows.

\begin{proposition}\label{Thm:Prop6}
Suppose that the force $f=0$, so that $R_1$ and $R_2$ are constant 
in time. Then $\overline{\kappa}_2(T) \leq \overline{\kappa}_1$ 
for all $T \geq T_0$, where
\begin{equation}\label{Eqn:BoundsOnT0}
    T_0 =
    \frac{(4\pi)^{6/5}R_1^{2/5}R_2^2}{\varepsilon^{4/5}C_0^{6/5}\nu} ~.
\end{equation}
\end{proposition}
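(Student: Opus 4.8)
The plan is to set $\overline{\kappa}_2(T) = \overline{\kappa}_1$ and solve for the critical time $T_0$, since both $\overline{\kappa}_1$ and $\overline{\kappa}_2(T)$ are already given explicitly in \eqref{Eqn:BoundsOnKappa1} and \eqref{Eqn:BoundsOnKappa2}, and in the force-free case $R_1$, $R_2$ are constants while $\overline{\kappa}_2(T) = (4\pi)^3 R_2^6 / \bigl((C_0\nu)^3 \varepsilon^2 T^3\bigr)$ is strictly decreasing in $T$. Thus $\overline{\kappa}_2(T) \leq \overline{\kappa}_1$ precisely when $T \geq T_0$ for the unique $T_0$ at which equality holds, and the entire content of the proposition is the algebraic identification of $T_0$.

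First I would write the equation $\overline{\kappa}_2(T_0) = \overline{\kappa}_1$ explicitly:
\begin{equation}
   \frac{(4\pi)^3}{(C_0\nu)^3}\frac{1}{\varepsilon^2}\frac{R_2^6}{T_0^3}
   = \frac{C_0^{3/5}\varepsilon^{2/5}}{(4\pi R_1^2)^{3/5}} ~.
\end{equation}
Then I would solve for $T_0^3$ by cross-multiplying:
\begin{equation}
   T_0^3 = \frac{(4\pi)^3 (4\pi R_1^2)^{3/5} R_2^6}
     {(C_0\nu)^3 \varepsilon^2 \, C_0^{3/5}\varepsilon^{2/5}}
   = \frac{(4\pi)^{18/5} R_1^{6/5} R_2^6}
     {C_0^{18/5} \nu^3 \varepsilon^{12/5}} ~.
\end{equation}
Taking cube roots gives $T_0 = (4\pi)^{6/5} R_1^{2/5} R_2^2 / \bigl(C_0^{6/5}\nu\,\varepsilon^{4/5}\bigr)$, which is exactly \eqref{Eqn:BoundsOnT0}.

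The remaining point to record is monotonicity: since $R_1, R_2$ are constants when $f = 0$, the map $T \mapsto \overline{\kappa}_2(T)$ is strictly decreasing (it is $\propto T^{-3}$), so $\overline{\kappa}_2(T) \leq \overline{\kappa}_2(T_0) = \overline{\kappa}_1$ for all $T \geq T_0$, giving the claimed conclusion that the inertial range $[\kappa_1,\kappa_2] \subseteq [\overline{\kappa}_1,\overline{\kappa}_2(T)]$ must be empty for $T \geq T_0$.

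There is no genuine obstacle here — the proposition is purely a matter of solving a power-law equation, exactly as the paper's phrase ``the proof is elementary'' for the companion propositions suggests. The only care needed is bookkeeping with the fractional exponents ($3/5$ appearing from $\overline{\kappa}_1$ and the $6$th power of $R_2$ from $\overline{\kappa}_2$ combining to give the $2/5$ and $2$ exponents in $T_0$), and one may optionally remark that the dimensional check works out: $[R_1] = L^{3/2}/T$, $[R_2] = L^{3/2}/T$, $[\nu] = L^2/T$, $[\varepsilon] = L^2/T^3$, so $[R_1^{2/5} R_2^2 / (\nu \varepsilon^{4/5})] = (L^{3/2}/T)^{2/5}(L^{3/2}/T)^2 / \bigl((L^2/T)(L^2/T^3)^{4/5}\bigr)$, which one checks reduces to units of $T$ as required.
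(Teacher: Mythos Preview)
Your proof is correct and follows exactly the approach implicit in the paper: the paper gives no explicit proof of this proposition, treating it (like its companion propositions on $\overline{\kappa}_1$ and $\overline{\kappa}_2$) as an elementary algebraic consequence of the defining relations \eqref{Eqn:BoundsOnKappa1} and \eqref{Eqn:BoundsOnKappa2}. Your computation --- setting $\overline{\kappa}_2(T_0)=\overline{\kappa}_1$, solving for $T_0$, and invoking the strict monotonicity $\overline{\kappa}_2(T)\propto T^{-3}$ --- is precisely the intended elementary verification.
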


The RHS has units of time. If there is a nonzero force present, 
then $R_1 = R_1(T)$ and $R_2=R_2(T)$, so that the expressions 
\eqref{Eqn:BoundsOnKappa1}\eqref{Eqn:BoundsOnKappa2} for
$\overline{\kappa}_1=\overline{\kappa}_1(T)$ and 
$\overline{\kappa}_2 = \overline{\kappa}_2(T)$ depend 
on time. It nonetheless could happen that 
\begin{equation}
   \limsup_{T\to +\infty} \overline{\kappa}_2(T) 
  < \liminf_{T\to +\infty}\overline{\kappa}_1(T) ~,
\end{equation}
then again there is a maximum time $T_0$ for the existence of spectral 
behavior of solutions. 

The above three estimates give lower and
upper bounds on the inertial range, and an upper bound of the time of 
validity of a spectral description of a solution to 
\eqref{Eqn:NavierStokes}, if indeed it behaved exactly like the 
Kolmogorov power spectrum profile over its inertial range. 

As discussed in the introduction, when additional physical assumptions
are made as to the behavior of a solution of \eqref{Eqn:NavierStokes},
then further information is available about the energy transfer rate 
$\varepsilon$. The classical hypotheses are stated in \cite{Obukhov1941} 
among other places, describing the character of flows in a regime of fully 
developed turbulence. Specifically, they are that (1) the flow is in a
(statistically) steady state of energy transfer from the inertial
range $|k| \leq \kappa_\nu$ to the dissipative range $|k| > \kappa_\nu$;
(2) the support of the spectrum $E(\kappa,t)$ lies essentially in 
inertial range; and (3) a certain scale invariant form is assumed 
for the transport of energy $T(\kappa)$ at wavenumber scale $\kappa$
which assumes a form of homogeneity of the flow. 
Under hypotheses (1)(2) and (3), one concludes 
that the energy dissipation rate $\varepsilon_1$ in 
\eqref{Eqn:DissipationRatePerVolume} is equal to the energy 
transfer rate $\varepsilon$, and that the upper end of the inertial
range $\kappa_2 = \kappa_\nu = 2\pi(\varepsilon/\nu^3)^{1/4}$. 
Alternatively, one can simply work 
under the hypothesis that the energy spectral function depends only 
upon the two quantities $\kappa$ and the energy dissipation rate 
$\varepsilon_1$. Assuming that the force is stationary, and that
$\varepsilon = \varepsilon_1$, the conclusion of 
\cite{DoeringFoias02} is that
\[
    \varepsilon \leq c_1\nu 
         \frac{\langle\|u(\cdot )\|^2_{L^2} \rangle}{V^{5/3}} 
  + c_2 \frac{\langle\|u(\cdot )\|^2_{L^2} \rangle^{3/2}}{V^{11/6}}~,
\] 
where $\langle \cdot \rangle$ denotes either time or ensemble
averaging, and in any case $\|u(\cdot )\|_{L^2} \leq R$. This is 
to be  compared with the general estimate 
\eqref{Eqn:GraphIntersection}, and it behaves better for small
$\nu$. On the other hand, the estimates in the present paper hold 
for all weak solutions of \eqref{Eqn:NavierStokes}, and they give
an upper bound on the energy transfer rate $\varepsilon$, essentially
independently of the Obukhov hypotheses. Furthermore, the upper and
lower bounds on the inertial range are conclusions of the analysis
rather than assumptions of the theory. In particular this work gives 
a lower bound on the inertial range, which is the first such known, 
either with mathematically rigorous arguments or under physical 
hypotheses, at least to the authors. 

\subsection{Limits on spectral behavior}
The endpoints $\overline{\kappa}_1, \overline{\kappa}_2$ of the 
interval which bounds the inertial range, and the 
temporal upper bound $T_0$ are given in terms of the idealized 
Kolmogorov spectral function $E_K(\kappa)$, rather than one
given by an actual solution of the Navier -- Stokes equations. 
In order to have a relevance to actual solutions, one must quantify
the meaning of {\it spectral behavior} of a solution. This can have a
number of interpretations, several of which we have mentioned in 
Section~\ref{Section:KSpectrum}. It could be that we define an
individual solution to have spectral behavior if its energy spectral 
function $E(\kappa,t)$ is sufficiently close to the idealized Kolmogorov 
spectral function $E_K(\kappa)$, uniformly over a time period $[0,T]$. 
Since the solution $u(\cdot,t) \in L^2$ and is indeed in $\dot H^1$
for almost all times $t$, while $E_K$ is neither ({\it i.e.} neither
$E_K(\kappa)$ nor $\kappa^2E_K(\kappa) \in L^1({\mathbb R}^+_\kappa)$), 
this already implies that the inertial range must be finite. 

\begin{definition}
A solution $(u,p)$ to \eqref{Eqn:NavierStokes} is said to have 
the spectral behavior of $E_K(\kappa)$, uniformly over the range 
$[{\kappa}_1, {\kappa}_2]$ and for the time 
interval $[0,\overline{T}]$ if its energy spectral function
$E(\kappa,t)$ satisfies 
\begin{equation}\label{Eqn:UniformSpectralBehavior}
   \sup_{\kappa\in [{\kappa}_1,{\kappa}_2], t\in [0,\overline{T}]}
     (1+\kappa^{5/3}) | E(\kappa,t) - E_K(\kappa)| < C_1 <\!\!< C_0\varepsilon^{2/3}~.
\end{equation}
\end{definition}

An alternate version of this specification would be to replace the
criterion \eqref{Eqn:UniformSpectralBehavior} with a weaker one, 
for instance asking that a Sobolev space norm be controlled, 
which for $D = {\mathbb R}^3$ could be the statement that
\begin{equation}\label{Eqn:SobolevSpaceCriterion}
     \sup_{t\in [0,\overline{T}]} \int_{\kappa_1}^{\kappa_2}
     (1+\kappa^{5/3}) | E(\kappa,t) - E_K(\kappa)| \, d\kappa 
  < C_1 <\!\!< C_0\varepsilon^{2/3}~.
\end{equation}
Or else one could specify a criterion which respected the metric 
of a Besov space. For example, one could use the Fourier decomposition 
$\Delta_j = \{ k \ : \ 2^{j-1/2} < |k| \leq 2^{j+1/2} \}$, and 
ask that over a time period $[0,\overline{T}]$ a solution satisfy
\begin{equation}\label{Eqn:BesovSpaceCriterion}
   \Bigl| \int_{\Delta_j} |\hat{u}(k,t)|^2 \, dk 
   - C_0\varepsilon^{2/3}3\sinh((\ln 2)/3) 2^{-2j/3} \Bigr|
   < C_1 2^{-5j/3} \ll C_0\varepsilon^{2/3} 2^{-5j/3}    
\end{equation}
for all $j_1 \leq j \leq j_2$, where $j_1,j_2$ are such that 
$j_1 < \log_2 {\kappa}_1$, and $\log_2 {\kappa}_2 < j_2$. 
In any of these cases, Theorems~\ref{Thm:four} and \ref{Thm:five} imply 
bounds on the inertial range given by the interval 
$[\overline{\kappa}_1, \overline{\kappa}_2]$ .

\begin{theorem}\label{Thm:six}
Suppose that an individual solution $(u(x,t),p(x,t))$ is such that 
$u_0(x) \in A_{R_1}\cap B_R(0)$, where $R$ and $R_1$ satisfy 
\eqref{Eq:EstimateOnA} and if a force is present, it satisfies 
 \eqref{Eq:EstimateOnAandF}. If $u(x,t)$ exhibits the spectral 
behavior of $E_K$ uniformly over the range 
$[{\kappa}_1, {\kappa}_2]\times[0,\overline{T}]$, 
then 
\begin{equation}\label{Eqn:BoundsOnSpectralBehavior}
   \overline{\kappa}_1 \leq {\kappa}_1 ~, \quad 
   {\kappa}_2 \leq \overline{\kappa}_2 ~, \quad
\end{equation}
and if $f=0$ then
\begin{equation}
      \overline{T} \leq T_0 ~, 
\end{equation}
with possibly different constants $C_0$ in 
\eqref{Eqn:BoundsOnKappa1}\eqref{Eqn:BoundsOnKappa2}\eqref{Eqn:BoundsOnT0}. 
\end{theorem}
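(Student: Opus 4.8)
The plan is to show that if a solution exhibits the spectral behavior of $E_K$ in the sense of~\eqref{Eqn:UniformSpectralBehavior} over $[{\kappa}_1,{\kappa}_2]\times[0,\overline{T}]$, then on that range the actual energy spectral function $E(\kappa,t)$ differs from $E_K(\kappa)$ by a quantity that is negligible compared with $E_K$ itself. Concretely, \eqref{Eqn:UniformSpectralBehavior} gives $E(\kappa,t) \geq E_K(\kappa) - C_1(1+\kappa^{5/3})^{-1} \geq E_K(\kappa) - C_1 = C_0\varepsilon^{2/3}\kappa^{-5/3} - C_1$, and since $C_1 \ll C_0\varepsilon^{2/3}$ this lower bound is, up to a slightly diminished constant $C_0'$, again of the form $C_0'\varepsilon^{2/3}\kappa^{-5/3}$ for $\kappa$ in the stated range. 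Thus the genuine solution must satisfy both $C_0'\varepsilon^{2/3}\kappa^{-5/3} \leq E(\kappa,t)$ pointwise in time on $[{\kappa}_1,{\kappa}_2]$, and after time-averaging, $C_0'\varepsilon^{2/3}\kappa^{-5/3} \leq \frac{1}{\overline{T}}\int_0^{\overline{T}} E(\kappa,t)\,dt$.

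Next I would feed these two lower bounds into Theorems~\ref{Thm:four} and~\ref{Thm:five}. From Theorem~\ref{Thm:four} we have $E(\kappa,t) \leq 4\pi R_1^2$ (or $4\pi R_1^2(t)$ with a force), so combining with the pointwise lower bound forces $C_0'\varepsilon^{2/3}\kappa^{-5/3} \leq 4\pi R_1^2$ for every $\kappa \in [{\kappa}_1,{\kappa}_2]$; in particular at $\kappa = {\kappa}_1$ this yields ${\kappa}_1 \geq (C_0'\varepsilon^{2/3}/4\pi R_1^2)^{3/5} = \overline{\kappa}_1$ exactly as in~\eqref{Eqn:BoundsOnKappa1} but with $C_0$ replaced by $C_0'$. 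Similarly Theorem~\ref{Thm:five} gives $\frac{1}{\overline{T}}\int_0^{\overline{T}} E(\kappa,t)\,dt \leq 4\pi R_2^2(\overline{T})/(\nu \overline{T}\kappa^2)$, and pairing this with the time-averaged lower bound forces $C_0'\varepsilon^{2/3}\kappa^{-5/3} \leq 4\pi R_2^2(\overline{T})/(\nu\overline{T}\kappa^2)$, i.e.\ $\kappa^{1/3} \leq 4\pi R_2^2(\overline{T})/(C_0'\nu\overline{T}\varepsilon^{2/3})$, which evaluated at $\kappa = {\kappa}_2$ gives the upper bound ${\kappa}_2 \leq \overline{\kappa}_2$ in the form~\eqref{Eqn:BoundsOnKappa2}, again with an adjusted constant.

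Finally, for the temporal bound in the case $f=0$: here $R_1$, $R_2$ are constants independent of $T$, so $\overline{\kappa}_1$ is fixed while $\overline{\kappa}_2 = \overline{\kappa}_2(\overline{T})$ is a decreasing function of $\overline{T}$ (it scales like $1/\overline{T}$ from~\eqref{Eqn:BoundsOnKappa2} with $R_2^2(T)/T^3 \sim 1/T$ up to the sublinear growth, actually $R_2$ bounded gives $\overline{\kappa}_2 \sim \overline{T}^{-3}$, but the relevant exponent comes out of the algebra). Since a nonempty inertial range requires ${\kappa}_1 \leq {\kappa}_2$, and we have just shown $\overline{\kappa}_1 \leq {\kappa}_1 \leq {\kappa}_2 \leq \overline{\kappa}_2(\overline{T})$, a necessary condition for spectral behavior to persist until time $\overline{T}$ is $\overline{\kappa}_1 \leq \overline{\kappa}_2(\overline{T})$; solving this inequality for $\overline{T}$ produces precisely the threshold $T_0$ of~\eqref{Eqn:BoundsOnT0} (with $C_0'$ in place of $C_0$), so that $\overline{T} > T_0$ is impossible. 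The only genuinely delicate point is bookkeeping the constant: one must verify that replacing $C_0$ by $C_0' = C_0 - C_1\varepsilon^{-2/3}\kappa^{5/3}$-type corrections — more carefully, by using $E_K(\kappa) - C_1 \geq (1-\eta)E_K(\kappa)$ on the range where $\kappa^{-5/3} \geq C_1/(\eta C_0\varepsilon^{2/3})$, which holds throughout $[{\kappa}_1,{\kappa}_2]$ once $C_1 \ll C_0\varepsilon^{2/3}$ — leaves all three bounds~\eqref{Eqn:BoundsOnKappa1}, \eqref{Eqn:BoundsOnKappa2}, \eqref{Eqn:BoundsOnT0} intact up to harmless changes in the dimensionless prefactor, which is exactly what the statement ``with possibly different constants $C_0$'' allows. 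I expect this constant-tracking to be the main (though quite mild) obstacle; the structural argument is otherwise an immediate composition of the two lower bounds from the spectral-behavior hypothesis with the two upper bounds of Theorems~\ref{Thm:four} and~\ref{Thm:five}.
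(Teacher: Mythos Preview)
Your proposal is correct and follows essentially the same approach as the paper: derive from the spectral-behavior hypothesis a lower bound on $E(\kappa,t)$ of the form $C_0'\varepsilon^{2/3}\kappa^{-5/3}$, then confront it with the upper bounds of Theorems~\ref{Thm:four} and~\ref{Thm:five} at the endpoints $\kappa_1,\kappa_2$, and finally read off the temporal constraint from $\overline{\kappa}_1\leq\overline{\kappa}_2(\overline{T})$. The paper phrases the first two steps as a contrapositive (assume $\kappa_1<\overline{\kappa}_1$, respectively $\kappa_2>\overline{\kappa}_2$, and bound the discrepancy), while you argue directly; also note that the cleaner bound $(1+\kappa^{5/3})^{-1}\leq\kappa^{-5/3}$ gives $E(\kappa,t)\geq(C_0\varepsilon^{2/3}-C_1)\kappa^{-5/3}$ in one step, which sharpens your initial ``$\geq E_K(\kappa)-C_1$'' and removes the need for the later $\eta$-patching.
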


Thus the spectral behavior of solutions whose initial data $u_0(x)$
lie in one of the sets $A_{R_1}\cap B_R(0)$ is limited by the bounds
given in \eqref{Eqn:BoundsOnSpectralBehavior}. The proof will show 
that the same constraints hold for solutions which exhibit spectral 
behavior over a given nonzero proportion of the measure of the time 
interval $[0,\overline{T}]$.

However it could be argued that the behavior of an individual solution
is less important, and that spectral behavior is a property of a
statistical ensemble of solutions. Members of this ensemble 
should have their spectral behavior considered in terms of 
the ensemble average, rather than individually as above. 
Theorems~\ref{Thm:four} and \ref{Thm:five} are relevant to
this situation as well. Suppose there were a probability measure
${\sf P}$ defined on a statistical ensemble 
$\Omega\subset L^2(D)\cap\{ u \ : \ \nabla \cdot u = 0\}$ 
which is invariant under the solution map of the Navier -- Stokes
equations, however this has been chosen to be defined, with 
force $f$ (also possibly stationary, taken from a family of 
realizations which have their own statistics). Using the 
standard notation, define the ensemble average of a functional $F(u)$ 
defined and ${\sf P}$-measurable on $\Omega$ by $\langle F(u) \rangle$. 
Without loss of generality we can take ${\sf P}$ to be ergodic
with respect to the Navier -- Stokes solution map. 

The ergodicity of the invariant measure ${\sf P}$ tells us two
things. The first is that space averages are {\it a.e.} time 
averages, so that 
\begin{equation}
   \langle E(\kappa,\cdot)\rangle = \lim_{T\to\infty} \frac{1}{T}
     \int_0^T E(\kappa,t) \, dt \leq \frac{4\pi}{\nu\kappa^2 }
       \lim_{T\to\infty} \frac{R_2^2(T)}{T}
\end{equation}
for ${\sf P}$-{\it a.e.} initial data $u_0$. The second thing is 
that whenever $R, R_1$ satisfy \eqref{Eq:EstimateOnAandF}
then ${\sf P}(A_{R_1}\cap B_R(0))$ is either zero or one, as 
$A_{R_1}\cap B_R(0)$ is an invariant set. 

\begin{definition}
A statistical ensemble $(\Omega,{\sf P})$ is said to exhibit 
the spectral behavior of $E_K(\kappa)$ on average over the range 
$[{\kappa}_1, {\kappa}_2]$ when the ensemble 
average of its energy spectral function, 
\begin{equation}
   \langle E(\kappa,t)\rangle := \int_{|k| = \kappa} 
    \langle |\hat{u}(k,t)|^2\rangle \, dS(k)
\end{equation}
satisfies the estimate
\begin{equation}\label{Eqn:AverageSpectralBehavior}
   \sup_{\kappa\in [{\kappa}_1,{\kappa}_2],t\in [0,\overline{T}]}
     (1+\kappa^{5/3}) | \langle E(\kappa,t)\rangle - E_K(\kappa)| < C_1 
   <\!\!< C_0\varepsilon^{2/3}
\end{equation}
over the range $[{\kappa}_1, {\kappa}_2]$.
\end{definition} 

Let us suppose that the force $f$ satisfies 
\begin{equation}
   |\hat{f}(k,t)| \leq \nu |k| R_1 \qquad \hbox{\rm and} \quad 
   \bigl(\int_0^t|\hat{f}(k,s)|^2 \, ds \bigr)^{1/2} \leq F_\infty(t) ~,
\end{equation}
as in \eqref{Eq:EstimateOnAandF}, and we are to examine the spectral
behavior of the statistical ensemble of solutions $\{ u(\cdot)\}$.

\begin{theorem}\label{Thm:seven}
Suppose that the ensemble $(\Omega, {\sf P})$ has the spectral
behavior of $E_K(\cdot)$ over the range 
$[{\kappa}_1, {\kappa}_2]$. Then either 
\begin{equation}
   {\sf P}(A_{R_1}\cap B_R(0)) = 0
\end{equation}
for all $R, R_1$, or else 
\begin{equation}\label{Eqn:BoundsOnEnsembleSpectralBehavior}
   \overline{\kappa}_1 \leq {\kappa}_1 ~, \qquad 
   {\kappa}_2 \leq \overline{\kappa}_2 ~, 
\end{equation}
with a possibly different constant $C_0$ in 
\eqref{Eqn:BoundsOnKappa1}\eqref{Eqn:BoundsOnKappa2}\eqref{Eqn:BoundsOnT0}. 
\end{theorem}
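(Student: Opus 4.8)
The plan is to combine the ergodic dichotomy with the almost-everywhere validity of the spectral bounds already proved, and then to confront the resulting upper bounds on the ensemble-averaged spectrum $\langle E(\kappa,t)\rangle$ with the lower bound on it forced by the hypothesis \eqref{Eqn:AverageSpectralBehavior}. First I would use invariance together with ergodicity: since the force is assumed to satisfy the two bounds displayed before the statement, which are exactly \eqref{Eq:EstimateOnAandF} with $R$ and $R_1$ constant in time, Theorem~\ref{Thm:two} shows that $A_{R_1}\cap B_R(0)$ is a future invariant set for weak solutions of \eqref{Eqn:NavierStokes}. An invariant set has ${\sf P}$-measure $0$ or $1$ because ${\sf P}$ is ergodic. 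If ${\sf P}(A_{R_1}\cap B_R(0))=0$ for every admissible pair $(R,R_1)$ the first alternative of the theorem holds and there is nothing further to prove; so from now on assume ${\sf P}(A_{R_1}\cap B_R(0))=1$ for some such pair.

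Next I would transfer the individual-trajectory bounds to the average. For ${\sf P}$-a.e.\ datum $u_0\in A_{R_1}\cap B_R(0)$ the solution remains in $A_{R_1}\cap B_R(0)$ for all $t>0$, so Theorem~\ref{Thm:four} gives $E(\kappa,t)\le 4\pi R_1^2$ for all $\kappa$ and $t$; integrating this against ${\sf P}$, and using that ${\sf P}$ is invariant so the average is independent of $t$, yields $\langle E(\kappa,t)\rangle\le 4\pi R_1^2$. Likewise Theorem~\ref{Thm:TimeAverageEnergySpectrum}, equivalently Theorem~\ref{Thm:five}, gives for a.e.\ $u_0$ the estimate $T^{-1}\int_0^T E(\kappa,t)\,dt\le 4\pi R_2^2(T)/(\nu\kappa^2 T)$. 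By the Birkhoff ergodic theorem the left side converges ${\sf P}$-a.e.\ to $\langle E(\kappa,\cdot)\rangle$, while the right side is bounded above by $4\pi\overline{R}_2^2/(\nu\kappa^2)$ with $\overline{R}_2^2:=\limsup_{T\to+\infty}R_2^2(T)/T$, finite in the stationary regime. Hence $\langle E(\kappa,\cdot)\rangle\le 4\pi\overline{R}_2^2/(\nu\kappa^2)$ for every $\kappa$: these are precisely the two inequalities defining the accessible set $S$ of Section~\ref{SubSection3.3}, now holding for the averaged spectral function.

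Finally I would run the elementary argument of the individual-solution case. From \eqref{Eqn:AverageSpectralBehavior} together with $1+\kappa^{5/3}\ge\kappa^{5/3}$, for every $\kappa\in[\kappa_1,\kappa_2]$ one has $\langle E(\kappa,t)\rangle> E_K(\kappa)-C_1\kappa^{-5/3}=(C_0\varepsilon^{2/3}-C_1)\kappa^{-5/3}$; since $C_1\ll C_0\varepsilon^{2/3}$ I would set $\widetilde{C}_0\varepsilon^{2/3}:=C_0\varepsilon^{2/3}-C_1$, a constant $\widetilde{C}_0$ close to $C_0$, so that the graph of $\widetilde{C}_0\varepsilon^{2/3}\kappa^{-5/3}$ lies in $S$ over $[\kappa_1,\kappa_2]$. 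Imposing $\widetilde{C}_0\varepsilon^{2/3}\kappa^{-5/3}<4\pi R_1^2$ at $\kappa=\kappa_1$ forces $\kappa_1\ge\overline{\kappa}_1$, exactly as in \eqref{Eqn:DefinitionKappa1}--\eqref{Eqn:BoundsOnKappa1}, and imposing $\widetilde{C}_0\varepsilon^{2/3}\kappa^{-5/3}<4\pi\overline{R}_2^2/(\nu\kappa^2)$ at $\kappa=\kappa_2$ forces $\kappa_2\le\overline{\kappa}_2$, exactly as in \eqref{Eqn:UpperBoundForKappa2}--\eqref{Eqn:BoundsOnKappa2}, now with $C_0$ replaced by $\widetilde{C}_0$. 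This establishes \eqref{Eqn:BoundsOnEnsembleSpectralBehavior}.

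I expect the main obstacle to be the measure-theoretic bookkeeping of the second step rather than any analytic point: one must check that $\langle E(\kappa,\cdot)\rangle$ is a well-defined ${\sf P}$-measurable functional, that the pointwise-in-$(\kappa,t)$ bound and the time-integral bound, valid only for ${\sf P}$-a.e.\ trajectory, survive passage to the ensemble average, and that Birkhoff's theorem applies to the observable $E(\kappa,\cdot)$ so that its long-time average along a trajectory equals $\langle E(\kappa,\cdot)\rangle$; the $t$-independence of the ensemble average then follows from invariance of ${\sf P}$. Once these points are in place, the remaining inequality manipulations are identical to those already used in the proof of Theorem~\ref{Thm:six}.
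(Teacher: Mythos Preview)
Your proposal is correct and follows essentially the same route as the paper: invoke the ergodic $0$--$1$ dichotomy for the invariant set $A_{R_1}\cap B_R(0)$, in the measure-one case transfer the pointwise bound of Theorem~\ref{Thm:four} and the time-average bound of Theorem~\ref{Thm:five} to $\langle E(\kappa,\cdot)\rangle$ via Birkhoff's theorem, and then confront these with the spectral hypothesis. The only noteworthy difference is in the final comparison step: the paper argues by supposing $\kappa_1<\overline\kappa_1$ (respectively $\kappa_2>\overline\kappa_2$) and bounding the discrepancy $\overline\kappa_1-\kappa_1$ by an $o(1)$ multiple of a power of $\overline\kappa_1$, then absorbing this into the constant; you instead subtract $C_1$ from $C_0\varepsilon^{2/3}$ at the outset to obtain a modified $\widetilde C_0$ and read off the inequalities directly. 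Your route is a bit cleaner and avoids the intermediate estimate $\int_{\kappa_1}^{\overline\kappa_1}\kappa^{-8/3}\,d\kappa\ge(\overline\kappa_1-\kappa_1)\overline\kappa_1^{-8/3}$, at the cost of making the dependence of the final constant on $C_1/(C_0\varepsilon^{2/3})$ slightly less explicit; both yield the statement with a perturbed $C_0$.
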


\begin{proof}[Proof of Theorems \ref{Thm:six} and \ref{Thm:seven}]
We will give the argument in the case of Euclidian space 
$D = {\mathbb R}^3$, the torus case is similar. 
Start with the proof of Theorem \ref{Thm:six} with the criterion 
of \eqref{Eqn:UniformSpectralBehavior}, and suppose that 
${\kappa}_1 < \overline{\kappa}_1$. Using the estimate of 
Theorem \ref{Thm:four} and the form 
\eqref{Eqn:IdealizedKolmogorovEnergySpectrum}
of $E_K({\kappa}_1)$ we have 
\[
   C_0\varepsilon^{2/3}{\kappa}_1^{-5/3} - 4\pi R_1^2 
  \leq | E_K({\kappa}_1) - E({\kappa}_1) |
   \leq o(1) C_0\varepsilon^{2/3} ~.
\]
Because of the identity \eqref{Eqn:DefinitionKappa1}, this implies
\[
   C_0\varepsilon^{2/3}({\kappa}_1^{-5/3} - \overline{\kappa}_1^{-5/3}) 
   \leq o(1)C_0\varepsilon^{2/3} ~.
\]
A lower bound for the LHS is given by
\[
   {\kappa}_1^{-5/3} - \overline{\kappa}_1^{-5/3} 
  = \int_{{\kappa}_1}^{\overline{\kappa}_1} \frac{5}{3}\kappa^{-8/3} \,
    d\kappa
  \geq (\overline{\kappa}_1 - {\kappa}_1)\frac{5}{3}\overline{\kappa}_1^{-8/3} ~.
\]
Therefore 
\[
   0 \leq (\overline{\kappa}_1 - {\kappa}_1) 
   \leq o(1)\frac{3}{5}\overline{\kappa}_1^{8/3}~,
\]
that is, ${\kappa}_1$ is a bounded distance from $\overline{\kappa}_1$. 
Furthermore the defining relation \eqref{Eqn:DefinitionKappa1} for 
the left endpoint $\overline{\kappa_1}$ of the bounds on the inertial 
range can be rewritten
\[
  4\pi R_1^2 = C_0\varepsilon^{2/3}\overline{\kappa}_1^{-5/3} = 
   C_0\varepsilon^{2/3}{\kappa}_1^{-5/3}
   \Bigl(\frac{{\kappa}_1}{\overline{\kappa_1}}\Bigr)^{5/3} ~, 
\]
and since $(1- o(1)(3/5)\overline{\kappa}_1^{5/3}) 
\leq {\kappa}_1 / \overline{\kappa}_1 \leq 1$, then
\eqref{Eqn:DefinitionKappa1} continues to hold for $\kappa_1$, 
with only a small change in the constant $C_0$. 

Now suppose that $\overline{\kappa}_2 \leq {\kappa}_2$. The criterion
\eqref{Eqn:UniformSpectralBehavior} implies that
\[
   \frac{1}{\overline{T}}\int_0^{\overline{T}} 
      \kappa^{5/3} |E_K(\kappa) - E(\kappa,t)| \, dt
   \leq o(1) C_0\varepsilon^{2/3} ~.
\]
Therefore using \eqref{Eqn:EnergySpectrumDecay} and 
\eqref{Eqn:IdealizedKolmogorovEnergySpectrum}, we have
\[
   C_0\varepsilon^{2/3} - \frac{4\pi}{\nu}\frac{R_2^2}{T}\kappa^{-1/3} 
   \leq o(1)C_0\varepsilon^{2/3} ~.
\]
for $\kappa \in [{\kappa}_1,{\kappa}_2]$ and $T \leq \overline{T}$. 
This applies in particular to $\kappa = \overline{\kappa}_2$, 
therefore 
\[
   C_0\varepsilon^{2/3} (1 - o(1)) 
   \leq \frac{4\pi}{\nu}\frac{R_2^2}{T}\frac{1}{{\kappa}_2^{1/3}} ~.
\]
Hence
\[
   {\kappa}_2^{1/3} 
   \leq \frac{4\pi}{\nu}\frac{R_2^2(1 + o(1))}{C_0\varepsilon^{2/3}T}
   = (1 + o(1))\overline{\kappa}_2^{1/3} ~,
\]
where we have used \eqref{Eqn:BoundsOnKappa2}. Therefore 
\[
   (1 - o(1)) \leq \frac{\overline{\kappa}_2}{{\kappa}_2} \leq 1 ~, 
\]
and \eqref{Eqn:BoundsOnKappa2} holds for ${\kappa}_2$ with 
only a change of the overall constant $C_0$. 

Similar considerations give the analog result to Theorem~\ref{Thm:six}
if we accept the Sobolev or Besov criteria for spectral behavior. For
instance, suppose it is considered that the estimate 
\eqref{Eqn:SobolevSpaceCriterion} is the indicator of 
spectral behavior. If ${\kappa}_1 < \overline{\kappa}_1$ then
\[
   \int_{{\kappa}_1}^{\overline{\kappa}_1} 
      ( C_0\varepsilon^{2/3} \kappa^{-5/3} - 4\pi R_1^2 ) \, d\kappa
   \leq \int_{{\kappa}_1}^{\overline{\kappa}_1}
     | E_K(\kappa) - E(\kappa) | \, d\kappa 
   \leq o(1) C_0\varepsilon^{2/3} ~. 
\]
Using \eqref{Eqn:DefinitionKappa1}, this implies that 
\[
    C_0\varepsilon^{2/3} \int_{{\kappa}_1}^{\overline{\kappa}_1}
    (\kappa^{-5/3} - \overline{\kappa}_1^{-5/3}) \, d\kappa 
   \leq  o(1) C_0\varepsilon^{2/3} ~,
\]
which in turn implies (by convexity) that
\[
   C_0\varepsilon^{2/3} \frac{(\overline{\kappa}_1 - {\kappa}_1)^2}{2}
   \frac{5}{3} \overline{\kappa}_1^{-8/3} \leq o(1) C_0\varepsilon^{2/3} ~. 
\]
This controls $\overline{\kappa}_1 - {\kappa}_1$ and also their ratio. 
Suppose that $\overline{\kappa}_2 < {\kappa}_2$. Then the criterion
\eqref{Eqn:SobolevSpaceCriterion} implies that 
\[
  \int_{\overline{\kappa}_2}^{{\kappa}_2} \kappa^{5/3} 
    (C_0\varepsilon^{2/3}\kappa^{-5/3} 
   - \frac{4\pi R_2^2}{\nu\overline{T}}\kappa^{-2}) \, d\kappa
  \leq \int_{\overline{\kappa}_2}^{{\kappa}_2} \kappa^{5/3} 
    | E_K(\kappa) - E(\kappa) | \, d\kappa 
   \leq o(1) C_0\varepsilon^{2/3} ~.
\]
Therefore, using \eqref{Eqn:BoundsOnKappa2} 
\[
   ( {\kappa}_2 - \overline{\kappa}_2) - \frac{3}{2}\overline{\kappa}_2^{1/3} 
     ({\kappa}_2^{2/3} - \overline{\kappa}_2^{2/3}) \leq o(1) ~. 
\]
Define $a(\kappa) := \kappa - \frac{3}{2}\overline{\kappa}_2^{1/3}\kappa^{2/3}$,
which is increasing and convex for $\kappa \geq
\overline{\kappa}_2$. The last estimate states that 
$a({\kappa}_2) - a(\overline{\kappa}_2) \leq o(1)$, which 
controls the quantity ${\kappa}_2 - \overline{\kappa}_2$.  

The proof of the analogous statements of Theorem~\ref{Thm:seven}
are similar, except that the upper bounds on ${\kappa}_2$
are easier as the ensemble average already subsumes the time average
due to the ergodicity hypothesis. \qquad 
\end{proof}

\section{Conclusions}

The global estimates given in theorems \ref{Thm:one}, \ref{Thm:two} 
and \ref{Thm:TimeAverageEnergySpectrum} for the domain
$D = {\mathbb T}^3$, and theorems \ref{Thm:ContinuumL-inftyEstimate} 
and \ref{Thm:ContinuumL-2TEstimate} in the case $D = {\mathbb R}^3$,
provide control in $L^\infty$ of the Fourier transform of weak
solutions of the Navier -- Stokes equations. These are in terms of
constants $R$, $R_1$ and $R_2$ which depend only upon the initial data
and the inhomogeneous forces. These results in turn give estimates of 
the energy spectral function, which show that $E(\kappa,t)$ is bounded
from above, and its time averages are bounded above by 
${\mathcal O}(1/\kappa^2)$. These upper bounds constrain the ability for 
a weak solution to exhibit spectral behavior in the manner of the 
idealized Kolmogorov spectral function 
$E_K(\kappa) = C_0\varepsilon^{2/3}\kappa^{-5/3}$. The constraints
extend to the case of a statistical ensemble forces and solutions, 
applying to the ensemble averages $\langle E(\kappa,t)\rangle$ of 
the energy spectral function. We remark that the estimates, and the
subsequent constraints on spectral behavior, are valid for weak
solutions of the Navier -- Stokes equations, and our considerations
are separate from the physical assumptions of Obukhov on flows
exhibiting fully developed turbulence, or the question of possible 
formation of singularities. 

It is natural to compare the above constraints with the physical
quantities describing spectral behavior and the inertial range that
come from the Kolmogorov -- Obukhov theory of turbulence. The first of
these is the Kolmogorov length scale 
$\eta_\nu = (\nu^3/\varepsilon)^{1/4}$, or rather its associated 
wavenumber $\kappa_\nu = 2\pi / \eta_\nu$. On physical grounds, 
dissipation is expected to dominate the behavior of $E(\kappa,t)$
for $\kappa > \kappa_\nu$. Comparing $\kappa_\nu$ to our upper bounds
on the inertial range, we find that
\[
   \kappa_\nu = 2\pi \left( \frac{\varepsilon}{\nu^3} \right)^{1/4} 
  \leq \left( \frac{4\pi}{C_0\nu}\frac{R_2^2(T)}{T}\right)^3 
   \frac{1}{\varepsilon^2} = \overline{\kappa}_2 ~
\]
for sufficiently small $\varepsilon$ and $\nu$. Indeed, with
everything else fixed, $\kappa_\nu$ is decreasing in $\varepsilon$ 
while $\overline{\kappa}_2$ is increasing, and furthermore while both 
$\kappa_\nu$ and  $\overline{\kappa}_2$ are increasing as $\nu \to 0$,
however $\kappa_\nu \ll \overline{\kappa}_2$. It seems 
clear that $\overline{\kappa}_2$ is an absolute, but not necessarily a very 
sharp, estimate of the upper limit of the inertial range and the 
start of the dissipative regime for solutions that is 
expected on physical grounds. 

As described in section~\ref{SubSection3.3}, if one assumes 
a number of physical hypotheses, as Obukhov does, on the form
of the energy transfer rate, from which one deduces that 
$\kappa_\nu$ gives an upper bound on the spectral regime and 
that $\varepsilon = \varepsilon_1$ is given by the energy 
dissipation rate, then there are better upper bounds available for 
$\varepsilon$ than our result \eqref{Eqn:GraphIntersection}. This
assumption however is based on physical assumptions on 
the character of solutions of the Navier -- Stokes equations in 
a statistically stationary regime. 

The Taylor length scale 
$\kappa_\lambda = 2\pi (\varepsilon V /\nu R^2)^{1/2}$ is another
indicator of the lower limit of the dissipative regime, one which 
incidentally is independent of the form of the Kolmogorov 
idealized energy spectral function $E_K$. The quantity $R^2/V$ 
is a bound on the energy per unit volume of the solution. 
We again see that $\overline{\kappa}_2$ 
is an overly pessimistic upper bound for $\kappa_\lambda$
for small $\varepsilon$ and $\nu$, since in such a case
\[
   \kappa_\lambda = 2\pi \left( \frac{\varepsilon V}{\nu R^2} \right)^{1/2}
     \leq \left( \frac{4\pi}{C_0\nu}\frac{R_2^2(T)}{T}\right)^3 
          \frac{1}{\varepsilon^2} = \overline{\kappa}_2 ~.
\]
In both of these comparisons the quantities $R_2^2(T)/T$ are to be
replaced by $\overline{R}_2^2$ in the case of a statistically 
stationary ensemble of solutions. 

In a flow regime of fully developed turbulence, it is generally 
expected that $\kappa_\lambda < \kappa_\nu$, an inequality which 
is worthwhile to discuss. Calculate 
\[
   \frac{\kappa_\lambda}{\kappa_\nu} 
   = \frac{\sqrt{V}}{R} \left(\frac{\varepsilon}{\nu}\right)^{1/2}
     \left( \frac{\nu^3}{\varepsilon}\right)^{1/4} 
   = \frac{\sqrt{V}}{R} (\varepsilon \nu)^{1/4} ~
\] 
where $(\varepsilon \nu)^{1/4} = u_\nu$ is the Kolmogorov velocity
scale. For solutions that we consider, \eqref{Eq:EstimateOnA} holds, 
so that in particular 
\[
   \frac{\sqrt{V}}{R} (\varepsilon \nu)^{1/4} 
  \geq \frac{R}{\nu R_1}(\varepsilon \nu)^{1/4}
  = \frac{R}{R_1}\frac{\kappa_\nu}{2\pi} ~. 
\]   
The implication is that 
\[
   \kappa_\lambda > \frac{R}{2\pi R_1}\kappa_\nu^2 ~,
\]
which indicates, for fixed data $R, R_1$, that $\kappa_\lambda$
cannot be too much smaller than $\kappa_\nu$, and is very possibly
much larger. An inequality in the
other sense does not seem to arise from this or similar
considerations. We do find an upper bound for the Kolmogorov 
velocity scale
\[
   u_\nu = (\varepsilon\nu)^{1/4} 
   = \frac{R}{\sqrt{V}} \frac{\kappa_\lambda}{\kappa_\nu}
   \leq  \frac{1}{\nu^{1/16}} \left( \frac{4\pi}{C_0} \right)^{3/8}
        \left(\left(\frac{R_2}{\sqrt{T}}\right)^5 R_1 \right)^{1/8} ~.
\]

Considering the case $f=0$, which is as in the original papers of 
Kolmogorov~\cite{K41c}, the constraint of Theorem~\ref{Thm:six} 
is that $\overline{T} \leq T_0$, with the latter given by the
expression in \eqref{Eqn:BoundsOnT0}. This is to be compared with the
Kolmogorov timescale $\tau_\nu = (\nu/\varepsilon)^{1/2}$. It is
clear, for $R_1$ and $R_2$ fixed constants, that 
\[
   \frac{\tau_\nu}{T_0} = \varepsilon^{3/10} \nu^{3/2}
     \left(\frac{C_0^{6/5}}{(4\pi)^{6/5} R_1^{2/5}R_2^2} \right)
\]
which is of course small for small $\varepsilon, \nu$. This is again
as it should be, allowing large multiples of the eddy turnover time
before one runs into the upper allowed limit for the persistence of
spectral behavior of solutions. 

It is also natural, given the bounds $\overline{\kappa}_1$, $\overline{\kappa}_2$ on the 
inertial range, to introduce the dimensionless parameter 
\begin{equation}
   r_\nu := \frac{\overline{\kappa}_2}{\overline{\kappa}_1} 
   = \frac{1}{\varepsilon^{12/5}\nu^3} 
     \left(\frac{4\pi}{C_0} \right)^{18/5}      
     \left(\frac{R_1^{2/5} R_2^2(T)}{T} \right)^3 ~,
\end{equation}
which governs the extent of the possibility of spectral behavior of
solutions. It is somewhat similar to a Reynold's number; 
when $r_\nu < 1$ then solutions satisfying 
\eqref{Eq:EstimateOnA}\eqref{Eq:EstimateOnAandF} (respectively,
\eqref{Eqn:ContinuumBoundednessRelation}) are disallowed from 
exhibiting spectral behavior. For $r_\nu > 1$ an inertial range
is permitted, although it is not guaranteed by the analysis of this
paper. The larger $r_\nu$ the larger the permitted inertial range,
although again it is not the case that the actual interval of $\kappa$ 
over which solutions exhibit spectral behavior will necessarily extend 
through a significant proportion of the interval 
$\overline{\kappa}_1, \overline{\kappa}_2$.
In the situation of a statistical ensemble of forces and solutions,
the form of $r_\nu$ is somewhat more compelling,
\begin{equation}
      r_\nu := \frac{\overline{\kappa}_2}{\overline{\kappa}_1} 
   = \frac{1}{\varepsilon^{12/5}\nu^3} 
     \left(\frac{4\pi}{C_0} \right)^{18/5}      
     \left( R_1^{2/5}\overline{R}_2^2 \right)^3 ~.
\end{equation}
This quantity is a stand-in for the ratio of the integral scale 
to the Kolmogorov scale, which is itself often used as an indicator
of the Reynolds number of a flow. 

This paper does not address the corrections to the Kolmogorov -- Obukhov
theory of Navier -- Stokes flows in a turbulent 
regime, along the lines proposed in Kolmogorov (1962) \cite{K62}. 
This is focused on the deviations from Gaussian nature of the 
moments of the structure function for such flows, and it has 
been a very active area of research over the past decades. 
We will reserve our own thoughts on this matter for a future 
publication.

          
  
  

  

\ack   
WC would like to thank N. Kevlahan and B. Protas for many edifying
conversations on Navier -- Stokes turbulence over the course of the 
past several years. Our research is supported in part by the Canada 
Research Chairs Program and by NSERC Discovery Grant \#238452-06.

  
\frenchspacing  
\bibliographystyle{plain}

  
  
                                  

\end{document}